\documentclass[11pt]{article}

\usepackage[margin=1in]{geometry}
\usepackage{amsmath, amssymb, amsthm}
\usepackage{graphicx}
\usepackage{tabularx}
\usepackage{tikz}
\usetikzlibrary{calc,decorations.pathreplacing}
\usepackage{booktabs}
\usepackage{multirow}
\usepackage{hyperref}
\usepackage{natbib}
\usepackage{algorithm}
\usepackage{enumitem}
\usepackage{subcaption}
\usepackage{times}
\usepackage{appendix}
\usepackage{placeins}

\usetikzlibrary{arrows.meta, shadows.blur, calc}
\usepackage{xcolor}
\definecolor{coldblue}{RGB}{214,234,248}
\definecolor{coldframe}{RGB}{52,120,180}
\definecolor{hotred}{RGB}{248,215,210}
\definecolor{hotframe}{RGB}{185,70,60}
\definecolor{neutralbg}{RGB}{240,245,240}
\definecolor{neutframe}{RGB}{90,135,100}
\definecolor{arrowcol}{RGB}{70,90,110}
\definecolor{ceilcol}{RGB}{25,95,165}
\definecolor{floorcol}{RGB}{180,45,35}
\definecolor{outdoororange}{RGB}{220,100,20}
\definecolor{indoorcyan}{RGB}{20,150,180}

\usepackage{nomencl}
\usepackage{multicol}
\makenomenclature

\renewcommand{\nompreamble}{\begin{multicols}{2}}
	\renewcommand{\nompostamble}{\end{multicols}}

\DeclareMathOperator*{\argmin}{arg\,min}

\newcommand{\tsum}{\sum\limits}

\usepackage{pgffor}
\usepackage{setspace}

\title{Robust tests for log-normal lifetimes under cyclic-stress accelerated tests and interval monitoring}
\author{
	María Jaenada\\
	\small Department of Statistics and O.R., UNED, Madrid, Spain
    \and Leandro Pardo \\
	\small Department of Statistics and O.R., Complutense University of Madrid, Spain
	\and
    Kiran Prajapat\thanks{Corresponding author. Email: kiranprajapat92@gmail.com, kiran.prajapat@newcastle.ac.uk (Kiran Prajapat)}\\
    \small School of Mathematics, Statistics and Physics, Newcastle University, Newcastle upon Tyne, UK
}\date{}

\begin{document}
	
	\maketitle
    \begin{abstract}
        Some products are, by nature, designed to operate under stress conditions that repeatedly cycle between two levels, such as batteries undergoing repeated charge and discharge, or components exposed to recurring thermal or pressure cycles. When such products are tested under accelerated conditions, the applied stress is naturally elevated in the same cyclic manner rather than held constant or increased in a single step. Cyclic-stress accelerated life testing (CyALT) reflects this by alternating the applied stress between specified levels. Units are often inspected only at fixed times, so the data consist of interval failure counts rather than exact lifetimes. Hypothesis testing plays a crucial role in this setting. It can assess whether the applied stress significantly affects lifetime, or validate the parameter values used to design the experiment. Classical tests rely on the maximum likelihood estimator, which is highly sensitive to contamination. Since interval probabilities are estimated jointly across stress conditions, a single anomalous failure count can distort the entire parameter vector and severely affect the test's significance level and power. This paper develops Wald-type and Rao-type test statistics for log-normal CyALT lifetimes under interval monitoring, based on the weighted minimum density power divergence estimator. Both tests are derived for simple and composite null hypotheses. A Monte Carlo study shows that classical tests lose control of the significance level rapidly under contamination, while the robust tests stay close to nominal with only a modest loss of power under clean data. An application to air-conditioner evaporator lifetime data further illustrates the proposed tests.
    \end{abstract}
    \providecommand{\keywords}[1]{\textbf{Keywords:} #1}
    \keywords{Accelerated life testing; Cumulative exposure model; Inverse power law; Cyclic-stress loading; Weighted minimum density power divergence estimator; Hypothesis testing; Wald and Rao tests; Robust estimation.}\\[2pt]
\textbf{MSC 2020:} 62F03, 62F35, 62F12, 62N05, 90B25
	
	\newtheorem{theorem}{Theorem}
    \newtheorem{acknowledgement}[theorem]{Acknowledgement}
    \newtheorem{axiom}[theorem]{Axiom}
    \newtheorem{case}[theorem]{Case}
    \newtheorem{claim}[theorem]{Claim}
    \newtheorem{conclusion}[theorem]{Conclusion}
    \newtheorem{condition}[theorem]{Condition}
    \newtheorem{conjecture}[theorem]{Conjecture}
    \newtheorem{corollary}[theorem]{Corollary}
    \newtheorem{criterion}[theorem]{Criterion}
    \newtheorem{definition}[theorem]{Definition}
    \newtheorem{example}[theorem]{Example}
    \newtheorem{exercise}[theorem]{Exercise}
    \newtheorem{lemma}[theorem]{Lemma}
    \newtheorem{notation}[theorem]{Notation}
    \newtheorem{problem}[theorem]{Problem}
    \newtheorem{proposition}[theorem]{Proposition}
    \newtheorem{remark}[theorem]{Remark}
    \newtheorem{solution}[theorem]{Solution}
    \newtheorem{summary}[theorem]{Summary}
	
	\section{Introduction}

Reliability assessment plays a central role in the design, production, and maintenance of highly reliable products and systems. 
In many modern applications, products are designed to operate for long periods under normal conditions, and only a small number of failures may be observed within the time available for experimentation. This makes the direct inference of lifetime characteristics difficult or prohibitively expensive. Accelerated life testing (ALT) addresses this problem by exposing experimental units to stress levels higher than those encountered under normal operating conditions. The resulting failure information is then combined with a suitable lifetime-stress relationship to infer product reliability under normal-use conditions. 
Since the seminal work of \cite{nelson1990} and \cite{meeker1998} , a substantial body of literature has been developed on a wide variety of ALT designs.
Moreover, hypothesis testing is an essential component of ALT, as practitioners frequently need to determine whether particular model assumptions are supported by the observed data. For example, it may be necessary to assess whether the applied stress has a significant effect on lifetime or whether a simpler submodel provides an adequate description of the failure mechanism. Hypothesis testing is also important at the experimental design stage, since optimal ALT designs rely on parameter values estimated in advance using some preliminary data, and unreliable estimates can lead to a poorly chosen design.
In the context of step-stress testing, \cite{wang2009testing} proposed a test statistic for assessing the validity of the exponential lifetime distribution and the cumulative exposure assumption.
\cite{zhu2020exact} developed exact inference based on likelihood-ratio test for a simple step-stress ALT. 
\cite{Kateri2024scale} extended the likelihood-ratio testing framework of \cite{zhu2020exact} to multilevel step-stress models and propose scale-invariant test statistics with free parameter distribution for testing a specified value of the stress-effect parameter.
\cite{balakrishnan2025robusttesting, balakrishnan2026robusttesting} proposed robust divergence-based test statistics for step-stress tests.

Several stress-loading schemes have been considered in the ALT literature. Under constant-stress testing, each unit is subjected to a fixed stress level throughout the experiment. In step-stress testing, the stress level is increased at one or more predetermined times. Although these schemes are appropriate in many applications, some products are naturally exposed to stresses that vary repeatedly over time. Examples arise in devices affected by alternating thermal conditions, repeated voltage changes, pressure cycles, or recurrent variations between operating and standby modes. In cyclic-stress ALT (CyALT) the applied stress alternates between specified levels according to a predetermined schedule, thereby producing repeated periods of relatively low and high stress. \cite{kim2021optimal} first developed optimal CyALT plans under log-normal lifetimes, Type-I censoring, and the cumulative exposure model, determining the common floor level and unit allocation proportions to minimize the asymptotic variance of the MLE of a lifetime quantile at the use condition. \cite{kim2023optimal} later extended this work to proposing optimal and compromise CyALT plans with practical sample size guidelines under the complete data. More recently, \cite{zhang2024reliability} studied reliability analysis under cyclic ALTs for the log-location-scale family of distributions under censoring, and \cite{zhang2025reliability} considered reliability estimation under cyclic ALTs based on a scale family of distributions. 

Most statistical procedures for ALTs are developed under the assumption that failure times are observed exactly. Continuous monitoring, however, may be costly, technically difficult, or incompatible with the experimental protocol. In practice, these units are often inspected only at a finite set of predetermined times. Consequently, the exact failure time of a unit is not observed; it is only known that the failure occurred between two consecutive inspections. Units that remain operational at the final inspection are right-censored. The resulting data therefore consist of the numbers of failures recorded in the successive inspection intervals, together with the number of surviving units at the end of the experiment. Such interval-monitored data arise naturally when inspections require specialized equipment, manual intervention or substantial operational costs.
To mention a few, \cite{ding2013design} discussed  optimal ALT planning with random removals. 
Inference methods for step-stress ALTs with interval-censoring was studied in \cite{gouno2001}.
Optimal design under step-stress models was analyzed in \cite{bobotas2019optimal}. 
Bayesian inference for interval-inspected setups was developed in \cite{kohl2019bayesian}.
More recently, there has been growing interest in robust inference for interval-monitored step-stress ALTs, particularly following the contributions of \cite{balakrishnan2026robustinterval, baghel2024robust}. 
For CyALT models, robust point estimation  under interval monitoring and log-normal lifetimes was studied in \cite{jaenada2026robust}.
The interval monitoring setup introduces important statistical challenges. In particular, the likelihood must be formulated in terms of interval failure probabilities rather than exact lifetime densities. Moreover, under a cyclic-stress scheme, these probabilities depend on the entire stress history experienced by the units. Therefore, a suitable model must simultaneously account for the cumulative effect of the applied stresses, the repeated changes in the stress level, and the incomplete information caused by interval monitoring. 
In an interval-monitored cyclic accelerated life test, the testing problem is particularly challenging, as statistical tests must account simultaneously for the cyclic nature of the stress profile and the incomplete information caused by interval monitoring. 

In interval-monitored experiments, an atypical unit does not appear as an isolated extreme failure time but may generate an unexpectedly large failure count in a particular inspection interval. Such anomalous counts may result from recording errors, inspection inaccuracies, uncontrolled changes in the experimental environment, heterogeneity among the tested units, or the presence of a small subpopulation following a different failure mechanism. Since interval probabilities are estimated jointly, contamination in one or more intervals can affect the entire parameter vector. 

Classical inferential procedures, including hypothesis tests on the model parameters, are commonly based on the maximum likelihood estimator (MLE). Under standard regularity conditions, likelihood-based methods generally perform well when the assumed probabilistic model is correctly specified for all units under study. However, they may be highly sensitive to model contamination, and even a small number of outlying observations can substantially affect the resulting estimates and inferential conclusions.
This sensitivity is inherited by the likelihood-based test statistics, resulting in severe distortions in the  significance level and power of the tests, and thus leading to incorrect scientific or engineering conclusions.
Robust hypothesis-testing procedures are therefore needed to obtain inferential conclusions that remain reliable in the presence of moderate contamination \cite{basu2016generalized, basu2022robust}. 

\cite{jaenada2026robust} derived robust point estimation and confidence intervals for interval-monitored CyALTs based on the Density Power Divergence (DPD).  In this paper, we generalize the classical Wald and Rao-score tests under the DPD framework, deriving robust test statistics whose significance levels and power are not excessively affected by a small proportion of anomalous observations. 
The effect of the successive stress exposures is described through a cumulative exposure model. Lifetimes at a constant stress level are assumed to follow a log-normal distribution, with a common shape parameter and a stress-dependent scale parameter. A log-linear relationship is used to connect the scale parameter with the applied stress, allowing the lifetime distribution at normal operating conditions to be estimated from observations obtained under accelerated conditions. Since the units are inspected only at predetermined times, the observed interval counts follow a multinomial distribution whose cell probabilities are determined by the CyALT lifetime model.
The finite-sample performance of the proposed Wald-type and Rao-type test statistics is examined through an extensive Monte Carlo simulation study.  Both uncontaminated and contaminated scenarios are investigated to assess the trade-off between efficiency and robustness. The numerical results show how the DPD tuning parameter influences the estimators and illustrate the advantages of robust procedures when the observed interval counts depart from the assumed model.

The remainder of the paper is organized as follows. Section~\ref{sec:model} introduces the CyALT model under log-normal lifetimes and interval monitoring, and presents the weighted minimum density power divergence estimator together with its asymptotic distribution. Section~\ref{sec:robust_tests} formulates the simple and composite testing problems of interest throughout the paper. Sections~\ref{sec:wald} and \ref{sec:rao} develop the Wald-type and Rao-type test statistics, respectively, and establish their asymptotic null distribution and consistency. Section~\ref{sec:IF} derives the influence function of both test statistics, quantifying their robustness to point contamination. Section~\ref{sec:sim_study} reports an extensive Monte Carlo simulation study examining the empirical level and power of the proposed tests under increasing sample size and contamination. Section~\ref{sec:real_data} illustrates the tests on interval-monitored failure data from an automotive air-conditioner evaporator experiment. Section~\ref{sec:conclusion} concludes the paper.

    \section{Model, assumptions and parameter estimation}
    \label{sec:model}

    \subsection{Test scenario under CyALT}
    \label{subsec:test_cond}

    In a CyALT, test units are exposed repeatedly to a sequence of alternating stress levels according to a predetermined schedule. 
    We suppose that $K$ identical units are put on test under CyALT and exposed to $R$ cyclic-stress conditions $s_1, s_2, \dots, s_R$, each exceeding the use condition $s_0.$ 
    The test runs until a pre-specified censoring time $t_c$, and $K_i$ units, a positive integer, are allocated to condition $s_i$ with $\sum_{i=1}^R K_i = K$; the corresponding allocation proportion is $\pi_i = K_i/K$, so that $0 < \pi_i < 1$ and $\sum_{i=1}^R \pi_i = 1$. Each cyclic-stress condition $s_i$ is characterized by a floor level $s_{iF}$ and a ceiling level $s_{iC}$. 
    Under this study it is assumed 
    that the acceleration is implemented by increasing only the ceiling pressure so that $s_{1F} = s_{2F} = \dots = s_{RF} < s_{1C} < s_{2C} < \dots < s_{RC}$. For each condition $s_i$, the ceiling level $s_{iC}$ is maintained for $(100\tau)\%$ of each cycle and the floor level for the remaining $(100(1-\tau))\%$. 
    {This assumption is appropriate for many real-world applications, such as air-conditioning systems tested under CyALT (\cite{kim2021optimal}}). Under CyALT experiment, lifetimes of testing units are measured in cycles. 

     \begin{figure}[H]
         \centering
         \resizebox{1\textwidth}{6.25cm}{%
         \begin{tikzpicture}[x=2.5cm,y=1.2cm,>=stealth]

    \definecolor{sOneCol}{RGB}{170,195,225}   
    \definecolor{sTwoCol}{RGB}{70,110,170}    
    \definecolor{sRCol}{RGB}{10,30,90}        
    \definecolor{useShade}{RGB}{220,220,220}  

    \def\tauNum{0.375}
    \def\tc{5}
    \def\tauSym{\tau}
    \def\sZeroF{0.3}
    \def\sZeroC{0.9}
    \def\sF{1.2}
    \def\sOneC{2.2}
    \def\sTwoC{2.4}
    \def\sRC{3.0}

    \foreach \k in {0,1} {
        \fill[useShade,opacity=0.6]
            (\k,\sZeroF) rectangle (\k+1,\sZeroC);
    }
    \fill[useShade,opacity=0.6]
        (\tc-1,\sZeroF) rectangle (\tc,\sZeroC);

    \draw[->] (0,0) -- (\tc+0.5,0)
        node[below right] {$t$};
    \draw[->] (0,0) -- (0,3.5)
        node[above left] {$s$};

    \foreach \y/\lab in {
        \sZeroF/$s_{0F}$,
        \sZeroC/$s_{0C}$,
        \sF/\scriptsize{$s_{1F}=s_{2F}=\dots=s_{RF}$},
        \sOneC/$s_{1C}$,
        \sTwoC/$s_{2C}$,
        \sRC/$s_{RC}$
    } {
        \draw (0,\y) -- (0.05,\y);
        \node[left] at (0,\y) {\lab};
    }

    \foreach \k in {0,1} {
        \draw[dotted,very thick,color=sOneCol]
            (\k,\sOneC) -- (\k+\tauNum-0.02,\sOneC)
            -- (\k+\tauNum-0.02,\sF)
            -- (\k+1-0.02,\sF)
            -- (\k+1-0.02,\sOneC);
    }

    \foreach \k in {0,1} {
        \draw[dash dot,very thick,color=sTwoCol]
            (\k,\sTwoC) -- (\k+\tauNum-0.015,\sTwoC)
            -- (\k+\tauNum,\sF)
            -- (\k+1-0.015,\sF)
            -- (\k+1-0.015,\sTwoC);
    }

    \foreach \k in {0,1} {
        \draw[very thick,color=sRCol]
            (\k,\sRC) -- (\k+\tauNum,\sRC)
            -- (\k+\tauNum,\sF)
            -- (\k+1,\sF)
            -- (\k+1,\sRC);
    }

    \draw[dashed,very thick,color=sOneCol]
        (2,\sOneC) -- (\tc-1,\sOneC);
    \draw[dashed,very thick,color=sTwoCol]
        (2,\sTwoC) -- (\tc-1,\sTwoC);
    \draw[dashed,very thick,color=sRCol]
        (2,\sRC)   -- (\tc-1,\sRC);
    \draw[dashed,thick,black]
        (2,\sZeroC)-- (\tc-1,\sZeroC);

    \foreach \k in {0,1} {
        \draw[very thick,black]
            (\k,\sZeroC) -- (\k+\tauNum,\sZeroC)
            -- (\k+\tauNum,\sZeroF)
            -- (\k+1,\sZeroF)
            -- (\k+1,\sZeroC);
    }

    \foreach \x/\lbl in {
        1.2/$IT_1$,
        2.3/$IT_2$,
        3/$IT_3$,
        3.8/$\dots\dots$} {
        \draw[thick,black] (\x,-0.05) -- (\x,0.05);
        \node[black,below=2pt] at (\x,0) {\lbl};
    }
    \draw[thick,black] (\tc,-0.05) -- (\tc,3.2);
    \node[black,below=2pt] at (\tc,0) {$IT_L=t_c$};

    \draw[<->,thick] (0,\sZeroC-0.25) --
        (\tauNum,\sZeroC-0.25)
        node[midway,below=2pt] {$\tauSym$};
    \draw[<->,thick] (\tauNum,\sZeroF+0.25) --
        (1,\sZeroF+0.25)
        node[midway,above=2pt] {$1-\tauSym$};
    \draw[<->,thick] (0,\sZeroF-1.25) --
        (1,\sZeroF-1.25)
        node[midway,above=2pt] {1 Cycle};

    \draw[dotted,very thick,color=sOneCol]
        (\tc-1,\sOneC) -- (\tc-1+\tauNum-0.02,\sOneC)
        -- (\tc-1+\tauNum-0.02,\sF)
        -- (\tc-0.02,\sF)
        -- (\tc-0.02,\sOneC);
    \draw[dash dot,very thick,color=sTwoCol]
        (\tc-1,\sTwoC) -- (\tc-1+\tauNum-0.015,\sTwoC)
        -- (\tc-1+\tauNum,\sF)
        -- (\tc-0.015,\sF)
        -- (\tc-0.015,\sTwoC);
    \draw[very thick,color=sRCol]
        (\tc-1,\sRC) -- (\tc-1+\tauNum,\sRC)
        -- (\tc-1+\tauNum,\sF)
        -- (\tc,\sF)
        -- (\tc,\sRC);
    \draw[very thick,black]
        (\tc-1,\sZeroC) -- (\tc-1+\tauNum,\sZeroC)
        -- (\tc-1+\tauNum,\sZeroF)
        -- (\tc,\sZeroF)
        -- (\tc,\sZeroC);

    \node[color=sOneCol] at (1.5,\sOneC+0.15) {$s_1$};
    \node[color=sTwoCol] at (1.5,\sTwoC+0.15) {$s_2$};
    \node[color=sRCol]   at (1.5,\sRC+0.15)   {$s_R$};

\end{tikzpicture}}
         \caption{Cyclic-stress ALT with $R$ accelerated cyclic-stress conditions: dotted light blue = $s_1$ (floor $s_{1F}$, ceiling $s_{1C}$), dash-dotted medium blue = $s_2$ (floor $s_{2F}$, ceiling $s_{2C}$), solid dark navy = $s_R$ (floor $s_{RF}$, ceiling $s_{RC}$), and solid black line with grey background = use stress (floor $s_{0F}$, ceiling $s_{0C}$).}
         \label{fig:CyALT}
     \end{figure}
    
    Further, to perform the test under interval monitoring in Figure \ref{fig:interval_monitoring}, units are inspected at $L$ pre-specified time points $0 < IT_1 < IT_2 < \dots < IT_L = t_c$. The observed data consist of failure counts $n_{ij}$, the number of units failing in the $j$-th interval $(IT_{j-1}, IT_j]$ under $i$-th condition $s_i$, and $n_{i,L+1}$, the number of surviving units beyond $t_c$. The complete observed dataset is given by
    \begin{equation*}
        \mathcal{D} = \{ n_{ij},\; i = 1,\dots,R,\; 
        j = 1,\dots,L,\; 
        n_{1,L+1}, \dots, n_{R,L+1} \}.
    \end{equation*}
     \begin{figure}[H]
         \centering
         \resizebox{0.9\textwidth}{5.25cm}{%
         \begin{tikzpicture}[x=2.5cm,y=0.8cm,>=stealth]
			\def\tc{4.85}  
			\def\ITs{{0.5,1.1,2.1,3}}  
			\def\L{3}    
			
			\draw[->] (0.5,0) -- (\tc+0.4,0) node[below right] {$t$};
			
			\foreach \x/\lbl in {0.5/$0$,1.2/$IT_1$,2.3/$IT_2$,3/$IT_3$,3.8/$\dots\dots$,4.2/$IT_{L-1}$,4.85/$IT_L=t_c$} {
				\draw[thick,black] (\x,-0.05) -- (\x,0.05);
				\node[black,below=2pt] at (\x,0) {\lbl};
				
				\draw[thick,black] (\tc,-0.05) -- (\tc,5.2);
				
			}
			\foreach \i/\xstart/\xend in {1/0.5/1.2,2/1.2/2.3,3/2.3/3,4/4.2/4.85} {
				\draw[decorate,decoration={brace,amplitude=6pt}] 
				(\xstart,0.1) -- (\xend,0.1);
			}
			\node[above=1em] at (0.001*\tc/\L,0) {Stress Cond. 1};
			\node[above=1em] at ($ (0.5,0)!0.5!(1.2,0) $) {$n_{11}$};
			\node[above=1em] at ($ (1.2,0)!0.5!(2.3,0) $) {$n_{12}$};
			\node[above=1em] at ($ (2.3,0)!0.5!(3,0) $) {$n_{13}$};
			\node[above=1em] at ($ (4.2,0)!0.5!(4.85,0) $) {$n_{1L}$};
			\node[above=1em] at ($ (4.85,0)!0.5!(5.5,0) $) {$n_{1,L+1}$};
			
			\node[above=2.5em] at (0.001*\tc/\L,0) {Stress Cond. 2};
			\node[above=2.5em] at ($ (0.5,0)!0.5!(1.2,0) $) {$n_{21}$};
			\node[above=2.5em] at ($ (1.2,0)!0.5!(2.3,0) $) {$n_{22}$};
			\node[above=2.5em] at ($ (2.3,0)!0.5!(3,0) $) {$n_{23}$};
			\node[above=2.5em] at ($ (4.2,0)!0.5!(4.85,0) $) {$n_{2L}$};
			\node[above=2.5em] at ($ (4.85,0)!0.5!(5.5,0) $) {$n_{2,L+1}$};
			
			\node[above=5em] at (0.001*\tc/\L,0) {\vdots};
			
			\node[above=7.5em] at (0.001*\tc/\L,0) {Stress Cond. R};
			\node[above=7.5em] at ($ (0.5,0)!0.5!(1.2,0) $) {$n_{R1}$};
			\node[above=7.5em] at ($ (1.2,0)!0.5!(2.3,0) $) {$n_{R2}$};
			\node[above=7.5em] at ($ (2.3,0)!0.5!(3,0) $) {$n_{R3}$};
			\node[above=7.5em] at ($ (4.2,0)!0.5!(4.85,0) $) {$n_{RL}$};
			\node[above=7.5em] at ($ (4.85,0)!0.5!(5.5,0) $) {$n_{R,L+1}$};
\end{tikzpicture}
        }
         \caption{Failure counts under interval monitoring of CyALT forming the observed data 
         $\mathcal{D}$.}
         \label{fig:interval_monitoring}
     \end{figure}
    Figure \ref{fig:CyALT} illustrates the cyclic-stress loading profile, and Figure \ref{fig:interval_monitoring} shows the interval-monitored data structure. We now describe the statistical model related to the lifetime characteristics underlying the observed data $\mathcal{D}$.   

    \subsection{Model assumptions}
    \label{subsec:model_assump}
    
    At any stress level $s_{im}$, the lifetime of a unit follows a log-normal distribution with scale parameter $\zeta(s_{im})$ and shape parameter $\sigma$ denoted by $\text{log-normal}(\zeta(s_{im}), \sigma)$, with PDF
    \begin{equation*}
        f(x, \zeta(s_{im}), \sigma) = 
        \frac{1}{\sqrt{2\pi}\,\sigma x}
        \exp \left(
        -\frac{1}{2\sigma^2}
        \big\{\ln(x) - \ln(\zeta(s_{im}))\big\}^2
        \right).
    \end{equation*}
    Note that the shape parameter plays no role in lifetime change when stress is changed, since  the lifetime of a unit depends on the stress variable only through its scale. 
    This homogeneity condition is commonly adopted in ALT studies, as it simplifies some mathematical derivations while still providing a realistic assumption for practical applications.
    The scale  parameter is linked to the standardized stress  $s_{im}$ through the log-linear relationship
    \begin{equation}
        \ln(\zeta(s_{im})) = \alpha_0 + \alpha_1 s_{im},
        \label{eq:stress_life}
    \end{equation}
    with intercept $\alpha_0 \in \mathbb{R}$ and 
    slope $\alpha_1 < 0$, since higher stress always 
    corresponds to a shorter lifetime. The 
    standardized stress is defined as
    \begin{equation}
        s_{im} = \frac{g(v_{im}) - g(v_{0F})}
        {g(v_{hC}) - g(v_{0F})},
        \label{eq:standardisation}
    \end{equation}
    where $v_{im}$ is the physical stress variable, 
    $v_{0F}$ is the floor stress at the use condition, 
    $v_{hC}$ is the ceiling stress at the highest 
    accelerated condition, and $g(\cdot)$ is a monotone 
    function chosen according to the nature of the 
    stress. Common choices include
    \begin{align*}
        g(v) = 
        \begin{cases}
            \ln(v) & \text{for the inverse power law, 
                      appropriate for mechanical stress,} \\
            1/v    & \text{for the Arrhenius model, 
                      appropriate for thermal stress,} \\
            v      & \text{for a linear model.}
        \end{cases}
    \end{align*}
Under the standardization, $s_{im} \in [0,1]$ for all conditions, with $s_{0F} = 0$ at the use floor and $s_{hC} = 1$ at the highest accelerated ceiling.
    
    We use the cumulative exposure (CE) model originally introduced by \cite{nelson1980} to account for the effect of changing stress levels on lifetime within each cycle. Under this model, the cumulative exposure at $t$ cycles under condition $s_i$ is
    \begin{equation*}
        \epsilon_i(t) = t\left\{
        \frac{\tau}{\zeta(s_{iC})} + 
        \frac{1-\tau}{\zeta(s_{iF})}
        \right\}.
    \end{equation*}
    That is, the CDF due to the cumulative exposure model at the time point $t$ is equivalent to the CDF with a unit scale at the cumulative exposure at $t$ the time. Therefore, the CDF at $t$ cycles under condition $s_i$ is therefore
    \begin{equation*}
        F_i(t) = \Phi \left( \frac{\ln(\epsilon_i(t))}{\sigma} \right) = \Phi \left(
        \frac{\ln(t) + 
        \ln(\tau\eta_{iC} + (1-\tau)\eta_{iF})}
        {\sigma}
        \right),
    \end{equation*}
    where $\eta_{im} = 1/\zeta(s_{im}) = \exp\{-\alpha_0 - \alpha_1 s_{im}\}$ for $m \in \{C, F\}$ and $\Phi$ is the standard normal CDF. Define $ \mu_i = -\ln \left( \tau\eta_{iC} + (1-\tau)\eta_{iF} \right)$. Let $T_{ik}$ be the lifetime of a $k$-th unit subjected to $i$-th cyclic stress loading, then $T_{ik} \sim \text{log-normal}(\mu_i, \sigma)$ for $ k=1,\ldots, K_i$ and $i=1,\ldots, R$. Hence, the CDF and PDF at $t$ cycles under $i$-th cyclic-stress loading are given by
    \begin{align*}
        F_i(t) =  \Phi \left( \frac{\ln(t) - \mu_i}{\sigma} \right), \qquad 
        f_i(t) = \frac{1}{\sigma t} \phi \left( \frac{\ln(t) -\mu_i} {\sigma} \right),
    \end{align*}
    respectively, where $\phi$ represents the PDF of a standard normal distribution. The lifetime on a $\log$ scale under $i$-th cyclic-stress condition, $s_i$ satisfies
    \begin{equation}
        X_{ik} = \ln(T_{ik}) \sim N(\mu_i, \sigma^2),
        \label{eq:log-normal}
    \end{equation}
    with location parameter $\mu_i$ and standard deviation $\sigma$. 
    
    Note that the parameter vector of interest is $\boldsymbol{\theta} = (\alpha_0, \alpha_1, \sigma)^\top \in \boldsymbol{\Theta}$ with the parameter space $ \boldsymbol{\Theta}$ that we represent by
    \begin{equation*}
        \boldsymbol{\Theta} = \left\{
        (\alpha_0, \alpha_1, \sigma) :
        \alpha_0 \in \mathbb{R},\;
        \alpha_1 < 0,\;
        \sigma > 0
        \right\}
        = \mathbb{R} \times \mathbb{R}^- \times 
        \mathbb{R}^+.
    \end{equation*}
    The probability of failure in the $j$-th 
    interval under condition $s_i$ is
    \begin{align}
        p_{ij}(\boldsymbol{\theta}) &= 
        \Phi \left(
        \frac{\ln(IT_j) - \mu_i}{\sigma}
        \right) -
        \Phi \left(
        \frac{\ln(IT_{j-1}) - \mu_i}{\sigma}
        \right),
        i = 1,\dots,R,\quad j = 1,\dots,L,
        \label{eq:p_ij}
    \end{align}
    and the survival probability at stress condition $s_i$ is
    \begin{equation}
        p_{i,L+1}(\boldsymbol{\theta}) = 
        1 - \Phi \left(
        \frac{\ln(t_c) - \mu_i}{\sigma}
        \right),
        \quad i = 1,\dots,R.
        \label{eq:p_iL1}
    \end{equation} 
    These defined probabilities are used to build the subsequent inference in all the upcoming sections. Now we present the robust estimators of $\boldsymbol{\theta}$ based on the DPD and its asymptotic distribution.
    
    \subsection{The Weighted Minimum Density Power Divergence Estimator}
    \label{sec:WMDPDE}

    In this section, we describe robust estimators for the CyALT model with 
	log-normal lifetimes using the DPD approach proposed 
	by \cite{jaenada2026robust}. 
    Let define the theoretical probability vector for the cycle $i$ as
	\begin{align*}
		\mathbf{p}_i(\boldsymbol{\theta}) = \left( p_{i1}(\boldsymbol{\theta}),\, p_{i2}(\boldsymbol{\theta}),\, \ldots,\, 
		p_{iL}(\boldsymbol{\theta}),\, p_{i,L+1}(\boldsymbol{\theta}) \right)^\top, 
		\quad i = 1, 2, \ldots, R,
	\end{align*}
	and the corresponding empirical probability vector as
	\begin{align*}
		\widehat{\mathbf{p}}_i = \left( \frac{n_{i1}}{K_i},\, \frac{n_{i2}}{K_i},\, \ldots,\, 
		\frac{n_{iL}}{K_i},\, \frac{n_{i,L+1}}{K_i} \right)^\top, 
		\quad i = 1, 2, \ldots, R,
	\end{align*}
	where $K_i$ is the number of units assigned to the $i$-th stress condition. 

	The DPD between the probability vectors under the $i$-th cyclic-stress condition, $\widehat{\mathbf{p}}_i$ and $\mathbf{p}_i(\boldsymbol{\theta})$, for tuning parameter $\beta > 0$, is given by
	\begin{align}
		d_\beta(\widehat{\mathbf{p}}_i, \mathbf{p}_i(\boldsymbol{\theta})) 
		&= \sum_{j=1}^{L+1}\left\{
		p_{ij}(\boldsymbol{\theta})^{1+\beta} 
		- \left(1+\frac{1}{\beta}\right) \frac{n_{ij}}{K_i}
		p_{ij}(\boldsymbol{\theta})^{\beta}
		+ \frac{1}{\beta}
		\left(\frac{n_{ij}}{K_i}\right)^{1+\beta}
		\right\}. 
		\label{eq:DPD_full}
	\end{align}

    Combining the information from the $R$ cyclic-stress conditions, the weighted DPD is given by
    \begin{align}
		\text{DPD}_\beta(\boldsymbol{\theta}) = & \sum_{i=1}^{R} \frac{K_i}{K} d_\beta(\widehat{\mathbf{p}}_i, \mathbf{p}_i(\boldsymbol{\theta})) , &
		\label{eq:DPD}
	\end{align}
    %
	and the weighted minimum density power divergence estimator (WMDPDE) with tuning parameter $\beta$ is then defined as
	\begin{align}
		\widehat{\boldsymbol{\theta}}_\beta = (\widehat{\alpha}_{0,\beta}, \widehat{\alpha}_{1,\beta}, \widehat{\sigma}_\beta)  =\argmin_{\boldsymbol{\theta} \in \boldsymbol{\Theta}}\; \text{DPD}_\beta(\boldsymbol{\theta}).
		\label{eq:WMDPDE}
	\end{align}
    
    \cite{jaenada2026robust} showed that the DPD family can be extended to include $\beta=0$ as a limiting case, for which the WMDPDE coincides with the maximum likelihood estimator (MLE).
	Note that the last term in Eq. \eqref{eq:DPD_full}
	does not depend on $\boldsymbol{\theta}$ and therefore has no role in the minimization of $d_\beta(\widehat{\mathbf{p}}_i, \mathbf{p}_i(\boldsymbol{\theta}))$ 
	with respect to $\boldsymbol{\theta}$. 

	Setting the partial derivatives of the DPD loss funcion in \eqref{eq:DPD} to zero gives the estimating equations
    in matrix form as
	\begin{align}
		\sum_{i=1}^{R} \frac{K_i}{K}\,
		\mathbf{W}_i(\boldsymbol{\theta})^\top\,
		\mathbf{D}_i^{(\beta-1)}(\boldsymbol{\theta})\,
		\big[\mathbf{p}_i(\boldsymbol{\theta}) - \widehat{\mathbf{p}}_i\big] 
		= \mathbf{0}_3,
		\label{eq:estimeq}
	\end{align}
	where $\mathbf{0}_3$ is the $3$-dimensional null vector, $\mathbf{D}_i(\boldsymbol{\theta}) = 
	\mathrm{diag}(p_{i1}(\boldsymbol{\theta}),\ldots,p_{iL}(\boldsymbol{\theta}),p_{i,L+1}(\boldsymbol{\theta}))$ 
	is the $(L+1)\times(L+1)$ diagonal matrix of fitted probabilities, 
	$\mathbf{D}_i^{(r)}(\boldsymbol{\theta})$ denotes the diagonal matrix with entries 
	$p_{ij}(\boldsymbol{\theta})^{r}$, and $\mathbf{W}_i(\boldsymbol{\theta})$ is the 
	$(L+1)\times 3$ matrix whose $j$-th row is the gradient vector
	\begin{align*}
		\mathbf{w}_{ij}(\boldsymbol{\theta}) = \nabla_{\boldsymbol{\theta}} p_{ij}(\boldsymbol{\theta}) = 
		\left(\frac{\partial p_{ij}(\boldsymbol{\theta})}{\partial\alpha_0},\;
		\frac{\partial p_{ij}(\boldsymbol{\theta})}{\partial\alpha_1},\;
		\frac{\partial p_{ij}(\boldsymbol{\theta})}{\partial\sigma}\right)^\top,
		\quad j = 1,\ldots,L+1,
	\end{align*}
    \begin{align}
		\frac{\partial p_{ij}(\boldsymbol{\theta})}{\partial\alpha_0} 
		&= -\frac{1}{\sigma}\big[\phi(a_{i,j})-\phi(a_{i,j-1})\big],  &
		\frac{\partial p_{i,L+1}(\boldsymbol{\theta})}{\partial\alpha_0} 
		&= \phantom{-}\frac{\phi(a_{i,L})}{\sigma}, 
		\label{eq:dp_a0}\\
		\frac{\partial p_{ij}(\boldsymbol{\theta})}{\partial\alpha_1} 
		&= -\frac{\bar{s}_i}{\sigma}\big[\phi(a_{i,j})-\phi(a_{i,j-1})\big], &
		\frac{\partial p_{i,L+1}(\boldsymbol{\theta})}{\partial\alpha_1} 
		&= \phantom{-}\frac{\bar{s}_i\,\phi(a_{i,L})}{\sigma},
		\label{eq:dp_a1}\\
		\frac{\partial p_{ij}(\boldsymbol{\theta})}{\partial\sigma} 
		&= -\frac{1}{\sigma}\big[a_{i,j}\phi(a_{i,j})-a_{i,j-1}\phi(a_{i,j-1})\big], \quad &
		\frac{\partial p_{i,L+1}(\boldsymbol{\theta})}{\partial\sigma} 
		&= \phantom{-}\frac{a_{i,L}\,\phi(a_{i,L})}{\sigma}.
		\label{eq:dp_sigma}
	\end{align}
    with the convention that $p_{i,L+1}(\boldsymbol{\theta})$ corresponds to row $j = L+1$.
    
    
    The following result is established in 
    \cite{jaenada2026robust}. It is stated here 
    without proof, as it is the foundation for the 
    test statistics in Section \ref{sec:robust_tests}.
    
    \begin{theorem}[\cite{jaenada2026robust}]
    \label{thm:asymp}
    Under the CyALT model of 
    Section \ref{subsec:model_assump}, suppose 
    $\mathbf{J}_\beta(\boldsymbol{\theta}_0)$ is 
    non-singular and $p_{ij}(\boldsymbol{\theta})$ 
    is twice continuously differentiable in a 
    neighbourhood of $\boldsymbol{\theta}_0$. 
    Then $\widehat{\boldsymbol{\theta}}_\beta$ is 
    consistent for 
    $\boldsymbol{\theta}_0 = (\alpha_{0,0}, 
    \alpha_{1,0}, \sigma_0)^\top$ and
    \begin{equation}
        \sqrt{K}\,\big(
        \widehat{\boldsymbol{\theta}}_\beta - 
        \boldsymbol{\theta}_0
        \big)
        \;\xrightarrow[K \to \infty]{\mathcal{L}}\;
        N_3 \Big(\mathbf{0}_3,\;
        \mathbf{J}_\beta(\boldsymbol{\theta}_0)^{-1}
        \mathbf{K}_\beta(\boldsymbol{\theta}_0)
        \mathbf{J}_\beta(\boldsymbol{\theta}_0)^{-1}
        \Big),
        \label{eq:asymp}
    \end{equation}
    where
    \begin{align}
        \mathbf{J}_\beta(\boldsymbol{\theta}) &= 
        \sum_{i=1}^{R} \frac{K_i}{K}\,
        \mathbf{W}_i(\boldsymbol{\theta})^\top\,
        \mathbf{D}_i^{(\beta-1)}(\boldsymbol{\theta})\,
        \mathbf{W}_i(\boldsymbol{\theta}),
        \label{eq:Jbeta}\\[4pt]
        \mathbf{K}_\beta(\boldsymbol{\theta}) &= 
        \sum_{i=1}^{R} \frac{K_i}{K}\,
        \mathbf{W}_i(\boldsymbol{\theta})^\top
        \Big[
        \mathbf{D}_i^{(2\beta-1)}(\boldsymbol{\theta})
        - \mathbf{D}_i^{(\beta)}(\boldsymbol{\theta})\,
        \mathbf{1}_{L+1}
        \mathbf{1}_{L+1}^\top\,
        \mathbf{D}_i^{(\beta)}(\boldsymbol{\theta})
        \Big]
        \mathbf{W}_i(\boldsymbol{\theta}).
        \label{eq:Kbeta}
    \end{align}
    with $\mathbf{D}_i^{(r)}(\boldsymbol{\theta})$ denoting the $(L+1)\times (L+1)$ diagonal matrix with $j$-th diagonal entry $p_{ij}(\boldsymbol{\theta})^{r}$, and $\mathbf{1}_{L+1}$ the (L+1)-dimensional vector of ones.
    \end{theorem}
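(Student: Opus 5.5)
The plan is to treat $\widehat{\boldsymbol{\theta}}_\beta$ as an M-estimator defined by the estimating equations \eqref{eq:estimeq}, and to combine a multinomial central limit theorem for the empirical vectors $\widehat{\mathbf{p}}_i$ with a first-order Taylor expansion of the estimating function around $\boldsymbol{\theta}_0$.

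\textbf{Consistency.} Write $H_K(\boldsymbol{\theta})=\text{DPD}_\beta(\boldsymbol{\theta})$, dropping the last term of \eqref{eq:DPD_full}. Since $\widehat{\mathbf{p}}_i\to\mathbf{p}_i(\boldsymbol{\theta}_0)$ almost surely and $K_i/K=\pi_i$ is fixed, $H_K$ converges uniformly on compacts to
\[
H(\boldsymbol{\theta})=\sum_i\pi_i\, d_\beta\big(\mathbf{p}_i(\boldsymbol{\theta}_0),\mathbf{p}_i(\boldsymbol{\theta})\big).
\]
Uniformity holds because $H_K$ is affine in $\widehat{\mathbf{p}}_i$ with bounded coefficients $p_{ij}^\beta\le1$. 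The limit $H$ is nonnegative and vanishes only when $\mathbf{p}_i(\boldsymbol{\theta})=\mathbf{p}_i(\boldsymbol{\theta}_0)$ for all $i$.

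Identifiability then has to be argued separately. The cell probabilities determine $(\mu_i-\ln IT_j)/\sigma$, hence $\mu_i$ and $\sigma$. The map from $(\alpha_0,\alpha_1)$ to $(\mu_1,\dots,\mu_R)$ is injective when there are at least two distinct ceilings. Alternatively, local identifiability follows from non-singularity of $\mathbf{J}_\beta(\boldsymbol{\theta}_0)$, which is what the hypothesis really provides. In that case I would argue consistency locally: take a root of \eqref{eq:estimeq} in a neighbourhood of $\boldsymbol{\theta}_0$, as in the standard Cramér-type argument.

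\textbf{Expansion.} Let
\[
\mathbf{U}_K(\boldsymbol{\theta})=\sum_i\frac{K_i}{K}\mathbf{W}_i^\top\mathbf{D}_i^{(\beta-1)}\big[\mathbf{p}_i(\boldsymbol{\theta})-\widehat{\mathbf{p}}_i\big].
\]
Since $\mathbf{U}_K(\widehat{\boldsymbol{\theta}}_\beta)=\mathbf{0}$, expanding gives
\[
\mathbf{0}=\mathbf{U}_K(\boldsymbol{\theta}_0)+\nabla\mathbf{U}_K(\tilde{\boldsymbol{\theta}})(\widehat{\boldsymbol{\theta}}_\beta-\boldsymbol{\theta}_0).
\]
Differentiating, $\nabla\mathbf{U}_K(\boldsymbol{\theta})$ equals $\sum_i\pi_i\mathbf{W}_i^\top\mathbf{D}_i^{(\beta-1)}\mathbf{W}_i$ plus terms multiplied by $(\mathbf{p}_i(\boldsymbol{\theta})-\widehat{\mathbf{p}}_i)$, using $\nabla\mathbf{p}_i=\mathbf{W}_i$. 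These extra terms involve second derivatives, which are continuous by assumption, and they vanish in probability at $\tilde{\boldsymbol{\theta}}\to\boldsymbol{\theta}_0$. Hence $\nabla\mathbf{U}_K(\tilde{\boldsymbol{\theta}})\to\mathbf{J}_\beta(\boldsymbol{\theta}_0)$ in probability, and
\[
\sqrt{K}(\widehat{\boldsymbol{\theta}}_\beta-\boldsymbol{\theta}_0)=\mathbf{J}_\beta^{-1}\sqrt{K}\,\mathbf{U}_K(\boldsymbol{\theta}_0)\cdot(-1)+o_P(1).
\]

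\textbf{CLT.} Independently over $i$, $\sqrt{K_i}(\widehat{\mathbf{p}}_i-\mathbf{p}_i)\to N(\mathbf{0},\mathbf{D}_i-\mathbf{p}_i\mathbf{p}_i^\top)$. Therefore
\[
\sqrt{K}\,\mathbf{U}_K(\boldsymbol{\theta}_0)=-\sum_i\sqrt{\pi_i}\,\mathbf{W}_i^\top\mathbf{D}_i^{(\beta-1)}\sqrt{K_i}(\widehat{\mathbf{p}}_i-\mathbf{p}_i)
\]
is asymptotically normal with covariance
\[
\sum_i\pi_i\mathbf{W}_i^\top\mathbf{D}_i^{(\beta-1)}(\mathbf{D}_i-\mathbf{p}_i\mathbf{p}_i^\top)\mathbf{D}_i^{(\beta-1)}\mathbf{W}_i .
\]
Using $\mathbf{D}_i^{(\beta-1)}\mathbf{D}_i\mathbf{D}_i^{(\beta-1)}=\mathbf{D}_i^{(2\beta-1)}$ and $\mathbf{D}_i^{(\beta-1)}\mathbf{p}_i=\mathbf{D}_i^{(\beta)}\mathbf{1}$, this covariance is exactly $\mathbf{K}_\beta(\boldsymbol{\theta}_0)$. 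Slutsky's lemma then gives the sandwich form in \eqref{eq:asymp}.

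The main obstacle is the consistency step. It requires either global identifiability of $\boldsymbol{\theta}$ from the multinomial probabilities or restriction to a local root of \eqref{eq:estimeq}. I would first try the local argument, which uses only non-singularity of $\mathbf{J}_\beta$ together with a Brouwer/inverse-function argument on a shrinking ball.
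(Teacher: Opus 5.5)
The paper gives no proof of this theorem: it is imported from \cite{jaenada2026robust} and explicitly stated without proof, so there is no in-paper argument to match. Your proposal is correct in outline and follows the standard route: treat $\widehat{\boldsymbol\theta}_\beta$ as an interior root of \eqref{eq:estimeq}, expand the $\beta$-score about $\boldsymbol\theta_0$, then apply the independent multinomial CLTs and Slutsky. This is the same machinery the appendix applies to the restricted estimator (Theorem~\ref{thm:rmdpde_asymp}) and to the $\beta$-score (Theorem~\ref{thm:u_beta_mean_var}).

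Your algebra is right, and in two places it is cleaner than the appendix:
\begin{itemize}
\item With the score built from $\mathbf p_i(\boldsymbol\theta)-\widehat{\mathbf p}_i$, the Jacobian tends to $+\mathbf J_\beta(\boldsymbol\theta_0)$, as you have, not to $-\mathbf J_\beta$.
\item The multinomial covariance is $K_i^{-1}(\mathbf D_i-\mathbf p_i\mathbf p_i^\top)$, not the $\mathbf D_i^{(-1)}$ version written in the appendix. It is this form that, through your two identities, yields $\mathbf K_\beta$ exactly.
\end{itemize}

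The step to tighten is consistency, because the statement concerns the global minimizer \eqref{eq:WMDPDE}. The local Cram\'er/Brouwer route you say you would try first only yields a consistent sequence of roots of \eqref{eq:estimeq}. Since $\text{DPD}_\beta$ is not convex in $\boldsymbol\theta$, that root need not be the argmin, so on its own it proves a weaker statement.

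Use your global route instead. Your bound is in fact uniform over all of $\boldsymbol\Theta$, not just on compacts: for $\beta>0$, $H_K-H$ differs from a $\boldsymbol\theta$-free constant by at most $(1+1/\beta)\sum_i\pi_i\|\widehat{\mathbf p}_i-\mathbf p_i(\boldsymbol\theta_0)\|_1$. What is then missing is a well-separated minimum on the non-compact $\boldsymbol\Theta$, and it does hold:
\begin{itemize}
\item If $H(\boldsymbol\theta_n)\to H(\boldsymbol\theta_0)$, compactness of the simplices and uniqueness of the zero of $d_\beta$ force $\mathbf p_i(\boldsymbol\theta_n)\to\mathbf p_i(\boldsymbol\theta_0)$.
\item Since every $p_{ij}(\boldsymbol\theta_0)>0$, applying $\Phi^{-1}$ to the cumulative cell probabilities at two inspection times gives $\sigma_n\to\sigma_0$ and $\mu_{i,n}\to\mu_{i,0}$.
\item $\mu_1-\mu_2$ is continuous and strictly decreasing in $\alpha_1<0$. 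Its derivative is $m_1-m_2<0$, where $m_i$ is the mean of $s_{iC}$ and $s_F$ under weights proportional to $\tau e^{-\alpha_1 s_{iC}}$ and $(1-\tau)e^{-\alpha_1 s_F}$. Hence $\alpha_{1,n}\to\alpha_{1,0}$, and then $\alpha_{0,n}\to\alpha_{0,0}$.
\end{itemize}
The same argument shows that, with probability tending to one, the minimizer exists and is interior. You need this to write \eqref{eq:estimeq} at $\widehat{\boldsymbol\theta}_\beta$ at all.

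Two smaller points:
\begin{itemize}
\item The mean value theorem must be applied row by row, or in integral form. This is harmless, since every intermediate point tends to $\boldsymbol\theta_0$.
\item The case $\beta=0$ needs the usual multinomial MLE argument, because the bound $p_{ij}^\beta\le 1$ is not available there.
\end{itemize}
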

    
    We denote the components of $\widehat{\boldsymbol{\theta}}_\beta$ 
    corresponding to $\alpha_0$, $\alpha_1$, and $\sigma$ by 
    $\widehat{\alpha}_{0,\beta}$, $\widehat{\alpha}_{1,\beta}$, and 
    $\widehat{\sigma}_\beta$, respectively. Since the true parameter 
    vector $\boldsymbol{\theta}_0 = (\alpha_{0,0}, \alpha_{0,1}, \sigma_{0})^\top$ is unknown, the covariance matrix is 
    estimated by replacing $\boldsymbol{\theta}_0$ with 
    $\widehat{\boldsymbol{\theta}}_\beta$ in its mathematical form. 
    Let us denote the covariance estimate by 
    $\widehat{\boldsymbol{\Sigma}}_\beta$, then
    \begin{equation}
        \widehat{\boldsymbol{\Sigma}}_\beta = \frac{1}{K}\,
        \mathbf{J}_\beta(\widehat{\boldsymbol{\theta}}_\beta)^{-1}\,
        \mathbf{K}_\beta(\widehat{\boldsymbol{\theta}}_\beta)\,
        \mathbf{J}_\beta(\widehat{\boldsymbol{\theta}}_\beta)^{-1}.
        \label{eq:Sigma_hat}
    \end{equation}
    
    \section{Testing Hypotheses in CyALT}
    \label{sec:robust_tests}

   The main objective of this work is to develop robust hypothesis-testing procedures based on the WMDPDE. Our approach provides novel robust tests for hypotheses concerning the CyALT model parameters. 
   More specifically, we develop the theoretical foundations of two families of robust test statistics: Wald-type and Rao-type tests in sections \ref{sec:wald} and \ref{sec:rao}, respectively.
   Since the observations are independent but not identically distributed, the proposed procedures are formulated within the weighted DPD framework introduced in Section \ref{sec:WMDPDE}. Their theoretical properties are likewise established under this non-identically distributed setting.
    As noted in Section \ref{subsec:model_assump}, the parameter space of the CyALT model under log-normal lifetimes is
    \[
    \Theta
    =
    \left\{
    \boldsymbol{\theta}
    =
    (\alpha_0,\alpha_1,\sigma)^{T}
    :
    \alpha_0\in\mathbb{R},\;
    \alpha_1<0,\;
    \sigma>0
    \right\}
    =
    \mathbb{R}\times(-\infty,0)\times(0,\infty).
    \]
    
    We first consider simple null hypotheses in which the parameter vector is
    completely specified,   such as
    \begin{equation}
    H_0:
    (\alpha_0, \alpha_1, \sigma) =(\alpha_{0,0}, \alpha_{1,0}, \sigma_{0}) \quad \text{vs} \quad
    H_1:
    (\alpha_0, \alpha_1, \sigma) \neq (\alpha_{0,0}, \alpha_{1,0}, \sigma_{0}).
    \label{W1}
    \end{equation}
    
    Other simple testing problems arise when some of the model parameters are
    assumed to be known. For example, we may consider
    \begin{subequations}\label{W2}
    \begin{align}
    H_{0} &: \alpha_0=\alpha_{0,0} \quad \text{vs} \quad
    H_{1} : \alpha_0\neq\alpha_{0,0},
    \qquad
    \text{with $\alpha_1$ and $\sigma$ known},& 
    \label{W2a}\\
    H_0 &: \alpha_1=\alpha_{1,0} \quad \text{vs} \quad
    H_{1} : \alpha_1\neq\alpha_{1,0},
    \qquad
    \text{with $\alpha_0$ and $\sigma$ known}.&
    \label{W2b}
    \end{align}
    \end{subequations}
    Similarly, one may test
    \begin{equation}
    H_0:
    (\alpha_0, \alpha_1)=(\alpha_{0,0},\alpha_{1,0}) \quad \text{vs} \quad
    H_1: (\alpha_0, \alpha_1) \neq (\alpha_{0,0},\alpha_{1,0}),
    \qquad
    \text{with $\sigma$ known}.
    \label{W3}
    \end{equation}
    
    Secondly, we present the theory for composite null hypotheses in which the remaining
    parameters are unspecified, such as
    \begin{subequations}\label{W4}
    \begin{align}
    H_0 &: \alpha_0=\alpha_{0,0} \quad \text{vs} \quad
    H_1: \alpha_0\neq\alpha_{0,0},
    \qquad
    \text{with $\alpha_1$ and $\sigma$ unknown},&
    \label{W4a}\\
    H_0 &: \alpha_1=\alpha_{1,0} \quad \text{vs} \quad
    H_1: \alpha_1\neq\alpha_{1,0},
    \qquad
    \text{with $\alpha_0$ and $\sigma$ unknown}.&
    \label{W4b}\\
    H_0 &: \sigma=\sigma_{0} \quad \text{vs} \quad
    H_1: \sigma \neq \sigma_{0},
    \qquad
    \text{with $\alpha_0$ and $\alpha_1$ unknown}.&
    \label{W4c}
    \end{align}
    \end{subequations}
    More generally, we consider composite null hypotheses of the form
    \begin{equation}
    H_0:
    \boldsymbol{h}(\boldsymbol{\theta})
    =
    \boldsymbol{0}_{r} \quad \text{vs} \quad
     H_1: \boldsymbol{h}(\boldsymbol{\theta})
    \neq    \boldsymbol{0}_{r}
     \label{W5}
    \end{equation}
    where
    \[
    \boldsymbol{h}:\Theta\longrightarrow\mathbb{R}^{r},
    \qquad
    r\leq 3,
    \]
    is a sufficiently smooth function satisfying the regularity conditions
    specified below.
    The composite hypotheses in \eqref{W4a}–\eqref{W4c} are particularly useful for testing whether an individual model parameter takes a specified value. Of particular interest is the hypothesis $H_0:\alpha_1=0$, which assesses whether the lifetime distribution depends on the stress level.

    For brevity, throughout the remainder of the paper we refer to the null hypotheses given in \eqref{W1}, \eqref{W2a}, \eqref{W2b} and 
    \eqref{W3} as $H_{0}^{(1)}$, $H_0^{(2a)}$, $H_0^{(2b)}$, and $H_0^{(3)}$, respectively. That is, $H_0^{(1)}$ denotes the full simple null hypothesis 
    on $(\alpha_0,\alpha_1,\sigma)$; $H_0^{(2a)}$ and $H_0^{(2b)}$ denote the simple null hypotheses on $\alpha_0$ alone and $\alpha_1$ alone, respectively, with the remaining two parameters known; and $H_0^{(3)}$ denotes the simple null hypothesis on 
    $(\alpha_0,\alpha_1)$ jointly, with $\sigma$ known.
    
    %

    \section{Wald-type tests \label{sec:wald}}

    The classical Wald test provides an approach for testing hypotheses on model parameters by measuring the distance between an unrestricted estimator and the parameter values specified under the null hypothesis, scaled by its estimated asymptotic variance-covariance matrix. For a simple null hypothesis, the null parameter value is completely specified, so the Wald statistic directly compares the unrestricted estimator with this fixed value. In contrast, under a composite null hypothesis, the restrictions define a set of admissible parameter values rather than a single point, and the test is formulated in terms of the restriction functions evaluated at the unrestricted estimator. In this section, we present a robust generalization of the Wald test for testing CyALT parameters based on the DPD. We derive explicit formulas of the test under both simple and composite null hypotheses.
    
    \subsection{Simple null hypotheses}

    We first consider the case of simple null hypotheses. For brevity, we derive the corresponding expressions for hypotheses of the form \eqref{W1}. However, the theoretical framework developed here can be straightforwardly extended to the hypotheses in \eqref{W2} and \eqref{W3}.

    The classical Wald test, introduced by \cite{wald1943tests}, for testing the hypothesis given in (\ref{W1}) is given by
    \begin{equation*}
    W_{K}^{0}\left( \alpha _{0},\text{ }\alpha _{1},\sigma \right) = K \left[
    \left( \widehat{\alpha }_{0},\text{ }\widehat{\alpha }_{1},\widehat{\sigma }%
    \right) -\left( \alpha_{0,0},\text{ }\alpha_{1,0},\sigma_{0}\right) %
    \right] ^{T}\boldsymbol{I}_{F}\left( \alpha_{0,0},\text{ }\alpha_{1,0},
    \sigma_{0}\right) \left[ \left( \widehat{\alpha }_{0},\text{ }%
    \widehat{\alpha }_{1},\widehat{\sigma }\right) -\left( \alpha_{0,0},\text{
    }\alpha_{1,0},\sigma_{0}\right) \right] 
    \end{equation*}
    being $\left( \widehat{\alpha }_{0},\text{ }\widehat{\alpha }_{1},\widehat{%
    \sigma }\right)$ the MLE of $\left( \alpha _{0},\alpha _{1},\sigma \right) $
    and $\boldsymbol{I}_{F}\left( \alpha _{0},\text{ }\alpha _{1},\sigma \right) 
    $ the Fisher information matrix given by
    \begin{equation*}
    \boldsymbol{I}_{F}\left( \alpha _{0},\text{ }\alpha _{1},\sigma \right)
    =\tsum\limits_{i=1}^{R}\frac{K_{i}}{K}
    \boldsymbol{W}_{i}\left( \alpha _{0},\alpha _{1},\sigma \right)^{T}
    \boldsymbol{D}_{i}^{-1}
    \boldsymbol{W}_{i}\left( \alpha _{0},\alpha _{1},\sigma \right)
    \label{eq:J0}
    \end{equation*}
    The Fisher information matrix 
    is obtained from matrix $\boldsymbol{J}_{\beta }\left(\alpha_{0},\alpha_{1},\sigma \right)$ defined in \eqref{eq:Jbeta} by setting $\beta =0.$ 
    %
    
    It is well known that the asymptotic distribution of  $W_{K}^{0}\left( \alpha _{0},\alpha _{1},\sigma \right)$ is chi-square with $3$ degrees of 
    freedom. Therefore, the classical Wald test rejects the null hypothesis in (\ref{W1}) if
    \begin{equation}
    W_{K}^{0}\left( \alpha _{0},\alpha _{1},\sigma \right) > \chi_{3,\gamma}^{2},
    \label{W6}
    \end{equation}
    where $\chi _{3,\gamma }^{2}$ is the $100\left( 1-\gamma \right) $ upper
    quantile of the chi-square distribution with 3 degrees of freedom. In a
    similar manner Wald tests for testing (\ref{W2}) and (\ref{W3}) can be defined.
    For more details see, for instance, \cite{sen1994large}.
    
    The classical Wald test may be highly sensitive to the presence of outliers in the data. This limitation motivates the development of robust Wald-type tests based on the WMDPDE. This family of test statistics was originally proposed 
 in \cite{basu2016generalized} and \cite{ghosh} for general statistical models.
    
\begin{definition}
	The Wald-type test statistics, based on the WMDPDE, with tuning parameter $%
	\beta \geq 0,$ for testing the null hypothesis given in (\ref{W1}),
	are defined by%
    \begin{equation}
    W_{K}^{\beta }\left( \boldsymbol{\theta }_{0}\right) = K\left( \widehat{%
    \boldsymbol{\theta }}_{\beta }-\boldsymbol{\theta }_{0}\right) ^{T}\left[ 
    \boldsymbol{J}_{\beta }\left( \boldsymbol{\theta }_{0}\right) ^{-1}%
    \boldsymbol{K}_{\beta }\left( \boldsymbol{\theta }_{0}\right) \boldsymbol{J}%
    _{\beta }\left( \boldsymbol{\theta }_{0}\right) ^{-1}\right] ^{-1}\left( 
    \widehat{\boldsymbol{\theta }}_{\beta }-\boldsymbol{\theta }_{0}\right),
    \label{eq:waldtest}
    \end{equation}
     where $\widehat{\boldsymbol{\theta}}_{\beta} = 
(\widehat{\alpha}_{0,\beta},\, \widehat{\alpha}_{1,\beta},\, 
\widehat{\sigma}_\beta)^\top$ and $\boldsymbol{\theta}_{0} = 
(\alpha_{0,0},\, \alpha_{1,0},\, \sigma_0)^\top$.
    \end{definition}

Note the correspondence of the above formula with the classical Wald test. The central matrix
$
\boldsymbol{J}_{\beta}\left(\boldsymbol{\theta}_{0}\right)^{-1}$ $
\boldsymbol{K}_{\beta}\left(\boldsymbol{\theta}_{0}\right)
\boldsymbol{J}_{\beta}\left(\boldsymbol{\theta}_{0}\right)^{-1}
$
is the asymptotic variance–covariance matrix of the WMDPDE, as established in Theorem \ref{thm:asymp}. 
Building on this result and following the theoretical developments presented in \cite{basu2016generalized}, it can be proved that 
    the asymptotic distribution of the Wald-type test statistics, $W_{K}^{\beta
    }\left( \boldsymbol{\theta }_{0}\right),$ is a chi-square distribution with 
    $3$ degrees of freedom.
   %
Therefore, the Wald-type test statistics based on the WMDPDE with tuning parameter $\beta$ reject the null hypothesis in (\ref{W1}) if
    \begin{equation}
    W_{K}^{\beta
    }\left( \boldsymbol{\theta }_{0}\right) > \chi_{3,\gamma}^{2},
    \end{equation}
  where $\chi _{3,\gamma }^{2}$ is the $100\left( 1-\gamma \right) $ upper
    quantile of the chi-square distribution with 3 degrees of freedom.   
    
    \begin{remark}
        If we denote 
        \begin{equation*}
        \boldsymbol{J}_{\beta }\left( \boldsymbol{\theta }_{0}\right) =\left(
        J_{\beta }^{ij}\left( \boldsymbol{\theta }_{0}\right) _{i,j=1,2,3}\right) 
        \text{ and }\boldsymbol{K}_{\beta }\left( \boldsymbol{\theta }_{0}\right)
        =\left( K_{\beta }^{ij}\left( \boldsymbol{\theta }_{0}\right)
        _{i,j=1,2,3}\right)
        \end{equation*}%
        the Wald-type test statistics for testing the hypotheses presented in (\ref%
        {W2}) can be straightforwardly obtained as
        \begin{equation*}
        W_{K}^{\beta }\left( \alpha _{0}\right) = K \left( \widehat{\alpha }%
        _{0,\beta }-\alpha _{0,0}\right) ^{2}\frac{J_{\beta }^{11}\left( 
        \boldsymbol{\theta }_{0}\right)^2} {K_{\beta }^{11}\left( \boldsymbol{\theta }%
        _{0}\right) }\text{ and }W_{K}^{\beta }\left( \alpha _{1}\right) = K \left( 
        \widehat{\alpha }_{1,\beta }-\alpha _{1,0}\right)  ^{2}\frac{J_{\beta
        }^{22}\left( \boldsymbol{\theta }_{0}\right)^2} {K_{\beta }^{22}\left( 
        \boldsymbol{\theta }_{0}\right) },
        \end{equation*}%
        respectively, and they converge to a chi-square distribution with $1$ degree of freedom. Here, $\widehat{\alpha}_{0,\beta}$ and $\widehat{\alpha}_{1,\beta}$ are obtained by minimizing the DPD loss \eqref{eq:DPD} with respect to $\alpha_0$ and $\alpha_1$, respectively, each obtained with the remaining two parameters fixed at their known values. 
        \end{remark}

    \begin{remark}
    Let $\boldsymbol{J}_{\beta}^{1:2,1:2}(\boldsymbol{\theta}_0)$ and $\boldsymbol{K}_{\beta}^{1:2,1:2}(\boldsymbol{\theta}_0)$
    denote the $2\times2$ upper-left submatrices of $\boldsymbol{J}_{\beta}(\boldsymbol{\theta}_0)$ and
    $\boldsymbol{K}_{\beta}(\boldsymbol{\theta}_0)$ corresponding to $(\alpha_0,\alpha_1)$. The Wald-type test statistic for testing the hypothesis 
    presented in (\ref{W3}) is given by
    \begin{equation*}W_{K}^{\beta}\left(\alpha_0,\alpha_1\right) = K
    \left(\begin{array}{c}
    \widehat{\alpha}_{0,\beta}-\alpha_{0,0}\\
    \widehat{\alpha}_{1,\beta}-\alpha_{1,0}
    \end{array}\right)^{T}
    \left[\Big[\boldsymbol{J}_{\beta}^{1:2,1:2}(\boldsymbol{\theta}_0)\Big]^{-1}
    \boldsymbol{K}_{\beta}^{1:2,1:2}(\boldsymbol{\theta}_0)
    \Big[\boldsymbol{J}_{\beta}^{1:2,1:2}(\boldsymbol{\theta}_0)\Big]^{-1}\right]^{-1}
    \left(\begin{array}{c}
    \widehat{\alpha}_{0,\beta}-\alpha_{0,0}\\
    \widehat{\alpha}_{1,\beta}-\alpha_{1,0}
    \end{array}\right),
    \end{equation*}
    and it converges to a chi-square distribution with 2 degrees of freedom. Here, $\widehat{\alpha}_{0,\beta}$ and $\widehat{\alpha}_{1,\beta}$ denote the joint minimizers of the DPD loss \eqref{eq:DPD} with respect to $(\alpha_0,\alpha_1)$, obtained with $\sigma$ fixed at its known value.
    \end{remark}

    Moreover, this generalized test is consistent, meaning that its power function converges to 1 under the alternative hypothesis.
    We denote by $P_{W_{K}^{\beta }\left( \boldsymbol{\theta }_{0}\right)
    }\left( \boldsymbol{\theta }\right) $ the power of the Wald-type
    test given in (\ref{W1}) at the point $\boldsymbol{\theta }=\left(
    \alpha _{0}, \alpha _{1},\sigma\right) \in \Theta.$ The following theorem establishes the consistency of the Wald-type test
    given in (\ref{W1}) in the sense of \cite{Fraser}.

    \begin{theorem}
    \label{thm:wald_consistency}
    Let $ \boldsymbol{\theta }^{\ast}$ be the true value of the CyALT model parameter, which does not satisfy the simple null hypothesis in (\ref{W1}) with null parameter vector $\boldsymbol{\theta}_0 = (\alpha_{0,0}, \alpha_{1,0}, \sigma_o)^\top$.
    The Wald-type test statistic based on the WMDPDE with tuning parameter $\beta$ given in (\ref{eq:waldtest}) satisfies
    \begin{equation*}
    \lim_{K\rightarrow \infty }P_{W_{K}^{\beta }\left( \boldsymbol{\theta }%
    _{0}\right) }\left( \boldsymbol{\theta }^{\ast }\right) =1.
    \end{equation*}
    \end{theorem}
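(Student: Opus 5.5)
The plan is to reduce the consistency claim to the consistency of the WMDPDE and the positive definiteness of the central matrix in \eqref{eq:waldtest}. Write $\boldsymbol{\Sigma}_\beta(\boldsymbol{\theta}_0)=\boldsymbol{J}_\beta(\boldsymbol{\theta}_0)^{-1}\boldsymbol{K}_\beta(\boldsymbol{\theta}_0)\boldsymbol{J}_\beta(\boldsymbol{\theta}_0)^{-1}$. This is a fixed, nonrandom matrix, because it is evaluated at the null value $\boldsymbol{\theta}_0$ and not at the estimator. We assume, as is implicit in the definition of the test statistic, that it is nonsingular, hence positive definite: $\boldsymbol{K}_\beta$ is a covariance-type matrix and $\boldsymbol{J}_\beta$ is nonsingular by the hypotheses of Theorem \ref{thm:asymp}. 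Then the power is
\[
P_{W_K^\beta(\boldsymbol{\theta}_0)}(\boldsymbol{\theta}^\ast)=\Pr_{\boldsymbol{\theta}^\ast}\Big(\tfrac{1}{K}W_K^\beta(\boldsymbol{\theta}_0)>\tfrac{1}{K}\chi^2_{3,\gamma}\Big),
\]
and the task is to show that $\frac{1}{K}W_K^\beta(\boldsymbol{\theta}_0)$ converges in probability to a strictly positive constant.

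Step 1. Apply Theorem \ref{thm:asymp} with true value $\boldsymbol{\theta}^\ast$ in place of $\boldsymbol{\theta}_0$. This requires the regularity conditions there to hold at $\boldsymbol{\theta}^\ast$. It gives $\widehat{\boldsymbol{\theta}}_\beta\to\boldsymbol{\theta}^\ast$ in probability under $\boldsymbol{\theta}^\ast$, and indeed $\widehat{\boldsymbol{\theta}}_\beta-\boldsymbol{\theta}^\ast=O_P(K^{-1/2})$.

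Step 2. By the continuous mapping theorem applied to the continuous quadratic form $\mathbf{x}\mapsto(\mathbf{x}-\boldsymbol{\theta}_0)^T\boldsymbol{\Sigma}_\beta(\boldsymbol{\theta}_0)^{-1}(\mathbf{x}-\boldsymbol{\theta}_0)$,
\[
\tfrac{1}{K}W_K^\beta(\boldsymbol{\theta}_0)\xrightarrow{P} c^\ast=(\boldsymbol{\theta}^\ast-\boldsymbol{\theta}_0)^T\boldsymbol{\Sigma}_\beta(\boldsymbol{\theta}_0)^{-1}(\boldsymbol{\theta}^\ast-\boldsymbol{\theta}_0).
\]
Since $\boldsymbol{\theta}^\ast\neq\boldsymbol{\theta}_0$ and $\boldsymbol{\Sigma}_\beta(\boldsymbol{\theta}_0)^{-1}$ is positive definite, $c^\ast>0$.

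Step 3. Since $\chi^2_{3,\gamma}/K\to0$, for $K$ large enough that $\chi^2_{3,\gamma}/K<c^\ast/2$ we have
\[
\Pr\Big(\tfrac{1}{K}W_K^\beta>\tfrac{1}{K}\chi^2_{3,\gamma}\Big)\ge\Pr\Big(\big|\tfrac{1}{K}W_K^\beta-c^\ast\big|<c^\ast/2\Big)\to1.
\]
This proves the theorem.

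Alternatively, following \cite{basu2016generalized}, one can expand
\[
W_K^\beta=Kc^\ast+2K(\widehat{\boldsymbol{\theta}}_\beta-\boldsymbol{\theta}^\ast)^T\boldsymbol{\Sigma}_\beta^{-1}(\boldsymbol{\theta}^\ast-\boldsymbol{\theta}_0)+O_P(1).
\]
This gives an approximation of the power via $\Phi\big(\sqrt{K}(c^\ast-\chi^2_{3,\gamma}/K)/(2\sigma^\ast)\big)$, where $\sigma^{\ast2}$ is the asymptotic variance of the linear term; it tends to $1$. This route is only needed if a power approximation is also desired.

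The main obstacle is justifying Step 1 under the alternative, rather than the algebra. Theorem \ref{thm:asymp} is stated at a generic true value, so the key point is to check that its hypotheses hold at $\boldsymbol{\theta}^\ast$. These are nonsingularity of $\boldsymbol{J}_\beta(\boldsymbol{\theta}^\ast)$ and smoothness of $p_{ij}$, the latter being clear from \eqref{eq:p_ij} and \eqref{eq:p_iL1}. The positive definiteness of $\boldsymbol{\Sigma}_\beta(\boldsymbol{\theta}_0)$ also needs to be stated as an assumption or verified. The term $\mathbf{D}^{(2\beta-1)}-\mathbf{D}^{(\beta)}\mathbf{1}\mathbf{1}^T\mathbf{D}^{(\beta)}$ is the covariance matrix of $(p_{ij}^{\beta-1}\mathbb{1}\{\text{cell } j\})_j$, so it is positive semidefinite. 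I would check its strict positivity on the column space of $\mathbf{W}_i$ using $\mathbf{1}^T\mathbf{W}_i=\mathbf{0}^T$ together with full column rank of the stacked $\mathbf{W}_i$.
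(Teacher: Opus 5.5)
Your proof is correct, but your main argument is not the paper's. The route you sketch as an ``alternative'' is essentially what the paper does.

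\textbf{The paper's route.} It sets $L(\boldsymbol{\theta})=(\boldsymbol{\theta}-\boldsymbol{\theta}_0)^T\boldsymbol{\Sigma}(\boldsymbol{\theta}_0)^{-1}(\boldsymbol{\theta}-\boldsymbol{\theta}_0)$ and linearizes $L(\widehat{\boldsymbol{\theta}}_\beta)-L(\boldsymbol{\theta}^\ast)$ around $\boldsymbol{\theta}^\ast$. It then invokes the asymptotic normality of Theorem \ref{thm:asymp} at $\boldsymbol{\theta}^\ast$ to obtain $\sqrt{K}\big(L(\widehat{\boldsymbol{\theta}}_\beta)-L(\boldsymbol{\theta}^\ast)\big)\to N\big(0,\sigma^2(\boldsymbol{\theta}^\ast)\big)$, with $\sigma^2(\boldsymbol{\theta}^\ast)=4(\boldsymbol{\theta}^\ast-\boldsymbol{\theta}_0)^T\boldsymbol{\Sigma}(\boldsymbol{\theta}_0)^{-1}\boldsymbol{\Sigma}(\boldsymbol{\theta}^\ast)\boldsymbol{\Sigma}(\boldsymbol{\theta}_0)^{-1}(\boldsymbol{\theta}^\ast-\boldsymbol{\theta}_0)$. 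Finally it writes the power as $1-\Phi^{(K)}\big(\sqrt{K}(\chi^2_{3,\gamma}/K-L(\boldsymbol{\theta}^\ast))/\sigma(\boldsymbol{\theta}^\ast)\big)$, where $\Phi^{(K)}\to\Phi$ uniformly and the argument tends to $-\infty$.

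\textbf{What each route buys.} Your Steps 1--3 need only consistency of $\widehat{\boldsymbol{\theta}}_\beta$ under $\boldsymbol{\theta}^\ast$ together with $c^\ast=L(\boldsymbol{\theta}^\ast)>0$. You do not use asymptotic normality at $\boldsymbol{\theta}^\ast$ or a uniform-convergence step. You also avoid the non-degeneracy condition $\sigma(\boldsymbol{\theta}^\ast)>0$, which the paper needs implicitly because it divides by it. So yours is the cleaner proof of the limit statement itself. The paper's route instead yields an explicit power approximation, and the subsequent remark on the minimum sample size $K^\ast$ is built on that approximation.

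\textbf{Positive definiteness.} Your check closes a point the paper leaves implicit, and it requires no assumption beyond Theorem \ref{thm:asymp}. Since $\boldsymbol{J}_\beta(\boldsymbol{\theta}_0)=\sum_i\frac{K_i}{K}\boldsymbol{W}_i^T\boldsymbol{D}_i^{(\beta-1)}\boldsymbol{W}_i$ with $\boldsymbol{D}_i^{(\beta-1)}$ positive diagonal, nonsingularity of $\boldsymbol{J}_\beta(\boldsymbol{\theta}_0)$ already forces full column rank of the stacked $\boldsymbol{W}_i$. Your variance argument then finishes: zero variance forces $(\boldsymbol{W}_i\boldsymbol{v})_j=c\,p_{ij}^{1-\beta}$ for every $j$, and $\mathbf{1}^T\boldsymbol{W}_i=\mathbf{0}^T$ forces $c=0$. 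This gives $\boldsymbol{K}_\beta(\boldsymbol{\theta}_0)$, and hence $\boldsymbol{\Sigma}_\beta(\boldsymbol{\theta}_0)$, positive definite.
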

    \begin{proof}
    See Appendix~\ref{app:proofs}.
    \end{proof}

    \begin{remark}
    Based on the previous result we can get the minimum sample size for a
    fix power, $\pi ^{\ast }.$ It is a simple exercise to establish that the
    necessary sample size, $K,$ is given by 
    \begin{equation*}
    K=\left[ K^{\ast }\right] +1
    \end{equation*}%
    being%
    \begin{equation*}
    K^{\ast }=\frac{a+b+\sqrt{a(a+2b)}}{2L\left( \boldsymbol{\theta }^{\ast
    }\right) ^{2}}
    \end{equation*}%
    with%
    \begin{equation*}
    a=\sigma _{W_{K}^{\beta }\left( \boldsymbol{\theta }_{0}\right) }^{2}\left(
    \Phi ^{-1}\left( 1-\pi ^{\ast }\right) \right) ^{2}\text{ and }b=\frac{1}{2}%
    \chi _{3,\gamma }^{2}L\left( \boldsymbol{\theta }^{\ast }\right) .
    \end{equation*}
    \end{remark}
    
    \subsection{Composite null hypotheses}
    
    In this section, we introduce generalized Wald tests for testing composite null hypotheses. We denote by 
    \begin{equation*}
    \Theta =\left\{ \left( \alpha _{0},\alpha _{1},\sigma \right) ~|~ \alpha
    _{0}\in \mathbb{R}\text{, }\alpha _{1}<0,\sigma >0\right\} =\mathbb{R}\times \mathbb{R%
    }^{-}\times \mathbb{R}^{+}
    \end{equation*}%
    the parameter space associated to the log-normal model and we consider the
    restricted parameter space $\Theta _{0}\subset \Theta $ defined by 
    \begin{equation}
    \Theta _{0}=\left\{ \left( \alpha _{0},\alpha _{1},\sigma \right) \in \Theta
    ~|~ \boldsymbol{h}\left( \alpha _{0},\alpha _{1},\sigma \right) =\boldsymbol{0}%
    _{r}\right\}  \label{W.Null}
    \end{equation}%
    where 
    \begin{equation}
    \boldsymbol{h}\boldsymbol{:}\text{ }\mathbb{R\times R}^{-}\mathbb{\times R}%
    ^{+}\rightarrow \mathbb{R}^{r}\text{ }(r\leq 3)  \label{W7}
    \end{equation}%
    and $\boldsymbol{0}_{r}$ denotes the null vector of dimension $r.$ We also assume
    that the matrix
    \begin{equation}
    \boldsymbol{H}\left( \alpha _{0},\alpha _{1},\sigma \right) =\frac{\partial 
    \boldsymbol{h}\left( \alpha _{0},\alpha _{1},\sigma \right) ^{T}}{%
    \boldsymbol{\partial }\left( \alpha _{0},\alpha _{1},\sigma \right) }
    \label{W8}
    \end{equation}%
    exists and is continuous in $\left( \alpha _{0},\alpha _{1},\sigma \right) $
    with $\text{rank}\left( \boldsymbol{H}\left( \alpha _{0},\alpha _{1},\sigma \right)
    \right) =r.$ 
    
    Our interest is in testing 
    \begin{equation}
    H_{0}:\left( \alpha _{0},\alpha _{1},\sigma \right) \in \Theta _{0} \quad \text{vs} \quad H_{1}:\left( \alpha _{0},\alpha _{1},\sigma \right) \notin \Theta
    _{0}  \label{W1.5}
    \end{equation}%
    on the basis of a random sample of size $K$.
    
    Remark that, if we define $ \boldsymbol{h} \left( \alpha _{0}, \alpha _{1}, \sigma
    \right) = \left( \alpha _{0}-\alpha_{0,0}, \alpha _{1}-\alpha
    _{1,0}, \sigma - \sigma_{0}\right) ^{T} $ ($r=3$), we get the simple null hypothesis
    considered in (\ref{W1}). For $\boldsymbol{h}\left( \alpha _{0},\alpha
    _{1},\sigma \right) =\left( \alpha _{0}-\alpha_{0,0},\alpha _{1}-\alpha
    _{1,0}\right) ^{T}$ ($r=2$), we get a composite null hypothesis for the test \eqref{W3} but with unknown $\sigma.$
    For $\boldsymbol{h}\left( \alpha_0, \alpha_1, \sigma \right) = \alpha_0-\alpha_{0,0}$ ($r=1$), we get the composite null hypothesis considered in (\ref{W4a}).
    
    \begin{definition}
    \label{def:wald_composite}
    The family of Wald-type statistics based on the WMDPDE for testing (\ref{W1.5}) is given by 
    \begin{equation}
    W_{K}^{\beta }\left( \widehat{\boldsymbol{\theta }}_{\beta }\right) =K%
    \boldsymbol{h}\left( \widehat{\boldsymbol{\theta }}_{\beta }\right) ^{T}%
    \boldsymbol{\Sigma }_{\boldsymbol{h},\beta}^{-1} \left(\widehat{\boldsymbol{\theta }}_{\beta }\right)%
    ~\boldsymbol{h}\left( \widehat{\boldsymbol{\theta }}_{\beta }\right), \qquad \beta \geq 0,
    \label{W1.6}
    \end{equation}%
    being 
    \begin{equation*}
    \boldsymbol{\Sigma }_{\boldsymbol{h},\beta} (\widehat{\boldsymbol{\theta }}_{\beta })=\boldsymbol{H}%
    ^{T}(\widehat{\boldsymbol{\theta }}_{\beta })\boldsymbol{J}_{ \beta }^{-1}(%
    \widehat{\boldsymbol{\theta }}_{\beta })\boldsymbol{K}_{\beta }(\widehat{%
    \boldsymbol{\theta }}_{\beta })\boldsymbol{J}_{\beta}^{-1}(\widehat{%
    \boldsymbol{\theta }}_{\beta }) \boldsymbol{H} (\widehat{\boldsymbol{\theta }}%
    _{\beta })
    \end{equation*}%
    so that $K^{-1}\boldsymbol{\Sigma }_{\boldsymbol{h},\beta}(\widehat{\boldsymbol{\theta }}_{\beta })$ is the covariance matrix of $\boldsymbol{h} \left( \widehat{\boldsymbol{\theta }}_{\beta }\right)$ evaluated at $\widehat{\boldsymbol{\theta }}_{\beta }$, and the matrices $\boldsymbol{H}(\boldsymbol{\theta }),\boldsymbol{J}%
        _{\beta }\left( \boldsymbol{\theta }\right) $ and $\boldsymbol{K}_{\beta
        }\left( \boldsymbol{\theta }\right) $ were defined in \eqref{W8}, \eqref{eq:Jbeta} and
        \eqref{eq:Kbeta}, respectively and the function $\boldsymbol{h}$ in (\ref{W7}).
    \end{definition}
    
    \begin{theorem}
    The asymptotic distribution of the Wald-type statistics given in (\ref{W1.6}%
    ) is a chi-square distribution with $r$ degrees of freedom.
    \end{theorem}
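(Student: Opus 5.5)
The plan is to combine a first-order Taylor expansion of $\boldsymbol{h}$ around the true parameter with the asymptotic normality of the WMDPDE from Theorem~\ref{thm:asymp}, and then replace the estimated covariance by its probability limit using Slutsky's lemma. Throughout, I work under the null hypothesis: the true parameter $\boldsymbol{\theta}_0$ lies in $\Theta_0$, so $\boldsymbol{h}(\boldsymbol{\theta}_0)=\boldsymbol{0}_r$. I also assume the regularity conditions of Theorem~\ref{thm:asymp} hold at $\boldsymbol{\theta}_0$, and that $\boldsymbol{H}$ is continuous of full rank $r$.

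\textbf{Step 1 (delta method).} Since $\boldsymbol{h}$ is differentiable with continuous derivative matrix $\boldsymbol{H}$, a mean-value expansion gives
\begin{equation*}
\sqrt{K}\,\boldsymbol{h}(\widehat{\boldsymbol{\theta}}_\beta)=\boldsymbol{H}^T(\boldsymbol{\theta}_0)\sqrt{K}(\widehat{\boldsymbol{\theta}}_\beta-\boldsymbol{\theta}_0)+o_P(1).
\end{equation*}
The remainder is $o_P(1)$ because $\widehat{\boldsymbol{\theta}}_\beta$ is consistent and $\sqrt{K}(\widehat{\boldsymbol{\theta}}_\beta-\boldsymbol{\theta}_0)=O_P(1)$, both by Theorem~\ref{thm:asymp}. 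Applying \eqref{eq:asymp} and Slutsky's lemma then yields
\begin{equation*}
\sqrt{K}\,\boldsymbol{h}(\widehat{\boldsymbol{\theta}}_\beta)\xrightarrow{\mathcal{L}}N_r\big(\boldsymbol{0}_r,\boldsymbol{\Sigma}_{\boldsymbol{h},\beta}(\boldsymbol{\theta}_0)\big),
\end{equation*}
where $\boldsymbol{\Sigma}_{\boldsymbol{h},\beta}(\boldsymbol{\theta}_0)=\boldsymbol{H}^T(\boldsymbol{\theta}_0)\boldsymbol{J}_\beta^{-1}(\boldsymbol{\theta}_0)\boldsymbol{K}_\beta(\boldsymbol{\theta}_0)\boldsymbol{J}_\beta^{-1}(\boldsymbol{\theta}_0)\boldsymbol{H}(\boldsymbol{\theta}_0)$.

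\textbf{Step 2 (consistency of the plug-in covariance).} The cell probabilities $p_{ij}$ and their gradients $\mathbf{w}_{ij}$ are continuous in $\boldsymbol{\theta}$, being built from $\Phi$, $\phi$ and smooth functions of $(\alpha_0,\alpha_1,\sigma)$. They are also strictly positive on $\Theta$, so the powers $p_{ij}^{\beta-1}$, $p_{ij}^{2\beta-1}$ and $p_{ij}^{\beta}$ are continuous. Hence $\boldsymbol{J}_\beta(\cdot)$, $\boldsymbol{K}_\beta(\cdot)$ and $\boldsymbol{H}(\cdot)$ are continuous. Matrix inversion is continuous at the non-singular matrix $\boldsymbol{J}_\beta(\boldsymbol{\theta}_0)$, so the continuous mapping theorem gives
\begin{equation*}
\boldsymbol{\Sigma}_{\boldsymbol{h},\beta}(\widehat{\boldsymbol{\theta}}_\beta)\xrightarrow{P}\boldsymbol{\Sigma}_{\boldsymbol{h},\beta}(\boldsymbol{\theta}_0).
\end{equation*}

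\textbf{Step 3 (quadratic form).} I need $\boldsymbol{\Sigma}_{\boldsymbol{h},\beta}(\boldsymbol{\theta}_0)$ to be positive definite, so that it can be inverted; then the inverse of the plug-in matrix also converges to its inverse. Granting this, write $\mathbf{Z}_K=\sqrt{K}\,\boldsymbol{h}(\widehat{\boldsymbol{\theta}}_\beta)$. Slutsky's lemma gives
\begin{equation*}
W_K^\beta(\widehat{\boldsymbol{\theta}}_\beta)=\mathbf{Z}_K^T\boldsymbol{\Sigma}_{\boldsymbol{h},\beta}^{-1}(\boldsymbol{\theta}_0)\mathbf{Z}_K+o_P(1).
\end{equation*}
If $\mathbf{Z}\sim N_r(\boldsymbol{0},\boldsymbol{\Sigma})$ with $\boldsymbol{\Sigma}$ positive definite, then $\mathbf{Z}^T\boldsymbol{\Sigma}^{-1}\mathbf{Z}=\|\boldsymbol{\Sigma}^{-1/2}\mathbf{Z}\|^2\sim\chi^2_r$. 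This gives the limit $\chi^2_r$.

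\textbf{Main obstacle.} The delicate point is the positive definiteness of $\boldsymbol{\Sigma}_{\boldsymbol{h},\beta}(\boldsymbol{\theta}_0)$, which reduces to that of $\boldsymbol{K}_\beta(\boldsymbol{\theta}_0)$. Indeed, $\boldsymbol{J}_\beta^{-1}$ is invertible and $\boldsymbol{H}$ has full column rank $r$, so $\boldsymbol{\Sigma}_{\boldsymbol{h},\beta}(\boldsymbol{\theta}_0)$ is positive definite whenever $\boldsymbol{K}_\beta(\boldsymbol{\theta}_0)$ is.

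I would establish this by writing $\boldsymbol{K}_\beta$ as a covariance matrix. The $i$-th block of \eqref{eq:Kbeta} is $\mathrm{Var}\big(\mathbf{W}_i^T\mathbf{D}_i^{(\beta-1)}\mathbf{e}\big)$, where $\mathbf{e}$ is a single multinomial indicator vector with cell probabilities $\mathbf{p}_i$. Such a block is positive semidefinite. It is strictly positive in a direction $\mathbf{v}$ unless $\mathbf{D}_i^{(\beta-1)}\mathbf{W}_i\mathbf{v}$ is constant across the $L+1$ cells.

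The columns of $\mathbf{W}_i$ sum to zero across cells, because $\sum_j p_{ij}=1$. Combining the strata $i=1,\dots,R$ with distinct values of $\bar s_i$, which enter through $\partial\mu_i/\partial\alpha_1$, should rule out a common degenerate direction $\mathbf{v}$. If a fully rigorous argument proves awkward, I would instead state positive definiteness of $\boldsymbol{K}_\beta(\boldsymbol{\theta}_0)$ as a regularity assumption, alongside the non-singularity of $\boldsymbol{J}_\beta(\boldsymbol{\theta}_0)$ already assumed in Theorem~\ref{thm:asymp}, and follow \cite{basu2016generalized}.
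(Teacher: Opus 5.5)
Your proof is correct and follows the standard argument (delta method on $\boldsymbol{h}$, consistency of the plug-in covariance, then Slutsky) of \cite{basu2016generalized}; the paper gives no proof of its own here and simply cites that reference. The step you flag as the main obstacle needs no extra assumption and no $\bar s_i$ argument: if $\mathbf{v}^T\mathbf{K}_\beta(\boldsymbol{\theta}_0)\mathbf{v}=0$, then $\mathbf{D}_i^{(\beta-1)}\mathbf{W}_i\mathbf{v}=c_i\mathbf{1}_{L+1}$ for every $i$, so $\mathbf{W}_i\mathbf{v}=c_i\,(p_{i1}^{1-\beta},\ldots,p_{i,L+1}^{1-\beta})^T$, and the zero column sums of $\mathbf{W}_i$ force $c_i=0$, hence $\mathbf{W}_i\mathbf{v}=\mathbf{0}$ for all $i$ and $\mathbf{v}^T\mathbf{J}_\beta(\boldsymbol{\theta}_0)\mathbf{v}=\sum_{i}\frac{K_i}{K}\sum_{j}p_{ij}^{\beta-1}(\mathbf{w}_{ij}^T\mathbf{v})^2=0$, which means positive definiteness of $\mathbf{K}_\beta(\boldsymbol{\theta}_0)$ already follows from the nonsingularity of $\mathbf{J}_\beta(\boldsymbol{\theta}_0)$ assumed in Theorem~\ref{thm:asymp}.
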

    
    \begin{proof}
    See \cite{basu2016generalized}.
    \end{proof}
    
    \begin{remark}
    Based on the previous theorem, the Wald-type tests, with significance level $%
    \gamma $, reject the null hypothesis given in (\ref{W1.5}) if 
    \begin{equation}
    W_{K}^\beta\left( \widehat{\boldsymbol{\theta }}_{\beta }\right) >\chi _{r,\gamma
    }^{2}.  \label{W22}
    \end{equation}
    It can be established that the Wald-type test given in (\ref{W22}) is also
    consistent, i.e.,%
    \begin{equation*}
    \lim_{K\rightarrow \infty }P_{W_{K}^{\beta }\left( \widehat{\boldsymbol{%
    \theta }}_{\beta }\right) }\left( \boldsymbol{\theta }^{\ast }\right) =1.
    \end{equation*}
    \end{remark}

    \section{Rao-type tests \label{sec:rao}}

    The classical Rao score test provides a classical approach for testing hypotheses on model parameters based on the score function, $\boldsymbol{U}(\boldsymbol{\theta})$ defined as the gradient of the log-likelihood of the CyALT model with respect to the parameter vector $\boldsymbol{\theta} = (\alpha_0,\alpha_1,\sigma)^\top$. For a simple null hypothesis as in \eqref{W1}-\eqref{W3}, the parameter value is completely specified under the null, so the Rao statistic can be evaluated directly at this fixed value. In contrast, under a composite null hypothesis, some parameters remain unspecified and therefore the test requires their estimation subject to the restrictions imposed by the null hypothesis. Consequently, the score function and the corresponding information matrix are evaluated in the restricted version of the MLE obtained in the null. In this section, a robust generalization of the Rao score test for the case of simple and composite null hypothesis in CyALT test, based on the WMDPDE is presented. The asymptotic properties of the Rao-type test statistics is also demonstrated.

    \subsection{Simple null hypothesis}
    
    The classical Rao test, for testing the null
    hypothesis given in \eqref{W1} has the expression
    \begin{equation*}
    R\left( \boldsymbol{\theta }_{0}\right) =K\text{ }\boldsymbol{U}\left( 
    \boldsymbol{\theta }_{0}\right) I_{F}(\boldsymbol{\theta }_{0})^{-1}%
    \boldsymbol{U}\left( \boldsymbol{\theta }_{0}\right) ^{T}
    \end{equation*}%
    with
    \begin{equation*}
    \boldsymbol{U}\left( \boldsymbol{\theta }\right) =\tsum \limits_{i=1}^{R}%
    \frac{K_{i}}{K}\boldsymbol{W}_{i}\left( \boldsymbol{\theta }\right) 
    \boldsymbol{D}_{i}^{-1 }\left( \boldsymbol{\theta }\right) \left(\mathbf{p}_i(\boldsymbol{\theta}) - \widehat{\mathbf{p}}_i \right).
    \end{equation*}
    The function $\boldsymbol{U}\left( \boldsymbol{\theta }\right)$ is known as the score of the model likelihood, defined by the estimating equations of the MLE. The score function measures the local sensitivity of the log-likelihood to changes in the model parameters, indicating the direction and magnitude in which the likelihood increases from a given parameter value.
    Besides, note that under simple null hypotheses, the parameter value $\boldsymbol{\theta}_0$ is completely specified under the null, so the Rao statistic does not need an estimate of the CyALT model parameters. Besides,
    it is well-known that, under the null hypothesis given in \eqref{W1}, we have 
    \begin{equation*}
    R\left( \boldsymbol{\theta }_{0}\right) \underset{K\rightarrow \infty }{%
    \overset{L}{\rightarrow }}\chi_{3}^{2}.
    \end{equation*}
   
   Therefore, the classical Rao test rejects the null hypothesis in (\ref{W1}) if
    \begin{equation}
    R\left( \boldsymbol{\theta }_{0}\right) > \chi_{3,\gamma}^{2},
    \end{equation}
    where $\chi_{3,\gamma }^{2}$ is the $100\left( 1-\gamma \right) $ upper
    quantile of the chi-square distribution with 3 degrees of freedom. 
        The classical Rao test for the simple null hypothesis in \eqref{W2} and \eqref{W3} can be defined in a similar manner, and its asymptotic distributions are a chi-square distribution with 1 and 2 degrees of freedom, respectively.

    Although the Rao test for simple null hypothesis does not require a parameter estimate, it relies on the score function, which, like the likelihood function itself, can be sensitive to data contamination. In particular, large discrepancies between the expected and observed count of failures may substantially affect the score and, consequently, the value of the Rao test statistic. To mitigate this sensitivity to data contamination, we introduce a robust generalization of the Rao test based on the DPD score function.
    
    Following \cite{basu2022robust, basu2025statistical}, we define the Rao-type test based on WMDPDE for testing the hypothesis in \eqref{W1} as follows.
    \begin{definition}
    The Rao-type tests based on the WMDPDE for testing the null hypothesis given in \eqref{W1} are given by%
    \begin{equation*}
    R_{K}^{\beta }\left( \boldsymbol{\theta }_{0}\right) = K\boldsymbol{U}_{\beta
    }\left( \boldsymbol{\theta }_{0}\right) \boldsymbol{K}_{\beta }\left( 
    \boldsymbol{\theta }_{0}\right) ^{-1}\boldsymbol{U}_{\beta }\left( 
    \boldsymbol{\theta }_{0}\right) ^{T}
    \end{equation*}%
    with 
    \begin{equation*}
    \boldsymbol{U}_{\beta }\left( \boldsymbol{\theta }_{0}\right)
    =\tsum \limits_{i=1}^{R}\frac{K_{i}}{K}\boldsymbol{W}_{i}\left( \boldsymbol{%
    \theta }_{0}\right) \boldsymbol{D}_{i}^{\left( \beta -1\right) }\left( 
    \boldsymbol{\theta }_{0}\right) \left(\mathbf{p}_i(\boldsymbol{\theta}_0) - \widehat{\mathbf{p}}_i \right) .
    \end{equation*}
    \label{def:rao_simple}
    \end{definition}
    
    The function $\boldsymbol{U}_{\beta}(\boldsymbol{\theta}_0)$ is known as the $\beta$-score vector of the model. It is a three-dimensional random vector defined in terms of the first derivatives of the DPD loss in \eqref{eq:DPD}. In the following theorem, we derive the first moments of the $\beta$-score vector $\boldsymbol{U}_{\beta}(\boldsymbol{\theta}_0).$
    
    \begin{theorem}
    \label{thm:u_beta_mean_var}
    Under the null hypothesis given in \eqref{W1} we have, 
    \begin{equation*}
    E\left[ \boldsymbol{U}_{\beta }\left( \boldsymbol{\theta }_{0}\right) \right]
    =\boldsymbol{0}_{3}\text{ and }Cov\left[ \boldsymbol{U}_{\beta }\left( 
    \boldsymbol{\theta }_{0}\right) \right] =\frac{1}{K}\boldsymbol{K}_{\beta
    }\left( \boldsymbol{\theta }_{0}\right) .
    \end{equation*}
    \end{theorem}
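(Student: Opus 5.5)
The proof uses the multinomial structure of the counts. Under $H_0^{(1)}$, for each stress condition $i$ the vector $(n_{i1},\dots,n_{i,L+1})$ is multinomial with $K_i$ trials and cell probabilities $\mathbf{p}_i(\boldsymbol{\theta}_0)$, and the $R$ vectors are mutually independent. Hence
\[
E[\widehat{\mathbf{p}}_i]=\mathbf{p}_i(\boldsymbol{\theta}_0), \qquad
Cov[\widehat{\mathbf{p}}_i]=\frac{1}{K_i}\Big[\mathbf{D}_i(\boldsymbol{\theta}_0)-\mathbf{p}_i(\boldsymbol{\theta}_0)\mathbf{p}_i(\boldsymbol{\theta}_0)^\top\Big],
\]
with zero cross-covariance between different $i$. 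Since $\boldsymbol{U}_\beta(\boldsymbol{\theta}_0)$ is an affine function of $(\widehat{\mathbf{p}}_1,\dots,\widehat{\mathbf{p}}_R)$ with nonrandom coefficient matrices $\frac{K_i}{K}\mathbf{W}_i(\boldsymbol{\theta}_0)^\top\mathbf{D}_i^{(\beta-1)}(\boldsymbol{\theta}_0)$, both moments follow by linearity. (Note the transpose on $\mathbf{W}_i$ needed for dimensional consistency, as in \eqref{eq:estimeq}.)

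\textbf{Mean.} Each summand contains $\mathbf{p}_i(\boldsymbol{\theta}_0)-E[\widehat{\mathbf{p}}_i]=\mathbf{0}_{L+1}$, so $E[\boldsymbol{U}_\beta(\boldsymbol{\theta}_0)]=\mathbf{0}_3$ immediately.

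\textbf{Covariance.} By independence across conditions,
\[
Cov[\boldsymbol{U}_\beta(\boldsymbol{\theta}_0)]=\sum_{i=1}^R\frac{K_i^2}{K^2}\cdot\frac{1}{K_i}\,\mathbf{W}_i^\top\mathbf{D}_i^{(\beta-1)}\big[\mathbf{D}_i-\mathbf{p}_i\mathbf{p}_i^\top\big]\mathbf{D}_i^{(\beta-1)}\mathbf{W}_i .
\]
The factor $\frac{K_i^2}{K^2}\cdot\frac{1}{K_i}$ equals $\frac{1}{K}\cdot\frac{K_i}{K}$. It then remains to simplify the central matrix. Because diagonal matrices commute, $\mathbf{D}_i^{(\beta-1)}\mathbf{D}_i\mathbf{D}_i^{(\beta-1)}=\mathbf{D}_i^{(2\beta-1)}$. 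Writing $\mathbf{p}_i=\mathbf{D}_i\mathbf{1}_{L+1}$ gives $\mathbf{D}_i^{(\beta-1)}\mathbf{p}_i=\mathbf{D}_i^{(\beta)}\mathbf{1}_{L+1}$, so
\[
\mathbf{D}_i^{(\beta-1)}\mathbf{p}_i\mathbf{p}_i^\top\mathbf{D}_i^{(\beta-1)}=\mathbf{D}_i^{(\beta)}\mathbf{1}_{L+1}\mathbf{1}_{L+1}^\top\mathbf{D}_i^{(\beta)} .
\]
Substituting these two identities reproduces exactly $\frac{1}{K}\mathbf{K}_\beta(\boldsymbol{\theta}_0)$ as defined in \eqref{eq:Kbeta}.

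\textbf{Main difficulty.} The argument is elementary, and the only real care needed is bookkeeping. Two points matter. First, the weights $K_i/K$ must combine with the per-condition multinomial variance factor $1/K_i$ to produce the overall factor $1/K$ times $K_i/K$. Second, the rank-one correction term must be rewritten in the $\mathbf{D}_i^{(\beta)}\mathbf{1}_{L+1}$ form used in \eqref{eq:Kbeta}. No asymptotics are required: both identities hold exactly for every finite $K$.
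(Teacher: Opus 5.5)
Your proposal is correct and follows essentially the same route as the paper's proof: unbiasedness of $\widehat{\mathbf{p}}_i$ for the mean, then the multinomial covariance $\frac{1}{K_i}\big[\mathbf{D}_i(\boldsymbol{\theta}_0)-\mathbf{p}_i(\boldsymbol{\theta}_0)\mathbf{p}_i(\boldsymbol{\theta}_0)^\top\big]$, independence across stress conditions, and the identities $\mathbf{D}_i^{(\beta-1)}\mathbf{D}_i\mathbf{D}_i^{(\beta-1)}=\mathbf{D}_i^{(2\beta-1)}$ and $\mathbf{D}_i^{(\beta-1)}\mathbf{p}_i=\mathbf{D}_i^{(\beta)}\mathbf{1}_{L+1}$ to reduce the sum to $\frac{1}{K}\mathbf{K}_\beta(\boldsymbol{\theta}_0)$. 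Your version is also slightly cleaner than the paper's: the appendix writes $\mathbf{D}_i^{(-1)}$ in the multinomial covariance where $\mathbf{D}_i$ is meant, whereas you use $\mathbf{D}_i$, and you place the transpose on $\mathbf{W}_i$ consistently with \eqref{eq:Kbeta}.
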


    \begin{proof}
        See Appendix \ref{app:proofs}.
    \end{proof}
    
    \begin{remark}
    Applying the previous result, and the Limit Central Theorem we can stablish that
    \begin{equation*}
    \sqrt{K}\boldsymbol{U}_{\beta }\left( \boldsymbol{\theta }_{0}\right) 
    \underset{K\rightarrow \infty }{\overset{L}{\rightarrow }}N(\boldsymbol{0}%
    _{3},\boldsymbol{K}_{\beta }\left( \boldsymbol{\theta }_{0}\right) ).
    \end{equation*}
    \end{remark}

    Once the $\beta$-score distribution is stablished, we derived the asymptotic distribution of the generalized Rao-test based on the WMDPDE with tuning parameter $\beta.$
    \begin{theorem}
    \label{thm:rao_simple_chisq}
    Under the null hypothesis given in \eqref{W1}, we have%
    \begin{equation*}
    R_{K}^{\beta }\left( \boldsymbol{\theta }_{0}\right) \underset{K\rightarrow
    \infty }{\overset{L}{\rightarrow }}\chi _{3}^{2}
    \end{equation*}
    \end{theorem}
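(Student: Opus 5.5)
The plan is to combine the asymptotic normality of the $\beta$-score (the Remark following Theorem~\ref{thm:u_beta_mean_var}) with the standard fact that a quadratic form in a Gaussian vector, taken with respect to the inverse of its covariance matrix, is chi-square distributed.

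\emph{Step 1 (asymptotic normality of the score).} For each stress condition, $K_i\widehat{\mathbf{p}}_i$ is multinomial with size $K_i$ and probability vector $\mathbf{p}_i(\boldsymbol{\theta}_0)$, and the $R$ groups are independent. I will take the allocation proportions $K_i/K=\pi_i$ as fixed while $K\to\infty$. The multivariate CLT then gives
\[
\sqrt{K_i}\,\big(\widehat{\mathbf{p}}_i-\mathbf{p}_i(\boldsymbol{\theta}_0)\big)\xrightarrow{\mathcal{L}} N_{L+1}\big(\mathbf{0},\,\mathbf{D}_i(\boldsymbol{\theta}_0)-\mathbf{p}_i(\boldsymbol{\theta}_0)\mathbf{p}_i(\boldsymbol{\theta}_0)^\top\big),
\]
jointly and independently across $i$. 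Next, $\sqrt{K}\,\boldsymbol{U}_\beta(\boldsymbol{\theta}_0)$ is the fixed linear map
\[
\sum_i \sqrt{\pi_i}\,\mathbf{W}_i(\boldsymbol{\theta}_0)^\top\mathbf{D}_i^{(\beta-1)}(\boldsymbol{\theta}_0)\,\sqrt{K_i}\,\big(\mathbf{p}_i(\boldsymbol{\theta}_0)-\widehat{\mathbf{p}}_i\big)
\]
applied to these vectors. Its limit is therefore Gaussian with mean zero. Since $\mathbf{D}_i^{(\beta-1)}\mathbf{p}_i=\mathbf{D}_i^{(\beta)}\mathbf{1}_{L+1}$ and $\mathbf{D}_i^{(\beta-1)}\mathbf{D}_i\mathbf{D}_i^{(\beta-1)}=\mathbf{D}_i^{(2\beta-1)}$, the covariance is exactly $\mathbf{K}_\beta(\boldsymbol{\theta}_0)$. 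This matches Theorem~\ref{thm:u_beta_mean_var}, so I can simply invoke that result and the subsequent Remark.

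\emph{Step 2 (nonsingularity of $\mathbf{K}_\beta$).} This is the one point needing care, since the statistic uses $\mathbf{K}_\beta(\boldsymbol{\theta}_0)^{-1}$. I would argue as follows. Consider the bracketed matrix $\mathbf{D}_i^{(2\beta-1)}-\mathbf{D}_i^{(\beta)}\mathbf{1}\mathbf{1}^\top\mathbf{D}_i^{(\beta)}$. Writing it as $\mathbf{D}_i^{(\beta-1)}(\mathbf{D}_i-\mathbf{p}_i\mathbf{p}_i^\top)\mathbf{D}_i^{(\beta-1)}$ shows it is positive semidefinite, with kernel spanned by $\mathbf{D}_i^{(1-\beta)}\mathbf{1}$.

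Next, each column of $\mathbf{W}_i$ sums to zero because $\sum_j p_{ij}=1$. So $\mathbf{W}_i v$ lies in that kernel only if $\mathbf{W}_i v = c\,\mathbf{D}_i^{(1-\beta)}\mathbf{1}$. Summing the components gives $0 = c\sum_j p_{ij}^{1-\beta}$, hence $c=0$. Positive definiteness of $\mathbf{K}_\beta$ therefore reduces to $\bigcap_i\ker\mathbf{W}_i=\{0\}$. That condition is the same one that makes $\mathbf{J}_\beta$ nonsingular, which is already assumed in Theorem~\ref{thm:asymp}; in practice it requires at least two distinct $\bar s_i$ and $L\ge 1$. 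I would state this as a standing assumption.

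\emph{Step 3 (conclusion).} Let $\mathbf{Z}_K=\sqrt{K}\,\mathbf{K}_\beta(\boldsymbol{\theta}_0)^{-1/2}\boldsymbol{U}_\beta(\boldsymbol{\theta}_0)^\top$. By Step 1 and the continuous mapping theorem, $\mathbf{Z}_K\xrightarrow{\mathcal{L}}N_3(\mathbf{0},\mathbf{I}_3)$. Then
\[
R_K^\beta(\boldsymbol{\theta}_0)=\mathbf{Z}_K^\top\mathbf{Z}_K\xrightarrow{\mathcal{L}}\chi^2_3.
\]
No Slutsky argument is needed, because $\mathbf{K}_\beta$ is evaluated at the known $\boldsymbol{\theta}_0$. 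If one instead plugged in an estimate, consistency from Theorem~\ref{thm:asymp} would handle it.

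The main obstacle I expect is Step 2, establishing invertibility of $\mathbf{K}_\beta(\boldsymbol{\theta}_0)$; the rest is a routine CLT plus continuous-mapping argument.
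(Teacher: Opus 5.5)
Your proposal is correct and follows the paper's own route. You use asymptotic normality of $\sqrt{K}\,\boldsymbol{U}_\beta(\boldsymbol{\theta}_0)$ with covariance $\boldsymbol{K}_\beta(\boldsymbol{\theta}_0)$ (Theorem~\ref{thm:u_beta_mean_var} and the Remark after it), then standardize by $\boldsymbol{K}_\beta(\boldsymbol{\theta}_0)^{-1/2}$ and apply the quadratic-form argument. Your Steps~1--2 (the correct multinomial covariance $\mathbf{D}_i-\mathbf{p}_i\mathbf{p}_i^\top$, and the reduction of invertibility of $\boldsymbol{K}_\beta(\boldsymbol{\theta}_0)$ to that of $\boldsymbol{J}_\beta(\boldsymbol{\theta}_0)$) make explicit what the paper takes for granted.

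One slip in an aside: two distinct $\bar s_i$ with $L\ge 1$ is not sufficient. Every row of $\mathbf{W}_i$ lies in $\mathrm{span}\{(1,\bar s_i,0)^\top,(0,0,1)^\top\}$ and the rows sum to zero, so $\mathrm{rank}\,\mathbf{W}_i\le\min(L,2)$; for example, $R=2$, $L=1$ makes $\boldsymbol{J}_\beta$ and $\boldsymbol{K}_\beta$ singular. Two distinct $\bar s_i$ together with $L\ge 2$ does suffice.
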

    
    \begin{proof}
        See Appendix \ref{app:proofs} 
    \end{proof}

    Therefore, the Rao-type test statistics with tuning parameter $\beta$ reject the null hypothesis in (\ref{W1}) if
    \begin{equation}
    R_{K}^{\beta
    }\left( \boldsymbol{\theta }_{0}\right) > \chi_{3,\gamma}^{2},
    \end{equation}
  where $\chi _{3,\gamma }^{2}$ is the $100\left( 1-\gamma \right) $ upper
    quantile of the chi-square distribution with 3 degrees of freedom.    
    
    \begin{remark}
        Denoting the components of $\boldsymbol{U}_{\beta}(\boldsymbol{\theta}_0)$ 
        by $U_{\beta}^{1}(\boldsymbol{\theta}_0)$, $U_{\beta}^{2}(\boldsymbol{\theta}_0)$, 
        and $U_{\beta}^{3}(\boldsymbol{\theta}_0)$, the Rao-type test statistics for 
        testing the hypotheses presented in (\ref{W2}) can be straightforwardly 
        obtained as
        \begin{equation*}
        R_{K}^{\beta}\left(\alpha_0\right) = 
        \frac{K\,\left(U_{\beta}^{1}(\boldsymbol{\theta}_0)\right)^2}
        {K_{\beta}^{11}(\boldsymbol{\theta}_0)}
        \text{ and }
        R_{K}^{\beta}\left(\alpha_1\right) = 
        \frac{K\,\left(U_{\beta}^{2}(\boldsymbol{\theta}_0)\right)^2}
        {K_{\beta}^{22}(\boldsymbol{\theta}_0)},
        \end{equation*}
        respectively, and they converge to a chi-square distribution 
        with 1 degree of freedom.
    \end{remark}

    \begin{remark}
    Let $\boldsymbol{U}_{\beta}^{1:2}(\boldsymbol{\theta}_0) = 
    \left(U_{\beta}^{1}(\boldsymbol{\theta}_0), 
    U_{\beta}^{2}(\boldsymbol{\theta}_0)\right)^\top$, and let $\boldsymbol{K}_{\beta}^{1:2,1:2}(\boldsymbol{\theta}_0)$
    denote the $2\times2$ upper-left submatrix of $\boldsymbol{K}_{\beta}(\boldsymbol{\theta}_0)$. The Rao-type 
    test statistics for testing the hypothesis presented in 
    (\ref{W3}) are given by
    \begin{equation*}
    R_{K}^{\beta}\left(\alpha_0,\alpha_1\right) = K\,
    \boldsymbol{U}_{\beta}^{1:2}(\boldsymbol{\theta}_0)^\top\,
    \Big[\boldsymbol{K}_{\beta}^{1:2,\,1:2}(\boldsymbol{\theta}_0)\Big]^{-1}\,
    \boldsymbol{U}_{\beta}^{1:2}(\boldsymbol{\theta}_0),
    \end{equation*}
    and it converges to a chi-square distribution with 2 degrees of freedom.
    \end{remark}

    \subsection{Composite null hypothesis}
    
    To define the Rao-type test for composite null hypothesis in \eqref{W5}, it is
    necessary to introduce the WMDPDE restricted to the null hypothesis of the
    form 
    \begin{equation*}
    \Theta _{0}=\left \{ \boldsymbol{\theta }\in \text{ }\Theta ~ | ~ \boldsymbol{h(\theta )}=\boldsymbol{0}_{r}\right \} .
    \end{equation*}
    As before, we denote 
    \begin{equation*}
    \boldsymbol{H}(\boldsymbol{\theta })\boldsymbol{=}\frac{\partial \boldsymbol{%
    h(\theta )}^{T}}{\partial \boldsymbol{\theta }}.
    \end{equation*}%
    The restricted WMDPDE (RWMDPDE), denoted by $\widetilde{\boldsymbol{\theta }}_{\beta }$, is defined as the minimizer of the DPD loss in \eqref{eq:DPD} over the parameter space $\Theta_0$ restricted by the null hypothesis,
    \begin{equation*}
    \widetilde{\boldsymbol{\theta }}_{\beta }=\arg \min_{\boldsymbol{\theta }%
    \text{ }\in \text{ }\Theta _{0}}\boldsymbol{H}_{\beta }(\boldsymbol{\theta }%
    ).
    \end{equation*}%
    In the next theorem presents the asymptotic distribution of the RWMDPDE, $\widetilde{\boldsymbol{\theta }}_{\beta }.$
    
    \begin{theorem}
    \label{thm:rmdpde_asymp}
    Let $\boldsymbol{\theta }_{0}$ denote the true value of the parameter, satisfying the null constraints 
    $\boldsymbol{h(\theta }_{0}\boldsymbol{)=0}_{r}$. The RMDPDE\ of $\boldsymbol{\theta ,}$ $\widetilde{\boldsymbol{\theta }}%
    _{\beta },$ obtained under the null hypothesis through the constraints $ \boldsymbol{h(\theta )=0}_{r}$ has the following asymptotic distribution
    \begin{equation*}
    \sqrt{K}\left( \boldsymbol{\widetilde{\boldsymbol{\theta }}_{\beta }-\theta }%
    _{0}\right) \underset{K\rightarrow \infty }{\overset{L}{\rightarrow }}N\left(
    \boldsymbol{0,\Sigma }\left( \boldsymbol{\theta }_{0}\right) \right)
    \end{equation*}%
    with 
    \begin{equation*}
    \boldsymbol{\Sigma }\left( \boldsymbol{\theta }_{0}\right) =P\boldsymbol{%
    \boldsymbol{(\boldsymbol{\theta }}}_{0}\boldsymbol{\boldsymbol{)}J}_{\beta
    }\left( \boldsymbol{\theta }_{0}\right) ^{-1}\boldsymbol{K}_{\beta }\left( 
    \boldsymbol{\theta }_{0}\right) P\boldsymbol{\boldsymbol{(\boldsymbol{\theta 
    }}}_{0}\boldsymbol{\boldsymbol{)}J}_{\beta }\left( \boldsymbol{\theta }%
    _{0}\right) ^{-1}.
    \end{equation*}
    \end{theorem}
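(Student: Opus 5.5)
The plan is the standard Lagrange-multiplier argument for restricted M-estimators, adapted to the weighted DPD loss \eqref{eq:DPD}. Here $\boldsymbol{P}(\boldsymbol{\theta}_0)$ is understood as the projection-type matrix
\[
\boldsymbol{P}(\boldsymbol{\theta}_0)=\boldsymbol{I}_3-\boldsymbol{J}_\beta(\boldsymbol{\theta}_0)^{-1}\boldsymbol{H}(\boldsymbol{\theta}_0)\big[\boldsymbol{H}(\boldsymbol{\theta}_0)^T\boldsymbol{J}_\beta(\boldsymbol{\theta}_0)^{-1}\boldsymbol{H}(\boldsymbol{\theta}_0)\big]^{-1}\boldsymbol{H}(\boldsymbol{\theta}_0)^T .
\]
It is well defined because $\boldsymbol{H}$ has full rank $r$ and $\boldsymbol{J}_\beta(\boldsymbol{\theta}_0)$ is nonsingular, as assumed in Theorem \ref{thm:asymp}.

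\textbf{Step 1: first-order conditions.} Consistency of $\widetilde{\boldsymbol{\theta}}_\beta$ for $\boldsymbol{\theta}_0$ follows exactly as for the unrestricted WMDPDE in Theorem \ref{thm:asymp}. The loss converges uniformly in a neighbourhood to $\sum_i \pi_i d_\beta(\mathbf{p}_i(\boldsymbol{\theta}_0),\mathbf{p}_i(\boldsymbol{\theta}))$, which is uniquely minimised at $\boldsymbol{\theta}_0\in\Theta_0$. Hence, with probability tending to one, $\widetilde{\boldsymbol{\theta}}_\beta$ is an interior point of $\Theta_0$. There it satisfies the Lagrangian equations
\[
\tfrac{1+\beta}{1}\,\boldsymbol{U}_\beta(\widetilde{\boldsymbol{\theta}}_\beta)+\boldsymbol{H}(\widetilde{\boldsymbol{\theta}}_\beta)\boldsymbol{\lambda}_K=\boldsymbol{0}_3,\qquad \boldsymbol{h}(\widetilde{\boldsymbol{\theta}}_\beta)=\boldsymbol{0}_r,
\]
where $\boldsymbol{U}_\beta$ is the $\beta$-score of Definition \ref{def:rao_simple}, i.e.\ the gradient of \eqref{eq:DPD} up to the constant factor $(1+\beta)$. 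That factor can be absorbed into $\boldsymbol{\lambda}_K$.

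\textbf{Step 2: linearisation.} Expand both equations around $\boldsymbol{\theta}_0$. For the score, the derivative of $\boldsymbol{U}_\beta(\boldsymbol{\theta})$ at $\boldsymbol{\theta}_0$ converges in probability to $\boldsymbol{J}_\beta(\boldsymbol{\theta}_0)$. This holds because the terms involving second derivatives of $p_{ij}$ are multiplied by $\mathbf{p}_i(\boldsymbol{\theta}_0)-\widehat{\mathbf{p}}_i=o_p(1)$, using twice continuous differentiability. This gives
\[
\sqrt K\,\boldsymbol{U}_\beta(\boldsymbol{\theta}_0)+\boldsymbol{J}_\beta(\boldsymbol{\theta}_0)\sqrt K(\widetilde{\boldsymbol{\theta}}_\beta-\boldsymbol{\theta}_0)+\boldsymbol{H}(\boldsymbol{\theta}_0)\sqrt K\boldsymbol{\lambda}_K=o_p(1).
\]
For the constraint, $\boldsymbol{h}(\boldsymbol{\theta}_0)=\boldsymbol{0}_r$ yields
\[
\boldsymbol{H}(\boldsymbol{\theta}_0)^T\sqrt K(\widetilde{\boldsymbol{\theta}}_\beta-\boldsymbol{\theta}_0)=o_p(1).
\]
Solving this block linear system, after first showing $\sqrt K(\widetilde{\boldsymbol{\theta}}_\beta-\boldsymbol{\theta}_0)=O_p(1)$ and $\sqrt K\boldsymbol{\lambda}_K=O_p(1)$, gives
\[
\sqrt K(\widetilde{\boldsymbol{\theta}}_\beta-\boldsymbol{\theta}_0)=-\boldsymbol{P}(\boldsymbol{\theta}_0)\boldsymbol{J}_\beta(\boldsymbol{\theta}_0)^{-1}\sqrt K\,\boldsymbol{U}_\beta(\boldsymbol{\theta}_0)+o_p(1).
\]

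\textbf{Step 3: limit law.} By Theorem \ref{thm:u_beta_mean_var} and the subsequent remark, $\sqrt K\boldsymbol{U}_\beta(\boldsymbol{\theta}_0)\to N(\boldsymbol{0}_3,\boldsymbol{K}_\beta(\boldsymbol{\theta}_0))$. This uses multinomial CLTs for each $\widehat{\mathbf{p}}_i$, independence across stress conditions, and $K_i/K=\pi_i$ fixed. Slutsky's theorem then yields normality with covariance
\[
\boldsymbol{P}\boldsymbol{J}_\beta^{-1}\boldsymbol{K}_\beta\boldsymbol{J}_\beta^{-1}\boldsymbol{P}^T .
\]
A short algebraic check shows $\boldsymbol{J}_\beta^{-1}\boldsymbol{P}^T=\boldsymbol{P}\boldsymbol{J}_\beta^{-1}$, using the symmetry of $\boldsymbol{J}_\beta$. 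Therefore the covariance equals $\boldsymbol{P}\boldsymbol{J}_\beta^{-1}\boldsymbol{K}_\beta\boldsymbol{P}\boldsymbol{J}_\beta^{-1}$, which is the form in the statement.

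\textbf{Main obstacle.} I expect Step 2's control of the remainders to be the hardest part: establishing $\sqrt K$-tightness of both $\widetilde{\boldsymbol{\theta}}_\beta$ and the multiplier, and the uniform convergence of the Hessian on a shrinking neighbourhood. My first approach would be to invert the nonsingular limiting bordered matrix
\[
\begin{pmatrix}\boldsymbol{J}_\beta & \boldsymbol{H}\\ \boldsymbol{H}^T & \boldsymbol{0}\end{pmatrix},
\]
which is nonsingular by the rank condition, and deduce tightness from the $O_p(1)$ behaviour of $\sqrt K\boldsymbol{U}_\beta(\boldsymbol{\theta}_0)$. In this finite-cell multinomial setting, continuity of $p_{ij}$ and its derivatives suffices for these remainder arguments.
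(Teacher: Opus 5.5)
Your proposal is correct and follows essentially the same route as the paper: Lagrangian first-order conditions, linearisation of $\boldsymbol{U}_\beta$ and $\boldsymbol{h}$ around $\boldsymbol{\theta}_0$, and inversion of the bordered block system to get $\sqrt K(\widetilde{\boldsymbol{\theta}}_\beta-\boldsymbol{\theta}_0)=\pm\boldsymbol{P}(\boldsymbol{\theta}_0)\boldsymbol{J}_\beta(\boldsymbol{\theta}_0)^{-1}\sqrt K\,\boldsymbol{U}_\beta(\boldsymbol{\theta}_0)+o_p(1)$, followed by the CLT for the $\beta$-score and Slutsky. The differences are cosmetic or in your favour. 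Your derivative $+\boldsymbol{J}_\beta$ for $\boldsymbol{U}_\beta$ matches its definition with $\mathbf{p}_i(\boldsymbol{\theta})-\widehat{\mathbf{p}}_i$ (the paper writes $-\boldsymbol{J}_\beta$), and this only flips the sign of the linear representation, not the covariance. Your bordered-matrix tightness argument and the check $\boldsymbol{J}_\beta^{-1}\boldsymbol{P}^T=\boldsymbol{P}\boldsymbol{J}_\beta^{-1}$ supply steps the paper merely asserts.
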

    
    \begin{proof}
    See Appendix \ref{app:proofs}.
    \end{proof}
    
    \noindent  Now we are going to develop Rao-type tests for testing the composite null hypothesis $\Theta _{0}=\left \{ 
    \boldsymbol{\theta }\in \text{ }\Theta \text{ } |\text{ }\boldsymbol{h(\theta
    )}=\boldsymbol{0}_{r}\right \} $ given in \eqref{W5}.
    
    \begin{definition}
    For testing the null hypothesis given in \eqref{W5}, the Rao-type test statistics are given by 
    \begin{equation}
    R_{K}^{\beta }\left( \boldsymbol{\widetilde{\boldsymbol{\theta }}_{\beta }}%
    \right) =K\boldsymbol{U}_{\beta }\left( \boldsymbol{\widetilde{\boldsymbol{%
    \theta }}_{\beta }}\right) ^{T}\boldsymbol{Q}^{\beta }\left( \boldsymbol{%
    \widetilde{\boldsymbol{\theta }}_{\beta }}\right) \left[ \boldsymbol{Q}%
    ^{\beta }\left( \boldsymbol{\widetilde{\boldsymbol{\theta }}_{\beta }}%
    \right) ^{T}\boldsymbol{K}^{\beta }\left( \boldsymbol{\widetilde{\boldsymbol{%
    \theta }}_{\beta }}\right) \boldsymbol{Q}^{\beta }\left( \boldsymbol{%
    \widetilde{\boldsymbol{\theta }}_{\beta }}\right) \right] ^{-1}\boldsymbol{Q}%
    ^{\beta }\left( \boldsymbol{\widetilde{\boldsymbol{\theta }}_{\beta }}%
    \right) ^{T}\boldsymbol{U}_{\beta }\left( \boldsymbol{\widetilde{\boldsymbol{%
    \theta }}_{\beta }}\right)  \label{5}
    \end{equation}%
    with 
    \begin{equation}
    \boldsymbol{Q}^{\beta } \left( \boldsymbol{\theta }
    _{0}\right)  =\boldsymbol{J}_{\beta }\left( \boldsymbol{\theta }%
    _{0}\right) ^{-1}\boldsymbol{H\boldsymbol{(\boldsymbol{\theta }}}_{0}%
    \boldsymbol{\boldsymbol{)}}\left( \boldsymbol{H\boldsymbol{(\boldsymbol{%
    \theta }}}_{0}\boldsymbol{\boldsymbol{)}}^{T}\boldsymbol{\boldsymbol{J}%
    _{\beta }\left( \boldsymbol{\theta }_{0}\right) ^{-1}H\boldsymbol{(%
    \boldsymbol{\theta }}}_{0}\boldsymbol{\boldsymbol{)}}\right) ^{-1}  \label{6}
    \end{equation}
    \label{def:rao_composite}
    \end{definition}
    
    In the following theorem, we present the asymptotic distribution of Rao-type
    test statistic, $R_{K}^{\beta }\left( \boldsymbol{\widetilde{\boldsymbol{%
    \theta }}_{\beta }}\right) .$
    
    \begin{theorem}
    \label{thm:rao_composite_chisq}
    Under the null hypothesis, $H_{0}:\boldsymbol{h\boldsymbol{(\boldsymbol{%
    \theta })=0}}_{r},$ the Rao-type test statistic, $R_{K}^{\beta }\left( 
    \boldsymbol{\widetilde{\boldsymbol{\theta }}_{\beta }}\right) ~$converges in
    law $\ $to a chi-square distribution with $r$ degrees of freedom, i.e., 
    \begin{equation*}
    R_{K}^{\beta }\left( \boldsymbol{\widetilde{\boldsymbol{\theta }}_{\beta }}%
    \right) \underset{K\rightarrow \infty }{\overset{L}{\rightarrow }}\chi
    _{r}^{2}
    \end{equation*}%
    where $r$ is the number of independent constraints imposed by $\boldsymbol{h%
    \boldsymbol{(\boldsymbol{\theta })=0}}_{r}.$
    \end{theorem}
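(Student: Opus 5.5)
The plan is to reduce $R_K^\beta(\widetilde{\boldsymbol{\theta}}_\beta)$ to a quadratic form in the asymptotically normal vector $\sqrt{K}\boldsymbol{U}_\beta(\boldsymbol{\theta}_0)$. The two ingredients are the Taylor expansion of the $\beta$-score around $\boldsymbol{\theta}_0$ and the first-order conditions defining the RWMDPDE.

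\emph{Step 1: expand the $\beta$-score.} Since $\boldsymbol{U}_\beta(\boldsymbol{\theta})$ is the gradient of $\text{DPD}_\beta(\boldsymbol{\theta})$ up to a positive constant $(1+\beta)$, its Jacobian converges in probability to $\boldsymbol{J}_\beta(\boldsymbol{\theta}_0)$. This uses $\widehat{\mathbf{p}}_i\to\mathbf{p}_i(\boldsymbol{\theta}_0)$, so the terms involving second derivatives of $p_{ij}$ vanish. Theorem \ref{thm:rmdpde_asymp} gives $\sqrt{K}$-consistency of $\widetilde{\boldsymbol{\theta}}_\beta$, and hence
\begin{equation*}
\sqrt{K}\boldsymbol{U}_\beta(\widetilde{\boldsymbol{\theta}}_\beta)=\sqrt{K}\boldsymbol{U}_\beta(\boldsymbol{\theta}_0)+\boldsymbol{J}_\beta(\boldsymbol{\theta}_0)\sqrt{K}(\widetilde{\boldsymbol{\theta}}_\beta-\boldsymbol{\theta}_0)+o_P(1).
\end{equation*}

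\emph{Step 2: use the restricted first-order conditions.} The Lagrangian conditions for the RWMDPDE give $\boldsymbol{U}_\beta(\widetilde{\boldsymbol{\theta}}_\beta)+\boldsymbol{H}(\widetilde{\boldsymbol{\theta}}_\beta)\boldsymbol{\lambda}_K=\boldsymbol{0}$, and the constraint $\boldsymbol{h}(\widetilde{\boldsymbol{\theta}}_\beta)=\boldsymbol{0}_r$ gives $\boldsymbol{H}(\boldsymbol{\theta}_0)^T\sqrt{K}(\widetilde{\boldsymbol{\theta}}_\beta-\boldsymbol{\theta}_0)=o_P(1)$. Solving the resulting linear system, which is exactly the system used in the proof of Theorem \ref{thm:rmdpde_asymp}, yields
\begin{equation*}
\sqrt{K}\boldsymbol{U}_\beta(\widetilde{\boldsymbol{\theta}}_\beta)=\boldsymbol{H}_0\big(\boldsymbol{H}_0^T\boldsymbol{J}_\beta^{-1}\boldsymbol{H}_0\big)^{-1}\boldsymbol{H}_0^T\boldsymbol{J}_\beta^{-1}\sqrt{K}\boldsymbol{U}_\beta(\boldsymbol{\theta}_0)+o_P(1),
\end{equation*}
where all matrices are evaluated at $\boldsymbol{\theta}_0$ and $\boldsymbol{H}_0=\boldsymbol{H}(\boldsymbol{\theta}_0)$. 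Note that $\boldsymbol{Q}^{\beta T}\boldsymbol{H}_0=\boldsymbol{I}_r$. Therefore
\begin{equation*}
\boldsymbol{Q}^{\beta T}\sqrt{K}\boldsymbol{U}_\beta(\widetilde{\boldsymbol{\theta}}_\beta)=\big(\boldsymbol{H}_0^T\boldsymbol{J}_\beta^{-1}\boldsymbol{H}_0\big)^{-1}\boldsymbol{H}_0^T\boldsymbol{J}_\beta^{-1}\sqrt{K}\boldsymbol{U}_\beta(\boldsymbol{\theta}_0)+o_P(1)=\boldsymbol{Q}^{\beta T}\sqrt{K}\boldsymbol{U}_\beta(\boldsymbol{\theta}_0)+o_P(1).
\end{equation*}

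\emph{Step 3: conclude.} By the remark following Theorem \ref{thm:u_beta_mean_var}, $\sqrt{K}\boldsymbol{U}_\beta(\boldsymbol{\theta}_0)\to N(\boldsymbol{0}_3,\boldsymbol{K}_\beta(\boldsymbol{\theta}_0))$. The $r$-vector $\boldsymbol{Z}_K=\boldsymbol{Q}^{\beta T}\sqrt{K}\boldsymbol{U}_\beta(\widetilde{\boldsymbol{\theta}}_\beta)$ is therefore asymptotically $N(\boldsymbol{0}_r,\boldsymbol{Q}^{\beta T}\boldsymbol{K}_\beta\boldsymbol{Q}^\beta)$. This covariance is nonsingular because $\boldsymbol{Q}^\beta$ has rank $r$ (rank of $\boldsymbol{H}$) and $\boldsymbol{K}_\beta$ is positive definite. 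By consistency of $\widetilde{\boldsymbol{\theta}}_\beta$ and continuity, $\boldsymbol{Q}^\beta(\widetilde{\boldsymbol{\theta}}_\beta)$ and $\boldsymbol{K}_\beta(\widetilde{\boldsymbol{\theta}}_\beta)$ converge in probability to their values at $\boldsymbol{\theta}_0$. Slutsky's theorem then shows that $R_K^\beta(\widetilde{\boldsymbol{\theta}}_\beta)=\boldsymbol{Z}_K^T[\boldsymbol{Q}^{\beta T}\boldsymbol{K}_\beta\boldsymbol{Q}^\beta]^{-1}\boldsymbol{Z}_K+o_P(1)$ converges to $\chi^2_r$, as the standard quadratic form of a nondegenerate normal vector.

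The main obstacle is Step 2, namely handling the Lagrange multiplier and justifying the score linearization at a random point. I would first show that $\sqrt{K}\boldsymbol{\lambda}_K=O_P(1)$ by premultiplying the linearized equation by $\boldsymbol{H}_0^T\boldsymbol{J}_\beta^{-1}$. I would control the remainder using the twice continuous differentiability of $p_{ij}$ assumed in Theorem \ref{thm:asymp}.
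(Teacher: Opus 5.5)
Your proposal is correct and follows essentially the same route as the paper. You linearize the $\beta$-score at $\widetilde{\boldsymbol{\theta}}_\beta$ and use the Lagrangian and constraint system from the proof of Theorem~\ref{thm:rmdpde_asymp} to obtain $\sqrt{K}\boldsymbol{U}_\beta(\widetilde{\boldsymbol{\theta}}_\beta)=\boldsymbol{H}_0(\boldsymbol{H}_0^T\boldsymbol{J}_\beta^{-1}\boldsymbol{H}_0)^{-1}\boldsymbol{H}_0^T\boldsymbol{J}_\beta^{-1}\sqrt{K}\boldsymbol{U}_\beta(\boldsymbol{\theta}_0)+o_P(1)$, premultiply by $\boldsymbol{Q}^{\beta T}$ to recover $\boldsymbol{Q}^{\beta T}\sqrt{K}\boldsymbol{U}_\beta(\boldsymbol{\theta}_0)+o_P(1)$, and conclude with the CLT for the $\beta$-score and a nondegenerate quadratic form.

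The differences are minor and in your favour:
\begin{enumerate}
\item You shortcut the paper's computation of $\boldsymbol{B}$ by using $\boldsymbol{Q}^{\beta T}\boldsymbol{H}_0=\boldsymbol{I}_r$.
\item Your sign $+\boldsymbol{J}_\beta$ for the Jacobian is the one consistent with defining $\boldsymbol{U}_\beta$ through $\mathbf{p}_i(\boldsymbol{\theta})-\widehat{\mathbf{p}}_i$. The paper's $-\boldsymbol{J}_\beta$ corresponds to the opposite convention, and the final representation is unaffected by the sign.
\item You justify $\sqrt{K}\boldsymbol{\lambda}_K=O_P(1)$, which the paper only asserts. You also justify plugging $\widetilde{\boldsymbol{\theta}}_\beta$ into $\boldsymbol{Q}^\beta$ and $\boldsymbol{K}_\beta$, which the paper glosses over.
\end{enumerate}
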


    \begin{proof}
    See Appendix \ref{app:proofs}.
    \end{proof}

    \section{Influence function for the Wald and Rao-type test statistics}
    \label{sec:IF}
    Let us denote $F_{\boldsymbol{\theta }}^{l}$ for the assumed distribution of 
    $l$-th stress condition with mass function $\mathbf{p}_{l}\left(\boldsymbol{\theta}\right) =\left( p_{l1}\left(\boldsymbol{\theta}\right) \right.$ $\left.,...,p_{lL+1}\left( \boldsymbol{\theta }\right) \right) ^{T},$ $l=1,...,R,$
    under the cyclic-stress ALT and interval monitoring with log-normal
    lifetimes and we also denote $\boldsymbol{F}_{\boldsymbol{\theta }}=\left(
    F_{\boldsymbol{\theta }}^{1},...,F_{\boldsymbol{\theta }}^{R}\right) .$ Let $%
    G^{l}$ denote the true distribution underlying the data with mass function $%
    \boldsymbol{g}^{l}=\left( g_{l1},...,g_{lR}\right) ^{T},l=1,...,R$ and we
    write, 
    \begin{equation*}
    \boldsymbol{G=}\left( G^{1},...,G^{R}\right) \text{ and }\boldsymbol{g}%
    =\left( \boldsymbol{g}^{1},...,\boldsymbol{g}^{R}\right) .
    \end{equation*}%
    The minimum density power divergence statistical functional,$\boldsymbol{\ T}%
    _{\beta }\left( \boldsymbol{G}\right) ,$ is defined as the minimizer of the
    weighted DPD between the probability vectors 
    \begin{equation*}
    \mathbf{p}_{l}\left( \boldsymbol{\theta }\right) =\left( p_{l1}\left( 
    \boldsymbol{\theta }\right) ,...,p_{lL+1}\left( \boldsymbol{\theta }\right)
    \right) ^{T}\text{ and }\boldsymbol{g}^{l}=\left( g_{l1},...,g_{lR}\right)
    ^{T},
    \end{equation*}%
    $l=1,...,R,$ i.e.$,$%
    \begin{equation*}
    \tsum\limits_{l=1}^{R}\frac{K_{l}}{K}d_{\beta }\left( \boldsymbol{g}^{l},%
    \boldsymbol{T}_{\beta }\left( \boldsymbol{G}\right) \right) =\min_{%
    \boldsymbol{\theta }\text{ }\in \text{ }\Theta }\tsum\limits_{l=1}^{R}\frac{%
    K_{l}}{K}d_{\beta }\left( \boldsymbol{g}^{l},\mathbf{p}_{l}\left( 
    \boldsymbol{\theta }\right) \right) ,
    \end{equation*}%
    Let $\boldsymbol{\theta }^{T}\boldsymbol{=T}_{\beta }\left(
    G^{1},...,G^{R}\right) $ be the minimum weighted density power divergence
    functional with contamination in all $R$ stress. Consider a contaminated
    version of \ the true lifetime distribution $G^{l}$ by 
    \begin{equation*}
    G_{\varepsilon }^{l}=(1-\varepsilon )G^{l}+\varepsilon \Delta _{t_{0}^{l}}
    \end{equation*}%
    with $\varepsilon $ the contamination proportion and $\Delta _{t_{0}^{l}}$
    being the degenerate distribution at the contamination point $t_{0}^{l}$. In the
    model under consideration we shall consider only a cell contamination in the 
    $l$-th stress condition, and so the contamination for $t_{0}^{l}$ should have
    all elements equal to zero except for only one component. In the following, we
    shall assume that%
    \begin{equation}
    G_{\varepsilon }^{l}=(1-\varepsilon )F_{\boldsymbol{\theta }%
    _{0}}^{l}+\varepsilon \Delta _{t_{0}^{l}}  \label{q}
    \end{equation}%
    being $\boldsymbol{\theta }_{0}$ the true value of the unknown parameter.
    
    We shall represent by 
    \begin{equation*}
    \boldsymbol{g}_{\varepsilon }^{l}=(1-\varepsilon )\boldsymbol{g}%
    ^{l}+\varepsilon \boldsymbol{e}_{t_{0}^{l}}
    \end{equation*}%
    the resulting multinomial vector associated to $G^{l}$ in which $\boldsymbol{%
    e}_{t_{0}^{l}}$ has all elements zero except for the interval containing the
    point perturbation $t_{0}^{l}.$ In accordance with (\ref{q}), we have 
    \begin{equation*}
    \boldsymbol{g}_{\varepsilon }^{l}=(1-\varepsilon )\mathbf{p}_{l}\left( 
    \boldsymbol{\theta }_{0}\right) +\varepsilon \boldsymbol{e}_{t_{0}^{l}}.
    \end{equation*}
    
    Let $\boldsymbol{\theta }_{l,\varepsilon }^{T}=\boldsymbol{T}_{\beta }\left(
    G^{1},.,G^{l-1},G_{\varepsilon }^{l},G^{l+1},.,G^{R}\right) $ be the minimum
    weighted density power divergence functional with contamination only in the $%
    l-$th stress condition, the influence function (IF) associated to the WMDPDE
    (see for more details \cite{jaenada2026robust})  is defined by%
    \begin{equation*}
    \text{IF}(t_{0}^{l},\boldsymbol{T}_{\beta },\boldsymbol{F}_{\boldsymbol{\theta }%
    _{0}})=\boldsymbol{J}_{\beta }\left( \boldsymbol{\theta }_{0}\right) ^{-1}%
    \frac{K_{l}}{K}\tsum\limits_{j=1}^{L+1}p_{lj}\left( \boldsymbol{\theta }%
    _{0}\right) ^{\beta -1}\left( \frac{\partial p_{lj}\left( \boldsymbol{\theta 
    }\right) }{\partial \boldsymbol{\theta }}\right) _{\boldsymbol{\theta }=%
    \boldsymbol{\theta }_{0}}\left( -p_{lj}\left( \boldsymbol{\theta }%
    _{0}\right) +\boldsymbol{e}_{t_{0}^{l}}\right) 
    \end{equation*}%
    and the IF for the WMDPDE in all the $R$ stress conditions is given by,%
    \begin{equation*}
    \text{IF}(t_{0}^{1},...,t_{0}^{R},\boldsymbol{T}_{\beta },\boldsymbol{F}_{%
    \boldsymbol{\theta }_{0}}))=\boldsymbol{J}_{\beta }\left( \boldsymbol{\theta 
    }_{0}\right) ^{-1}\tsum\limits_{l=1}^{R}\frac{K_{l}}{K}\tsum%
    \limits_{j=1}^{L+1}p_{lj}\left( \boldsymbol{\theta }_{0}\right) ^{\beta
    -1}\left( \frac{\partial p_{lj}\left( \boldsymbol{\theta }\right) }{\partial 
    \boldsymbol{\theta }}\right) _{\boldsymbol{\theta }=\boldsymbol{\theta }%
    _{0}}\left( -p_{lj}\left( \boldsymbol{\theta }_{0}\right) +\boldsymbol{e}%
    _{t_{0}^{l}}\right) .
    \end{equation*}
    
    \subsection{Influence function for the Wald-type test statistics}
    
    \subsubsection{Simple null hypothesis}
    
    The statistical functional corresponding to the Wald-type test statistics, $%
    W_{K,0}^{\beta }(\boldsymbol{\theta }_{0})$ , \ under the cyclic stept ALT, 
    intervals monitoring and log-normal lifetime, \ \ for testing simple and
    composite null hypothesis, respectively, are given by (ignoring multipliers
    K) by 
    \begin{equation*}
    W_{K,0}^{\beta }(\boldsymbol{G})=\left( \boldsymbol{T}_{\beta }(\boldsymbol{G%
    })-\boldsymbol{\theta }_{0}\right) ^{T}\boldsymbol{\Sigma }_{\beta }\left( 
    \boldsymbol{\theta }_{0}\right) ^{-1}\left( \boldsymbol{T}_{\beta }(%
    \boldsymbol{G})-\boldsymbol{\theta }_{0}\right) 
    \end{equation*}
    
    being $\boldsymbol{\Sigma }_{\beta }\left( \boldsymbol{\theta }_{0}\right) =%
    \boldsymbol{J}\left( \boldsymbol{\theta }_{0}\right) ^{-1}\boldsymbol{K}%
    \left( \boldsymbol{\theta }_{0}\right) \boldsymbol{J}\left( \boldsymbol{%
    \theta }_{0}\right) ^{-1}$. If we denote, 
    \begin{equation*}
    \boldsymbol{G}^{\varepsilon }\boldsymbol{=}\left(
    G_{1},...,G_{l-1},G_{l}^{\varepsilon },G_{l+1},..,G_{R}\right) 
    \end{equation*}%
    the first order influence function, with contamination only in the $l$-th
    stress condition, is defined by,%
    \begin{eqnarray*}
    \text{IF}(t_{0}^{l},\boldsymbol{W}_{K,0}^{\beta },\boldsymbol{F}_{\boldsymbol{%
    \theta }_{0}}) &=&\left( \frac{\partial W_{K,0}^{\beta }(\boldsymbol{G}%
    ^{\varepsilon })}{\partial \varepsilon }\right) _{\varepsilon =0}
    \end{eqnarray*}%
    \begin{eqnarray*}
    &=&2\left( \boldsymbol{T}_{\beta }(\boldsymbol{G}^{\varepsilon })-%
    \boldsymbol{\theta }_{0}\right) ^{T}\boldsymbol{\Sigma }_{\beta }\left( 
    \boldsymbol{\theta }_{0}\right) ^{-1}
    \end{eqnarray*}%
    which, when evaluated at the null hypothesis and contamination in the $l$-th
    direction, we get $\boldsymbol{T}_{\beta }(\boldsymbol{F}_{\boldsymbol{%
    \theta }_{0}})=\boldsymbol{\theta }_{0}.$ Therefore, $\text{IF}(t_{0}^{l},%
    \boldsymbol{W}_{K,0}^{\beta },\boldsymbol{F}_{\boldsymbol{\theta }_{0}})=0$
    and the IF analysis based on the first derivative of $W_{K,0}^{\beta }(%
    \boldsymbol{G}^{\varepsilon }$) is not adequate to quantify the robustness
    of this statistics. Then we are going to consider the second order IF of $%
    W_{K,0}^{\beta },$ i.e.,%
    \begin{equation*}
    \text{IF}_{2}(t_{0}^{l},\boldsymbol{W}_{K,0}^{\beta },\boldsymbol{F}_{\boldsymbol{%
    \theta }_{0}}))=\left( \frac{\partial ^{2}W_{K,0}^{\beta }(\boldsymbol{G}%
    ^{\varepsilon })}{\partial \varepsilon ^{2}}\right) _{\varepsilon =0}.
    \end{equation*}%
    But, 
    \begin{eqnarray*}
    \frac{\partial ^{2}W_{K,0}^{\beta }(\boldsymbol{G}_{\varepsilon })}{\partial
    \varepsilon ^{2}} &=&2\text{IF}(,\boldsymbol{T}_{\beta }(\boldsymbol{G}))%
    \boldsymbol{\Sigma }_{\beta }\left( \boldsymbol{\theta }_{0}\right)
    ^{-1}\text{IF}(t_{0}^{l},\boldsymbol{T}_{\beta }(\boldsymbol{G})) \\
    &&+2\left( \boldsymbol{T}_{\beta }(\boldsymbol{G})-\boldsymbol{\theta }%
    _{0}\right) ^{T}\boldsymbol{\Sigma }_{\beta }^{\ast }\left( \boldsymbol{%
    \theta }_{0}\right) ^{-1}\text{IF}(t_{0}^{l},\boldsymbol{T}_{\beta }(\boldsymbol{G})
    \end{eqnarray*}%
    and $\boldsymbol{T}_{\beta }(\boldsymbol{F}_{\boldsymbol{\theta }_{0}})=%
    \boldsymbol{\theta }_{0},$ therefore, 
    \begin{eqnarray*}
    \text{IF}_{2}(t_{0}^{l},\boldsymbol{W}_{K,0}^{\beta },\boldsymbol{F}_{\boldsymbol{%
    \theta }_{0}}) &=&\left( \frac{\partial ^{2}W_{K,0}^{\beta }(\boldsymbol{G}%
    _{\varepsilon })}{\partial \varepsilon ^{2}}\right) _{\varepsilon =0}. \\
    &=&2\text{IF}(t_{0}^{l},\boldsymbol{T}_{\beta },\boldsymbol{F}_{\boldsymbol{\theta }%
    _{0}})\boldsymbol{\Sigma }_{\beta }\left( \boldsymbol{\theta }_{0}\right)
    ^{-1}\text{IF}(t_{0}^{l},\text{IF}(t_{0}^{l},\boldsymbol{T}_{\beta },\boldsymbol{F}_{%
    \boldsymbol{\theta }_{0}}).
    \end{eqnarray*}%
    In a similar way we can derive the first and second order IF of $%
    W_{K,0}^{\beta }(\boldsymbol{G})$ for contamination in all directions $%
    \left( t_{0}^{1},...,t_{0}^{R}\right) $ at $\boldsymbol{G}=\boldsymbol{F}_{%
    \boldsymbol{\theta }_{0}}$ obtaining%
    \begin{eqnarray*}
    \text{IF}_{2}(t_{0}^{1},...,t_{0}^{R},\boldsymbol{W}_{K}^{\beta },,\boldsymbol{F}_{%
    \boldsymbol{\theta }_{0}}) &=&\left( \frac{\partial ^{2}W_{K}^{\beta }(%
    \boldsymbol{G}_{\varepsilon })}{\partial \varepsilon ^{2}}\right)
    _{\varepsilon =0} \\
    &=&2\text{IF}(t_{0}^{1},...,t_{0}^{R},\boldsymbol{T}_{\beta },\boldsymbol{F}_{%
    \boldsymbol{\theta }_{0}})\boldsymbol{\Sigma }_{\beta }\left( \boldsymbol{%
    \theta }_{0}\right) ^{-1}\text{IF}(t_{0}^{1},...,t_{0}^{R},\boldsymbol{T}_{\beta },%
    \boldsymbol{F}_{\boldsymbol{\theta }_{0}}).
    \end{eqnarray*}%
    
    We can derive in a very similar way the first and second order IFs of the
    Wlad-type functional $W_{K}^{\beta }(\boldsymbol{G})$ for composite null
    hypotheses. In the same way that previously the first order IF for
    contamination in either or all the directions are both identically zero,
    i.e., 
    \begin{equation*}
    \text{IF}(t_{0}^{l},\boldsymbol{W}_{K}^{\beta },\boldsymbol{G}))=0\text{ and }%
    \text{IF}(t_{0}^{1},...,t_{0}^{R},\boldsymbol{W}_{K}^{\beta },\boldsymbol{F}_{%
    \boldsymbol{\theta }_{0}}))=0.
    \end{equation*}%
    Therefore it is necessary to obtain the second order influence functions
    that are given by, 
    \begin{equation*}
    \text{IF}_{2}(t_{0}^{l},\boldsymbol{W}_{K}^{\beta },\boldsymbol{G}))=\left( \frac{%
    \partial ^{2}W_{K}^{\beta }(\boldsymbol{G}_{\varepsilon })}{\partial
    \varepsilon ^{2}}\right) _{\varepsilon =0}=2\text{IF}(t_{0}^{l},\boldsymbol{T}%
    _{\beta },\boldsymbol{F}_{\boldsymbol{\theta }_{0}}))\boldsymbol{\Sigma }%
    _{\beta }^{\ast }\left( \boldsymbol{\theta }_{0}\right) ^{-1}\text{IF}(t_{0}^{l},%
    \boldsymbol{T}_{\beta },\boldsymbol{F}_{\boldsymbol{\theta }_{0}}).
    \end{equation*}%
    being $\boldsymbol{\Sigma }_{\beta }^{\ast }\left( \boldsymbol{\theta }%
    _{0}\right) =\boldsymbol{H}\left( \boldsymbol{\theta }_{0}\right) ^{T}%
    \boldsymbol{J}\left( \boldsymbol{\theta }_{0}\right) ^{-1}\boldsymbol{K}%
    \left( \boldsymbol{\theta }_{0}\right) \boldsymbol{J}\left( \boldsymbol{%
    \theta }_{0}\right) ^{-1}\boldsymbol{H}\left( \boldsymbol{\theta }%
    _{0}\right)$,    
    and%
    \begin{equation*}
    \text{IF}_{2}(t_{0}^{1},...,t_{0}^{R},\boldsymbol{W}_{K}^{\beta },\boldsymbol{G}%
    )=\left( \frac{\partial ^{2}W_{K}^{\beta }(\boldsymbol{G}_{\varepsilon })}{%
    \partial \varepsilon ^{2}}\right) _{\varepsilon
    =0}=2\text{IF}((t_{0}^{1},...,t_{0}^{R},\boldsymbol{T}_{\beta },\boldsymbol{F}_{%
    \boldsymbol{\theta }_{0}}))\boldsymbol{\Sigma }_{\beta }^{\ast}\left( \boldsymbol{%
    \theta }_{0}\right) ^{-1}\text{IF}((t_{0}^{1},...,t_{0}^{R},\boldsymbol{T}_{\beta },%
    \boldsymbol{F}_{\boldsymbol{\theta }_{0}}).
    \end{equation*}%
    
    \subsubsection{Composite null hypothesis}
    
    Let $\boldsymbol{T}_{\beta }(\boldsymbol{G)}$ the functional associated to
    the WMDPDE and we consider the functional associate to the Wald-type tests
    for composite null hypothesis, 
    \begin{equation*}
    W_{\beta }(\boldsymbol{G}))=\boldsymbol{h}(\boldsymbol{T}_{\beta }(%
    \boldsymbol{G))}^{T}\left( \boldsymbol{H}(\boldsymbol{T}_{\beta }(%
    \boldsymbol{G))\Sigma }_{\beta }(\boldsymbol{T}_{\beta }(\boldsymbol{G)H}(%
    \boldsymbol{T}_{\beta }(\boldsymbol{G)}\right) ^{-1}\boldsymbol{h}(%
    \boldsymbol{T}_{\beta }(\boldsymbol{G)),}
    \end{equation*}%
    with  $\boldsymbol{\Sigma }_{\beta }\left( \boldsymbol{\theta }\right) =%
    \boldsymbol{J}\left( \boldsymbol{\theta }\right) ^{-1}\boldsymbol{K}\left( 
    \boldsymbol{\theta }\right) \boldsymbol{J}\left( \boldsymbol{\theta }\right)
    ^{-1}\boldsymbol{.}$
    
    Therefore the IF of the Wald-type test statistics under cyclic step ALT
    interval monitoring and log-normal lifetime with contamination only in the $l$%
    -th direction is given by,  
    \begin{equation*}
    \text{IF}(t_{0}^{l},\boldsymbol{W}_{K}^{\beta },\boldsymbol{F}_{\boldsymbol{\theta }%
    _{0}}))=\left( \frac{\partial W_{\beta }(\boldsymbol{G}_{\epsilon }))}{%
    \partial \varepsilon }\right) _{\varepsilon =0}.
    \end{equation*}%
    But, 
    \begin{eqnarray*}
    \frac{\partial W_{\beta }(\boldsymbol{G}_{\epsilon }))}{\partial \varepsilon 
    } &=&2\frac{\partial \boldsymbol{h}(\boldsymbol{T}_{\beta }(\boldsymbol{G}%
    _{\epsilon }\boldsymbol{))}^{T}}{\partial \varepsilon }\left( \boldsymbol{H}(%
    \boldsymbol{T}_{\beta }(\boldsymbol{G}_{\epsilon }\boldsymbol{))\Sigma }%
    _{\beta }(\boldsymbol{T}_{\beta }(\boldsymbol{G}_{\epsilon }\boldsymbol{)H}(%
    \boldsymbol{T}_{\beta }(\boldsymbol{G}_{\epsilon }\boldsymbol{)}\right) ^{-1}%
    \boldsymbol{h}(\boldsymbol{T}_{\beta }(\boldsymbol{G}_{\epsilon }\boldsymbol{%
    ))} \\
    &&+\boldsymbol{h}(\boldsymbol{T}_{\beta }(\boldsymbol{G}_{\epsilon }%
    \boldsymbol{))}^{T}\frac{\partial \left( \boldsymbol{H}(\boldsymbol{T}%
    _{\beta }(\boldsymbol{G}_{\epsilon }\boldsymbol{))\Sigma }_{\beta }(%
    \boldsymbol{T}_{\beta }(\boldsymbol{G}_{\epsilon }\boldsymbol{)H}(%
    \boldsymbol{T}_{\beta }(\boldsymbol{G}_{\epsilon }\boldsymbol{)}\right) ^{-1}%
    }{\partial \varepsilon }\boldsymbol{h}(\boldsymbol{T}_{\beta }(\boldsymbol{G}%
    _{\epsilon }\boldsymbol{))}
    \end{eqnarray*}%
    Now we have,%
    \begin{equation*}
    \left( \boldsymbol{h}(\boldsymbol{T}_{\beta }(\boldsymbol{G}_{\epsilon }%
    \boldsymbol{))}\right) _{\varepsilon =0}=\boldsymbol{h}(\boldsymbol{T}%
    _{\beta }(\boldsymbol{F}_{\theta _{0}}\boldsymbol{))=0.}
    \end{equation*}%
    Therefore, 
    \begin{equation*}
    \text{IF}(t_{0}^{l},\boldsymbol{W}_{K}^{\beta },\boldsymbol{F}_{\boldsymbol{\theta }%
    _{0}})=0
    \end{equation*}%
    and we must calculate the IF of order two, i.e., 
    \begin{equation*}
    \text{IF}_{2}(t_{0}^{l},\boldsymbol{W}_{K}^{\beta },\boldsymbol{F}_{\boldsymbol{%
    \theta }_{0}})=\left( \frac{\partial ^{2}W_{\beta }(\boldsymbol{G}_{\epsilon
    }))}{\partial \varepsilon ^{2}}\right) _{\varepsilon =0}.
    \end{equation*}%
    After some algebra we get 
    \begin{equation*}
    \text{IF}_{2}((t_{0}^{l},\boldsymbol{W}_{K}^{\beta },\boldsymbol{F}_{\boldsymbol{%
    \theta }_{0}}))=2\text{IF}(t_{0}^{l},\boldsymbol{T}_{\beta },\boldsymbol{F}_{%
    \boldsymbol{\theta }_{0}})\left( \boldsymbol{H}(\boldsymbol{\theta }_{0}%
    \boldsymbol{)\Sigma }_{\beta }(\boldsymbol{\theta }_{0}\boldsymbol{)H}(%
    \boldsymbol{\theta }_{0}\boldsymbol{)}\right) ^{-1}\text{IF}(t_{0}^{l},\boldsymbol{T%
    }_{\beta },\boldsymbol{F}_{\boldsymbol{\theta }_{0}}).
    \end{equation*}%
    In a similar way we can get the IF in all the directions, 
    \begin{equation*}
    \text{IF}_{2}(t_{0}^{1},...,t_{0}^{R},\boldsymbol{W}_{K}^{\beta },\boldsymbol{G}%
    )=\left( \frac{\partial ^{2}W_{K}^{\beta }(\boldsymbol{G})}{\partial
    \varepsilon ^{2}}\right) _{\varepsilon =0}=2\text{IF}((t_{0}^{1},...,t_{0}^{R},%
    \boldsymbol{T}_{\beta },\boldsymbol{F}_{\boldsymbol{\theta }_{0}}))%
    \boldsymbol{\Sigma }_{\beta }\left( \boldsymbol{\theta }_{0}\right)
    ^{-1}\text{IF}((t_{0}^{1},...,t_{0}^{R},\boldsymbol{T}_{\beta },\boldsymbol{F}_{%
    \boldsymbol{\theta }_{0}}).
    \end{equation*}
    
    \subsection{Influence function for the Rao-type test statistics}
    
    \subsubsection{Simple null hypothesis}
    
    We consider the functional%
    \begin{equation*}
    \boldsymbol{U}_{\beta }\left( \boldsymbol{G,\theta }_{0}\right) =\frac{1}{K}%
    \tsum\limits_{i=1}^{R}K_{i}\boldsymbol{W}_{i}\left( \boldsymbol{\theta }%
    _{0}\right) \boldsymbol{D}\left( \boldsymbol{\theta }_{0}\right) ^{\left(
    \beta -1\right) }\left( \boldsymbol{g}_{i}-\mathbf{p}_{l}\left( 
    \boldsymbol{\theta }\right) \right) 
    \end{equation*}%
    and we can observe that 
    \begin{equation*}
    E_{\boldsymbol{G}}\left[ \boldsymbol{U}_{\beta }\left( \boldsymbol{\theta }%
    _{0}\right) \right] =\boldsymbol{U}_{\beta }(\boldsymbol{G,\theta }_{0})
    \end{equation*}%
    and 
    \begin{equation*}
    E_{\boldsymbol{F}_{\boldsymbol{\theta }_{0}}}\left[ \boldsymbol{U}_{\beta
    }\left( \boldsymbol{\theta }_{0}\right) \right] =\boldsymbol{U}_{\beta }(%
    \boldsymbol{F}_{\boldsymbol{\theta }_{0}}\boldsymbol{,\theta }_{0})=%
    \boldsymbol{0}.
    \end{equation*}%
    The IF of the functional $\boldsymbol{U}_{\beta }(\boldsymbol{G,\theta }_{0})
    $ in the $l$-th condition is given by,%
    \begin{eqnarray*}
    \text{IF}\left( t_{0}^{l},\boldsymbol{U}_{\beta },\boldsymbol{F}_{\boldsymbol{%
    \theta }_{0}}\right)  &=&\left( \frac{\partial \boldsymbol{U}_{\beta }\left( 
    \boldsymbol{G}_{\varepsilon }\boldsymbol{,\theta }_{0}\right) }{\partial
    \varepsilon }\right) _{\varepsilon =0}=\frac{K_{l}}{K}\boldsymbol{W}%
    _{i}\left( \boldsymbol{\theta }_{0}\right) \boldsymbol{D}\left( \boldsymbol{%
    \theta }_{0}\right) ^{\left( \beta -1\right) }\left( \boldsymbol{e}%
    _{t_{0}^{l}}-\mathbf{p}_{l}\left( \boldsymbol{\theta }_{0}\right)
    \right)  \\
    &=&\boldsymbol{J}_{\beta }\left( \boldsymbol{\theta }_{0}\right) \text{IF}\left(
    t_{0}^{l},\boldsymbol{T}_{\beta },\boldsymbol{F}_{\boldsymbol{\theta }%
    _{0}}\right) .
    \end{eqnarray*}%
    On the basis of functional $\boldsymbol{U}_{\beta }(\boldsymbol{G,\theta }%
    _{0})$ $\ $we can define the Rao-type statistical functional by 
    \begin{equation*}
    R_{\beta }^{0}\left( \boldsymbol{G}\right) =\boldsymbol{U}_{\beta }(%
    \boldsymbol{G,\theta }_{0})^{T}\boldsymbol{K}_{\beta }\left( \boldsymbol{%
    \theta }_{0}\right) ^{-1}\boldsymbol{U}_{\beta }\boldsymbol{G,\theta }_{0}).
    \end{equation*}%
    Now the IF of the Rao-type test statistics, for testing the simple null hypothesis in the direction $l$-th is given by 
    \begin{equation*}
    \text{IF}(t_{0}^{l},R_{\beta }^{0},\boldsymbol{F}_{\boldsymbol{\theta }%
    _{0}})=\left( \frac{\partial R_{\beta }\left( \boldsymbol{G}_{\varepsilon
    }\right) }{\partial \varepsilon }\right) _{\varepsilon =0}=2\text{IF}\left(
    t_{0}^{l},\boldsymbol{U}_{\beta },\boldsymbol{F}_{\boldsymbol{\theta }%
    _{0}}\right) ^{T}\boldsymbol{K}_{\beta }\left( \boldsymbol{\theta }%
    _{0}\right) ^{-1}\boldsymbol{U}_{\beta }\left( \boldsymbol{F}_{\boldsymbol{%
    \theta }_{0}}\boldsymbol{,\theta }_{0}\right) =\boldsymbol{0}
    \end{equation*}%
    because $\boldsymbol{U}_{\beta }\left( \boldsymbol{F}_{\boldsymbol{\theta }%
    _{0}}\boldsymbol{,\theta }_{0}\right) =0.$ Therefore we must get the second
    order influence function, i.e,%
    \begin{eqnarray*}
    \text{IF}_{2}(t_{0}^{l},R_{\beta }^{0},\boldsymbol{F}_{\boldsymbol{\theta }_{0}})
    &=&\left( \frac{\partial ^{2}R_{\beta }\left( \boldsymbol{G}_{\varepsilon
    }\right) }{\partial \varepsilon ^{2}}\right) _{\varepsilon =0}=2\text{IF}\left(
    t_{0}^{l},\boldsymbol{U}_{\beta },\boldsymbol{F}_{\boldsymbol{\theta }%
    _{0}}\right) ^{T}\boldsymbol{K}_{\beta }\left( \boldsymbol{\theta }%
    _{0}\right) ^{-1}\text{IF}\left( t_{0}^{l},\boldsymbol{U}_{\beta },\boldsymbol{F}_{%
    \boldsymbol{\theta }_{0}}\right)  \\
    &=&2\text{IF}\left( t_{0}^{l},\boldsymbol{T}_{\beta },\boldsymbol{F}_{\boldsymbol{%
    \theta }_{0}}\right) ^{T}\boldsymbol{J}_{\beta }\left( \boldsymbol{\theta }%
    _{0}\right) \boldsymbol{K}_{\beta }\left( \boldsymbol{\theta }_{0}\right)
    ^{-1}\boldsymbol{J}_{\beta }\left( \boldsymbol{\theta }_{0}\right) \text{IF}\left(
    t_{0}^{l},\boldsymbol{T}_{\beta },\boldsymbol{F}_{\boldsymbol{\theta }%
    _{0}}\right) .
    \end{eqnarray*}%
    The last equality follows because,%
    \begin{equation*}
    \text{IF}\left( t_{0}^{l},\boldsymbol{U}_{\beta },\boldsymbol{F}_{\boldsymbol{%
    \theta }_{0}}\right) =\boldsymbol{J}_{\beta }\left( \boldsymbol{\theta }%
    _{0}\right) \text{IF}\left( t_{0}^{l},\boldsymbol{T}_{\beta },\boldsymbol{F}_{%
    \boldsymbol{\theta }_{0}}\right) .
    \end{equation*}
    \subsubsection{Composite null hypothesis}
    Before to get the IF of the Rao-type test for composite null hypothesis it
    is necessary to get the IF associated to the RMDPDE, $\widetilde{\boldsymbol{%
    \theta }}_{\beta }.$ The RMDPDE functional, $\widetilde{\boldsymbol{T}}%
    _{\beta },$ verifies, 
    
    \begin{equation*}
    \frac{\partial H_{\beta }(\widetilde{\boldsymbol{T}}_{\beta }\left( 
    \boldsymbol{G}_{\varepsilon }\right) }{\partial \boldsymbol{\theta }}+H(%
    \widetilde{\boldsymbol{T}}_{\beta }\left( \boldsymbol{G}_{\varepsilon
    }\right) )^{T}\boldsymbol{\lambda }_{\beta }\left( \boldsymbol{G}%
    _{\varepsilon }\right) =\boldsymbol{0}
    \end{equation*}%
    and 
    \begin{equation*}
    \boldsymbol{h}\left( \widetilde{\boldsymbol{T}}_{\beta }\left( \boldsymbol{G}%
    _{\varepsilon }\right) \right) =\boldsymbol{0.}
    \end{equation*}%
    We are going to get the derivative of the function  
    \begin{equation*}
    f_{1}\boldsymbol{(}\varepsilon \boldsymbol{)=H}(\widetilde{\boldsymbol{T}}%
    _{\beta }\left( \boldsymbol{G}_{\varepsilon }\right) )^{T}\boldsymbol{%
    \lambda }_{\beta }\left( \boldsymbol{G}_{\varepsilon }\right) 
    \end{equation*}%
    with respect to $\varepsilon $ and evaluate it at $\varepsilon =0,$%
    \begin{eqnarray*}
    \left( \frac{\partial f\boldsymbol{(\varepsilon )}}{\partial \varepsilon }%
    \right) _{\varepsilon =0} &=&\left( \frac{\partial \boldsymbol{H}(\widetilde{%
    \boldsymbol{T}}_{\beta }\left( \boldsymbol{G}_{\varepsilon }\right) )^{T}}{%
    \partial \varepsilon }\right) _{\varepsilon =0}\boldsymbol{\lambda }_{\beta
    }\left( \boldsymbol{F}_{\boldsymbol{\theta }_{0}}\right) +\boldsymbol{H}(%
    \boldsymbol{\theta }_{0})^{T}\left( \frac{\partial \boldsymbol{\lambda }%
    _{\beta }\left( \boldsymbol{G}_{\varepsilon }\right) }{\partial \varepsilon }%
    \right) _{\varepsilon =0} \\
    &=&\boldsymbol{H}(\boldsymbol{\theta }_{0})^{T}\text{IF}\left( t_{0}^{l},%
    \boldsymbol{\lambda }_{\beta },\boldsymbol{F}_{\boldsymbol{\theta }%
    _{0}}\right) .
    \end{eqnarray*}%
    Now we are going to get the derivative of the function, 
    \begin{equation*}
    f_{2}\left( \varepsilon \right) =\frac{\partial \boldsymbol{H}_{\beta }(%
    \widetilde{\boldsymbol{T}}_{\beta }\left( \boldsymbol{G}_{\varepsilon
    }\right) )}{\partial \boldsymbol{\theta }}
    \end{equation*}%
    evaluated at $\varepsilon =0.$ We have, 
    \begin{eqnarray*}
    \left( \frac{\partial f_{2}\left( \varepsilon \right) }{\partial \varepsilon 
    }\right) _{\varepsilon =0} &=&\left( \frac{\partial }{\partial \varepsilon }%
    \left( \frac{\partial \boldsymbol{H}_{\beta }(\widetilde{\boldsymbol{T}}%
    _{\beta }\left( \boldsymbol{G}_{\varepsilon }\right) }{\partial \boldsymbol{%
    \theta }}\right) \right) _{\varepsilon =0}+\left( \frac{\partial ^{2}%
    \boldsymbol{H}_{\beta }(\widetilde{\boldsymbol{T}}_{\beta }\left( 
    \boldsymbol{G}_{\varepsilon }\right) }{\partial \boldsymbol{\theta \theta }%
    ^{T}}\right) _{\boldsymbol{\theta =\theta }_{0}}\left( \frac{\partial 
    \widetilde{\boldsymbol{T}}_{\beta }\left( \boldsymbol{G}_{\varepsilon
    }\right) }{\partial \varepsilon }\right) _{\varepsilon =0} \\
    &=&-\boldsymbol{J}_{\beta }\left( \boldsymbol{\theta }_{0}\right)
    \text{IF}((t_{0}^{l},\boldsymbol{T}_{\beta },\boldsymbol{F}_{\boldsymbol{\theta }%
    _{0}})+\boldsymbol{J}_{\beta }\left( \boldsymbol{\theta }_{0}\right)
    \text{IF}((t_{0}^{l},\widetilde{\boldsymbol{T}}_{\beta },\boldsymbol{F}_{%
    \boldsymbol{\theta }_{0}}).
    \end{eqnarray*}%
    Therefore, 
    \begin{equation*}
    -\boldsymbol{J}_{\beta }\left( \boldsymbol{\theta }_{0}\right) \text{IF}(t_{0}^{l},%
    \boldsymbol{T}_{\beta },\boldsymbol{F}_{\boldsymbol{\theta }_{0}})+%
    \boldsymbol{J}_{\beta }\left( \boldsymbol{\theta }_{0}\right) \text{IF}((t_{0}^{l},%
    \widetilde{\boldsymbol{T}}_{\beta },\boldsymbol{F}_{\boldsymbol{\theta }%
    _{0}})+\boldsymbol{H}(\boldsymbol{\theta }_{0})^{T}\text{IF}\left( t_{0}^{l},%
    \boldsymbol{\lambda }_{\beta },\boldsymbol{F}_{\boldsymbol{\theta }%
    _{0}}\right) =\boldsymbol{0}
    \end{equation*}%
    and 
    \begin{equation}
    \text{IF}(t_{0}^{l},\widetilde{\boldsymbol{T}}_{\beta },\boldsymbol{F}_{\boldsymbol{%
    \theta }_{0}})=\text{IF}(t_{0}^{l},\boldsymbol{T}_{\beta },\boldsymbol{F}_{%
    \boldsymbol{\theta }_{0}})-\boldsymbol{J}_{\beta }\left( \boldsymbol{\theta }%
    _{0}\right) ^{-1}\boldsymbol{H}(\boldsymbol{\theta }_{0})^{T}\text{IF}\left(
    t_{0}^{l},\boldsymbol{\lambda }_{\beta },\boldsymbol{F}_{\boldsymbol{\theta }%
    _{0}}\right)   \label{A}
    \end{equation}%
    On the other hand, 
    \begin{equation}
    \boldsymbol{H}(\boldsymbol{\theta }_{0})\text{IF}(t_{0}^{l},\widetilde{\boldsymbol{T%
    }}_{\beta },\boldsymbol{F}_{\boldsymbol{\theta }_{0}})=\boldsymbol{0.}
    \label{B}
    \end{equation}%
    Therefore,%
    \begin{equation}
    \boldsymbol{H}(\boldsymbol{\theta }_{0})\text{IF}(t_{0}^{l},\widetilde{\boldsymbol{T%
    }}_{\beta },\boldsymbol{F}_{\boldsymbol{\theta }_{0}})=\boldsymbol{H}(%
    \boldsymbol{\theta }_{0})\text{IF}(t_{0}^{l},\boldsymbol{T}_{\beta },\boldsymbol{F}%
    _{\boldsymbol{\theta }_{0}})-\boldsymbol{H}(\boldsymbol{\theta }_{0})%
    \boldsymbol{J}_{\beta }\left( \boldsymbol{\theta }_{0}\right) ^{-1}%
    \boldsymbol{H}(\boldsymbol{\theta }_{0})^{T}\text{IF}\left( t_{0}^{l},\boldsymbol{%
    \lambda }_{\beta },\boldsymbol{F}_{\boldsymbol{\theta }_{0}}\right) 
    \label{C}
    \end{equation}%
    and 
    \begin{equation*}
    \boldsymbol{H}(\boldsymbol{\theta }_{0})\text{IF}(t_{0}^{l},\boldsymbol{T}_{\beta },%
    \boldsymbol{F}_{\boldsymbol{\theta }_{0}})-\boldsymbol{H}(\boldsymbol{\theta 
    }_{0})\boldsymbol{J}_{\beta }\left( \boldsymbol{\theta }_{0}\right) ^{-1}%
    \boldsymbol{H}(\boldsymbol{\theta }_{0})^{T}\text{IF}\left( t_{0}^{l},\boldsymbol{%
    \lambda }_{\beta },\boldsymbol{F}_{\boldsymbol{\theta }_{0}}\right) =%
    \boldsymbol{0}
    \end{equation*}%
    Then, 
    \begin{equation*}
    \text{IF}\left( t_{0}^{l},\boldsymbol{\lambda }_{\beta },\boldsymbol{F}_{%
    \boldsymbol{\theta }_{0}}\right) =\left( \boldsymbol{H}(\boldsymbol{\theta }%
    _{0})\boldsymbol{J}_{\beta }\left( \boldsymbol{\theta }_{0}\right) ^{-1}%
    \boldsymbol{H}(\boldsymbol{\theta }_{0})^{T}\right) ^{-1}\boldsymbol{H}(%
    \boldsymbol{\theta }_{0})\text{IF}((t_{0}^{l},\boldsymbol{T}_{\beta },\boldsymbol{F}%
    _{\boldsymbol{\theta }_{0}})
    \end{equation*}%
    and including the expression of $\text{IF}\left( t_{0}^{l},\boldsymbol{\lambda }%
    _{\beta },\boldsymbol{F}_{\boldsymbol{\theta }_{0}}\right) $ in (\ref{C}) we
    get%
    \begin{equation*}
    \text{IF}(t_{0}^{l},\widetilde{\boldsymbol{T}}_{\beta },\boldsymbol{F}_{\boldsymbol{%
    \theta }_{0}})=\left( \boldsymbol{I-J}_{\beta }\left( \boldsymbol{\theta }%
    _{0}\right) ^{-1}\boldsymbol{H}(\boldsymbol{\theta }_{0})^{T}\left( 
    \boldsymbol{H}(\boldsymbol{\theta }_{0})\boldsymbol{J}_{\beta }\left( 
    \boldsymbol{\theta }_{0}\right) ^{-1}\boldsymbol{H}(\boldsymbol{\theta }%
    _{0})^{T}\right) ^{-1}\boldsymbol{H}(\boldsymbol{\theta }_{0})\right)
    \text{IF}(t_{0}^{l},\boldsymbol{T}_{\beta },\boldsymbol{F}_{\boldsymbol{\theta }%
    _{0}}).
    \end{equation*}%
    Now we are going to get the IF of the Rao-type test statistics for composite
    null hypothesis. The Rao type test statistical functional re given by,%
    \begin{equation*}
    R_{\beta }(\boldsymbol{G)=U}_{\beta }(\boldsymbol{G,}\widetilde{\boldsymbol{T%
    }}_{\beta }\left( \boldsymbol{G}\right) )^{T}\boldsymbol{M}_{\beta }\left( 
    \widetilde{\boldsymbol{T}}_{\beta }\left( \boldsymbol{G}\right) \right) 
    \boldsymbol{U}_{\beta }(\boldsymbol{G,}\widetilde{\boldsymbol{T}}_{\beta
    }\left( \boldsymbol{G}\right) )
    \end{equation*}%
    with 
    \begin{equation*}
    \boldsymbol{M}_{\beta }\left( \boldsymbol{\theta }\right) =\boldsymbol{J}%
    _{\beta }\left( \boldsymbol{\theta }_{0}\right) ^{-1}\boldsymbol{H}(%
    \boldsymbol{\theta }_{0})^{T}\left[ \boldsymbol{H}(\boldsymbol{\theta }_{0})%
    \boldsymbol{J}_{\beta }\left( \boldsymbol{\theta }_{0}\right) ^{-1}%
    \boldsymbol{K}(\boldsymbol{\theta }_{0})\boldsymbol{J}_{\beta }\left( 
    \boldsymbol{\theta }_{0}\right) ^{-1}\boldsymbol{H}(\boldsymbol{\theta }%
    _{0})^{T}\right] ^{-1}\boldsymbol{H}(\boldsymbol{\theta }_{0})\boldsymbol{J}%
    _{\beta }\left( \boldsymbol{\theta }_{0}\right) ^{-1}.
    \end{equation*}%
    Now we have, 
    \begin{eqnarray*}
    \frac{\partial R_{\beta }(\boldsymbol{G}_{\varepsilon }\boldsymbol{)}}{%
    \partial \varepsilon } &=&\frac{\partial \boldsymbol{U}_{\beta }(\boldsymbol{%
    G}_{\varepsilon }\boldsymbol{,}\widetilde{\boldsymbol{T}}_{\beta }\left( 
    \boldsymbol{G}_{\varepsilon }\right) )^{T}}{\partial \varepsilon }%
    \boldsymbol{M}_{\beta }\left( \widetilde{\boldsymbol{T}}_{\beta }\left( 
    \boldsymbol{G}_{\varepsilon }\right) \right) \boldsymbol{U}_{\beta }(%
    \boldsymbol{G}_{\varepsilon }\boldsymbol{,}\widetilde{\boldsymbol{T}}_{\beta
    }\left( \boldsymbol{G}_{\varepsilon }\right) ) \\
    &&+\boldsymbol{U}_{\beta }(\boldsymbol{G}_{\varepsilon }\boldsymbol{,}%
    \widetilde{\boldsymbol{T}}_{\beta }\left( \boldsymbol{G}_{\varepsilon
    }\right) )^{T}\frac{\partial \boldsymbol{M}_{\beta }(\boldsymbol{G}%
    _{\varepsilon }\boldsymbol{,}\widetilde{\boldsymbol{T}}_{\beta }\left( 
    \boldsymbol{G}_{\varepsilon }\right) )}{\partial \varepsilon }\boldsymbol{U}%
    _{\beta }(\boldsymbol{G}_{\varepsilon }\boldsymbol{,}\widetilde{\boldsymbol{T%
    }}_{\beta }\left( \boldsymbol{G}_{\varepsilon }\right) ) \\
    &&+\boldsymbol{U}_{\beta }(\boldsymbol{G}_{\varepsilon }\boldsymbol{,}%
    \widetilde{\boldsymbol{T}}_{\beta }\left( \boldsymbol{G}_{\varepsilon
    }\right) )\boldsymbol{M}_{\beta }\left( \widetilde{\boldsymbol{T}}_{\beta
    }\left( \boldsymbol{G}_{\varepsilon }\right) \right) \frac{\partial 
    \boldsymbol{U}_{\beta }(\boldsymbol{G}_{\varepsilon }\boldsymbol{,}%
    \widetilde{\boldsymbol{T}}_{\beta }\left( \boldsymbol{G}_{\varepsilon
    }\right) )}{\partial \varepsilon } \\
    &=&2\boldsymbol{U}_{\beta }(\boldsymbol{G}_{\varepsilon }\boldsymbol{,}%
    \widetilde{\boldsymbol{T}}_{\beta }\left( \boldsymbol{G}_{\varepsilon
    }\right) )^{T}\boldsymbol{M}_{\beta }\left( \widetilde{\boldsymbol{T}}%
    _{\beta }\left( \boldsymbol{G}_{\varepsilon }\right) \right) \frac{\partial 
    \boldsymbol{U}_{\beta }(\boldsymbol{G}_{\varepsilon }\boldsymbol{,}%
    \widetilde{\boldsymbol{T}}_{\beta }\left( \boldsymbol{G}_{\varepsilon
    }\right) )}{\partial \varepsilon } \\
    &&+\boldsymbol{U}_{\beta }(\boldsymbol{G}_{\varepsilon }\boldsymbol{,}%
    \widetilde{\boldsymbol{T}}_{\beta }\left( \boldsymbol{G}_{\varepsilon
    }\right) )^{T}\frac{\partial \boldsymbol{M}_{\beta }(\boldsymbol{G}%
    _{\varepsilon }\boldsymbol{,}\widetilde{\boldsymbol{T}}_{\beta }\left( 
    \boldsymbol{G}_{\varepsilon }\right) )}{\partial \varepsilon }\boldsymbol{U}%
    _{\beta }(\boldsymbol{G}_{\varepsilon }\boldsymbol{,}\widetilde{\boldsymbol{T%
    }}_{\beta }\left( \boldsymbol{G}_{\varepsilon }\right) ).
    \end{eqnarray*}%
    But 
    \begin{equation*}
    \left( \boldsymbol{U}_{\beta }(\boldsymbol{G}_{\varepsilon }\boldsymbol{,}%
    \widetilde{\boldsymbol{T}}_{\beta }\left( \boldsymbol{G}\right) )\right)
    _{\varepsilon =0}=\boldsymbol{U}_{\beta }(\boldsymbol{F}_{\boldsymbol{\theta 
    }_{0}}\boldsymbol{,}\widetilde{\boldsymbol{T}}_{\beta }\left( \boldsymbol{F}%
    _{\boldsymbol{\theta }_{0}}\right) )=\boldsymbol{U}_{\beta }(\boldsymbol{F}_{%
    \boldsymbol{\theta }_{0}}\boldsymbol{,\theta }_{0})=\boldsymbol{0.}
    \end{equation*}%
    Therefore, 
    \begin{equation*}
    \text{IF}\left( t_{0}^{l},R_{\beta },\boldsymbol{F}_{\boldsymbol{\theta }%
    _{0}}\right) =\left( \frac{\partial R_{\beta }(\boldsymbol{G}_{\varepsilon }%
    \boldsymbol{)}}{\partial \varepsilon }\right) _{\varepsilon =0}=\boldsymbol{%
    0.}
    \end{equation*}%
    Then we are going to get the second order influence function. First we are
    going to get 
    \begin{equation*}
    \frac{\partial ^{2}R_{\beta }(\boldsymbol{G}_{\varepsilon }\boldsymbol{)}}{%
    \partial \varepsilon ^{2}}.
    \end{equation*}%
    We have,%
    \begin{eqnarray*}
    \frac{\partial ^{2}R_{\beta }(\boldsymbol{G}_{\varepsilon }\boldsymbol{)}}{%
    \partial \varepsilon ^{2}} &=&2\frac{\partial \boldsymbol{U}_{\beta }(%
    \boldsymbol{G}_{\varepsilon }\boldsymbol{,}\widetilde{\boldsymbol{T}}_{\beta
    }\left( \boldsymbol{G}_{\varepsilon }\right) )^{T}}{\partial \varepsilon }%
    \boldsymbol{M}_{\beta }\left( \widetilde{\boldsymbol{T}}_{\beta }\left( 
    \boldsymbol{G}_{\varepsilon }\right) \right) \frac{\partial \boldsymbol{U}%
    _{\beta }(\boldsymbol{G}_{\varepsilon }\boldsymbol{,}\widetilde{\boldsymbol{T%
    }}_{\beta }\left( \boldsymbol{G}_{\varepsilon }\right) )}{\partial
    \varepsilon } \\
    &&+2\boldsymbol{U}_{\beta }(\boldsymbol{G}_{\varepsilon }\boldsymbol{,}%
    \widetilde{\boldsymbol{T}}_{\beta }\left( \boldsymbol{G}_{\varepsilon
    }\right) )\frac{\partial \boldsymbol{M}_{\beta }(\boldsymbol{G}_{\varepsilon
    }\boldsymbol{,}\widetilde{\boldsymbol{T}}_{\beta }\left( \boldsymbol{G}%
    _{\varepsilon }\right) )}{\partial \varepsilon }\frac{\partial \boldsymbol{U}%
    _{\beta }(\boldsymbol{G}_{\varepsilon }\boldsymbol{,}\widetilde{\boldsymbol{T%
    }}_{\beta }\left( \boldsymbol{G}_{\varepsilon }\right) )}{\partial
    \varepsilon } 
	\end{eqnarray*}%
	\begin{eqnarray*}
    &&+2\boldsymbol{U}_{\beta }(\boldsymbol{G}_{\varepsilon }\boldsymbol{,}%
    \widetilde{\boldsymbol{T}}_{\beta }\left( \boldsymbol{G}_{\varepsilon
    }\right) )^{T}\boldsymbol{M}_{\beta }\left( \widetilde{\boldsymbol{T}}%
    _{\beta }\left( \boldsymbol{G}_{\varepsilon }\right) \right) \frac{\partial
    ^{2}\boldsymbol{U}_{\beta }(\boldsymbol{G}_{\varepsilon }\boldsymbol{,}%
    \widetilde{\boldsymbol{T}}_{\beta }\left( \boldsymbol{G}_{\varepsilon
    }\right) )}{\partial \varepsilon ^{2}} \\
    &&+\frac{\partial \boldsymbol{U}_{\beta }(\boldsymbol{G}_{\varepsilon }%
    \boldsymbol{,}\widetilde{\boldsymbol{T}}_{\beta }\left( \boldsymbol{G}%
    _{\varepsilon }\right) )^{T}}{\partial \varepsilon }\frac{\partial 
    \boldsymbol{M}_{\beta }(\boldsymbol{G}_{\varepsilon }\boldsymbol{,}%
    \widetilde{\boldsymbol{T}}_{\beta }\left( \boldsymbol{G}_{\varepsilon
    }\right) )}{\partial \varepsilon }\boldsymbol{U}_{\beta }(\boldsymbol{G}%
    _{\varepsilon }\boldsymbol{,}\widetilde{\boldsymbol{T}}_{\beta }\left( 
    \boldsymbol{G}_{\varepsilon }\right) )\\
    &&+\boldsymbol{U}_{\beta }(\boldsymbol{G}_{\varepsilon }\boldsymbol{,}%
    \widetilde{\boldsymbol{T}}_{\beta }\left( \boldsymbol{G}_{\varepsilon
    }\right) )^{T}\frac{\partial ^{2}\boldsymbol{M}_{\beta }(\boldsymbol{G}%
    _{\varepsilon }\boldsymbol{,}\widetilde{\boldsymbol{T}}_{\beta }\left( 
    \boldsymbol{G}_{\varepsilon }\right) )}{\partial \varepsilon ^{2}}%
    \boldsymbol{U}_{\beta }(\boldsymbol{G}_{\varepsilon }\boldsymbol{,}%
    \widetilde{\boldsymbol{T}}_{\beta }\left( \boldsymbol{G}_{\varepsilon
    }\right) ) \\
    &&+\boldsymbol{U}_{\beta }(\boldsymbol{G}_{\varepsilon }\boldsymbol{,}%
    \widetilde{\boldsymbol{T}}_{\beta }\left( \boldsymbol{G}_{\varepsilon
    }\right) )^{T}\frac{\partial \boldsymbol{M}_{\beta }(\boldsymbol{G}%
    _{\varepsilon }\boldsymbol{,}\widetilde{\boldsymbol{T}}_{\beta }\left( 
    \boldsymbol{G}_{\varepsilon }\right) )}{\partial \varepsilon }\frac{\partial 
    \boldsymbol{U}_{\beta }(\boldsymbol{G}_{\varepsilon }\boldsymbol{,}%
    \widetilde{\boldsymbol{T}}_{\beta }\left( \boldsymbol{G}_{\varepsilon
    }\right) )}{\partial \varepsilon }.
    \end{eqnarray*}%
    But 
    \begin{equation*}
    \left( \boldsymbol{U}_{\beta }(\boldsymbol{G}_{\varepsilon }\boldsymbol{,}%
    \widetilde{\boldsymbol{T}}_{\beta }\left( \boldsymbol{G}\right) )\right)
    _{\varepsilon =0}=\boldsymbol{U}_{\beta }(\boldsymbol{F}_{\boldsymbol{\theta 
    }_{0}}\boldsymbol{,}\widetilde{\boldsymbol{T}}_{\beta }\left( \boldsymbol{F}%
    _{\boldsymbol{\theta }_{0}}\right) )=\boldsymbol{U}_{\beta }(\boldsymbol{F}_{%
    \boldsymbol{\theta }_{0}}\boldsymbol{,\theta }_{0})=\boldsymbol{0.}
    \end{equation*}%
    Therefore, 
    \begin{equation*}
    \left( \frac{\partial ^{2}R_{\beta }(\boldsymbol{G}_{\varepsilon }%
    \boldsymbol{)}}{\partial \varepsilon ^{2}}\right) _{\epsilon =0}=2\left( 
    \frac{\partial \boldsymbol{U}_{\beta }(\boldsymbol{G}_{\varepsilon }%
    \boldsymbol{,}\widetilde{\boldsymbol{T}}_{\beta }\left( \boldsymbol{G}%
    _{\varepsilon }\right) )^{T}}{\partial \varepsilon }\boldsymbol{M}_{\beta
    }\left( \widetilde{\boldsymbol{T}}_{\beta }\left( \boldsymbol{G}%
    _{\varepsilon }\right) \right) \frac{\partial \boldsymbol{U}_{\beta }(%
    \boldsymbol{G}_{\varepsilon }\boldsymbol{,}\widetilde{\boldsymbol{T}}_{\beta
    }\left( \boldsymbol{G}_{\varepsilon }\right) )}{\partial \varepsilon }%
    \right) _{\varepsilon =0}.
    \end{equation*}%
    Now we denote 
    \begin{equation*}
    g(\varepsilon )=\boldsymbol{U}_{\beta }(\boldsymbol{G}_{\varepsilon }%
    \boldsymbol{,}\widetilde{\boldsymbol{T}}_{\beta }\left( \boldsymbol{G}%
    _{\varepsilon }\right) ),\\ \qquad
    g^{\prime }(\varepsilon )=\frac{\partial \boldsymbol{U}_{\beta }(\boldsymbol{%
    G}_{\varepsilon }\boldsymbol{,}\widetilde{\boldsymbol{T}}_{\beta }\left( 
    \boldsymbol{G}_{\varepsilon }\right) )}{\partial \varepsilon }+\frac{%
    \partial \boldsymbol{U}_{\beta }(\boldsymbol{G}_{\varepsilon }\boldsymbol{,}%
    \widetilde{\boldsymbol{T}}_{\beta }\left( \boldsymbol{G}_{\varepsilon
    }\right) )}{\partial \boldsymbol{\theta }}\frac{\partial \widetilde{%
    \boldsymbol{T}}_{\beta }\left( \boldsymbol{G}_{\varepsilon }\right) }{%
    \partial \varepsilon },
    \end{equation*}%
    and 
    \begin{eqnarray*}
    &\left( \frac{\partial \boldsymbol{U}_{\beta }(\boldsymbol{G}_{\varepsilon }%
    \boldsymbol{,}\widetilde{\boldsymbol{T}}_{\beta }\left( \boldsymbol{G}%
    _{\varepsilon }\right) )}{\partial \varepsilon }\right) _{\varepsilon =0}=%
    \boldsymbol{J}_{\beta }\left( \boldsymbol{\theta }_{0}\right) \text{IF}\left(
    t_{0}^{l},\boldsymbol{T}_{\beta },\boldsymbol{F}_{\boldsymbol{\theta }%
    _{0}}\right) & \\
    &\left( \frac{\partial \boldsymbol{U}_{\beta }(\boldsymbol{G}_{\varepsilon }%
    \boldsymbol{,}\widetilde{\boldsymbol{T}}_{\beta }\left( \boldsymbol{G}%
    _{\varepsilon }\right) )}{\partial \boldsymbol{\theta }}\right) _{%
    \boldsymbol{\theta =\theta }_{0}}=-\boldsymbol{J}_{\beta }\left( \boldsymbol{%
    \theta }_{0}\right) & \\
    &\frac{\partial \widetilde{\boldsymbol{T}}_{\beta }\left( \boldsymbol{G}%
    _{\varepsilon }\right) }{\partial \varepsilon }=\text{IF}(t_{0}^{l},\widetilde{%
    \boldsymbol{T}}_{\beta },\boldsymbol{F}_{\boldsymbol{\theta }_{0}}).&
    \end{eqnarray*}%
    Then 
    \begin{eqnarray}
    \text{IF}_{2}\left( t_{0}^{l},R_{\beta },\boldsymbol{F}_{\boldsymbol{\theta }%
    _{0}}\right)  &=&\left( \frac{\partial ^{2}R_{\beta }(\boldsymbol{G}%
    _{\varepsilon }\boldsymbol{)}}{\partial \varepsilon ^{2}}\right)
    _{\varepsilon =0}  \label{D} \\
    &=&\left[ \text{IF}\left( t_{0}^{l},\boldsymbol{T}_{\beta },\boldsymbol{F}_{%
    \boldsymbol{\theta }_{0}}\right) -F(t_{0}^{l},\widetilde{\boldsymbol{T}}%
    _{\beta },\boldsymbol{F}_{\boldsymbol{\theta }_{0}})\right] ^{T}\boldsymbol{J%
    }_{\beta }\left( \boldsymbol{\theta }_{0}\right) \boldsymbol{M}_{\beta
    }\left( \boldsymbol{\theta }_{0}\right) \boldsymbol{J}_{\beta }\left( 
    \boldsymbol{\theta }_{0}\right)   \notag \\
    &&\left[ \text{IF}\left( t_{0}^{l},\boldsymbol{T}_{\beta },\boldsymbol{F}_{%
    \boldsymbol{\theta }_{0}}\right) -F(t_{0}^{l},\widetilde{\boldsymbol{T}}%
    _{\beta },\boldsymbol{F}_{\boldsymbol{\theta }_{0}})\right] .  \notag
    \end{eqnarray}
    We established 
    \begin{equation*}
    \text{IF}_{2}(t_{0}^{l},\widetilde{\boldsymbol{T}}_{\beta },\boldsymbol{F}_{%
    \boldsymbol{\theta }_{0}})=\boldsymbol{J}_{\beta }\left( \boldsymbol{\theta }%
    _{0}\right) \left( \boldsymbol{I-J}_{\beta }\left( \boldsymbol{\theta }%
    _{0}\right) ^{-1}\boldsymbol{H}(\boldsymbol{\theta }_{0})^{T}\left( 
    \boldsymbol{H}(\boldsymbol{\theta }_{0})\boldsymbol{J}_{\beta }\left( 
    \boldsymbol{\theta }_{0}\right) ^{-1}\boldsymbol{H}(\boldsymbol{\theta }%
    _{0})^{T}\right) ^{-1}\boldsymbol{H}(\boldsymbol{\theta }_{0})\right)
    \text{IF}(t_{0}^{l},\boldsymbol{T}_{\beta },\boldsymbol{F}_{\boldsymbol{\theta }%
    _{0}}).
    \end{equation*}%
    Therefore, 
    \begin{eqnarray}
      & &\text{IF}\left( t_{0}^{l},\boldsymbol{T}_{\beta },\boldsymbol{F}_{%
    \boldsymbol{\theta }_{0}}\right) -F(t_{0}^{l},\widetilde{\boldsymbol{T}}%
    _{\beta },\boldsymbol{F}_{\boldsymbol{\theta }_{0}})  \nonumber
	\end{eqnarray}
	\begin{eqnarray}    
    &=&\text{IF}\left(
    t_{0}^{l},\boldsymbol{T}_{\beta },\boldsymbol{F}_{\boldsymbol{\theta }%
    _{0}}\right)   \nonumber \\
    &&-\left( \boldsymbol{I-J}_{\beta }\left( \boldsymbol{\theta }_{0}\right)
    ^{-1}\boldsymbol{H}(\boldsymbol{\theta }_{0})^{T}\left( \boldsymbol{H}(%
    \boldsymbol{\theta }_{0})\boldsymbol{J}_{\beta }\left( \boldsymbol{\theta }%
    _{0}\right) ^{-1}\boldsymbol{H}(\boldsymbol{\theta }_{0})^{T}\right) ^{-1}%
    \boldsymbol{H}(\boldsymbol{\theta }_{0})\right) 
    \times \text{IF}(t_{0}^{l},\boldsymbol{T}_{\beta },\boldsymbol{F}_{\boldsymbol{\theta }%
    _{0}}) \\
    &=&\boldsymbol{J}_{\beta }\left( \boldsymbol{\theta }_{0}\right) ^{-1}%
    \boldsymbol{H}(\boldsymbol{\theta }_{0})^{T}\left( \boldsymbol{H}(%
    \boldsymbol{\theta }_{0})\boldsymbol{J}_{\beta }\left( \boldsymbol{\theta }%
    _{0}\right) ^{-1}\boldsymbol{H}(\boldsymbol{\theta }_{0})^{T}\right) ^{-1}%
    \boldsymbol{H}(\boldsymbol{\theta }_{0})  \times \text{IF}(t_{0}^{l},\boldsymbol{T}_{\beta },\boldsymbol{F}_{\boldsymbol{\theta }%
    _{0}})
    \end{eqnarray}
    and 
    \begin{align*}
    \boldsymbol{J}_{\beta }&\left( \boldsymbol{\theta }_{0}\right) \left(
    \text{IF}\left( t_{0}^{l},\boldsymbol{T}_{\beta },\boldsymbol{F}_{\boldsymbol{%
    \theta }_{0}}\right) -F(t_{0}^{l},\widetilde{\boldsymbol{T}}_{\beta },%
    \boldsymbol{F}_{\boldsymbol{\theta }_{0}})\right) &\\
    &=\boldsymbol{H}(%
    \boldsymbol{\theta }_{0})^{T}\left( \boldsymbol{H}(\boldsymbol{\theta }_{0})%
    \boldsymbol{J}_{\beta }\left( \boldsymbol{\theta }_{0}\right) ^{-1}%
    \boldsymbol{H}(\boldsymbol{\theta }_{0})^{T}\right) ^{-1}\boldsymbol{H}(%
    \boldsymbol{\theta }_{0}) \times \text{IF}(t_{0}^{l},\boldsymbol{T}_{\beta },\boldsymbol{F}%
    _{\boldsymbol{\theta }_{0}}).&
    \end{align*}
    Then,    
    \begin{eqnarray*}
    \text{IF}_{2}\left( t_{0}^{l},R_{\beta },\boldsymbol{F}_{\boldsymbol{\theta }%
    _{0}}\right)  &=&\text{IF}(t_{0}^{l},\boldsymbol{T}_{\beta },\boldsymbol{F}_{%
    \boldsymbol{\theta }_{0}})\boldsymbol{H}(\boldsymbol{\theta }_{0})\left( 
    \boldsymbol{H}(\boldsymbol{\theta }_{0})\boldsymbol{J}_{\beta }\left( 
    \boldsymbol{\theta }_{0}\right) ^{-1}\boldsymbol{H}(\boldsymbol{\theta }%
    _{0})^{T}\right) ^{-1}\boldsymbol{H}(\boldsymbol{\theta }_{0})^{T}%
    \boldsymbol{M}_{\beta }\left( \boldsymbol{\theta }_{0}\right)  \\
    &&\times \boldsymbol{H}(\boldsymbol{\theta }_{0})^{T}\left( \boldsymbol{H}(%
    \boldsymbol{\theta }_{0})\boldsymbol{J}_{\beta }\left( \boldsymbol{\theta }%
    _{0}\right) ^{-1}\boldsymbol{H}(\boldsymbol{\theta }_{0})^{T}\right) ^{-1}%
    \boldsymbol{H}(\boldsymbol{\theta }_{0})\times \text{IF}(t_{0}^{l},\boldsymbol{T}_{\beta },%
    \boldsymbol{F}_{\boldsymbol{\theta }_{0}}).
    \end{eqnarray*}%
    But 
    \begin{equation*}
    \boldsymbol{M}_{\beta }\left( \boldsymbol{\theta }\right) =\boldsymbol{J}%
    _{\beta }\left( \boldsymbol{\theta }_{0}\right) ^{-1}\boldsymbol{H}(%
    \boldsymbol{\theta }_{0})^{T}\left[ \boldsymbol{H}(\boldsymbol{\theta }_{0})%
    \boldsymbol{J}_{\beta }\left( \boldsymbol{\theta }_{0}\right) ^{-1}%
    \boldsymbol{K}(\boldsymbol{\theta }_{0})\boldsymbol{J}_{\beta }\left( 
    \boldsymbol{\theta }_{0}\right) ^{-1}\boldsymbol{H}(\boldsymbol{\theta }%
    _{0})^{T}\right] ^{-1}\boldsymbol{H}(\boldsymbol{\theta }_{0})\boldsymbol{J}%
    _{\beta }\left( \boldsymbol{\theta }_{0}\right) ^{-1},
    \end{equation*}%
    and 
    \begin{eqnarray*}
    \text{IF}_{2}\left( t_{0}^{l},R_{\beta },\boldsymbol{F}_{\boldsymbol{\theta }%
    _{0}}\right)  &=&\text{IF}(t_{0}^{l},\boldsymbol{T}_{\beta },\boldsymbol{F}_{%
    \boldsymbol{\theta }_{0}})\boldsymbol{H}(\boldsymbol{\theta }_{0})\left( 
    \boldsymbol{H}(\boldsymbol{\theta }_{0})\boldsymbol{J}_{\beta }\left( 
    \boldsymbol{\theta }_{0}\right) ^{-1}\boldsymbol{H}(\boldsymbol{\theta }%
    _{0})^{T}\right) ^{-1}\boldsymbol{H}(\boldsymbol{\theta }_{0})^{T} \\
    &&\times\boldsymbol{J}_{\beta }\left( \boldsymbol{\theta }_{0}\right) ^{-1}%
    \boldsymbol{H}(\boldsymbol{\theta }_{0})^{T}\left[ \boldsymbol{H}(%
    \boldsymbol{\theta }_{0})\boldsymbol{J}_{\beta }\left( \boldsymbol{\theta }%
    _{0}\right) ^{-1}\boldsymbol{K}(\boldsymbol{\theta }_{0})\boldsymbol{J}%
    _{\beta }\left( \boldsymbol{\theta }_{0}\right) ^{-1}\boldsymbol{H}(%
    \boldsymbol{\theta }_{0})^{T}\right] ^{-1}\boldsymbol{H}(\boldsymbol{\theta }%
    _{0})\boldsymbol{J}_{\beta }\left( \boldsymbol{\theta }_{0}\right) ^{-1} \\
    &&\boldsymbol{H}(\boldsymbol{\theta }_{0})^{T} \times \left( \boldsymbol{H}(%
    \boldsymbol{\theta }_{0})\boldsymbol{J}_{\beta }\left( \boldsymbol{\theta }%
    _{0}\right) ^{-1}\boldsymbol{H}(\boldsymbol{\theta }_{0})^{T}\right) ^{-1}%
    \boldsymbol{H}(\boldsymbol{\theta }_{0}) \times \text{IF}(t_{0}^{l},\boldsymbol{T}_{\beta },%
    \boldsymbol{F}_{\boldsymbol{\theta }_{0}}) \\
    &=&\text{IF}(t_{0}^{l},\boldsymbol{T}_{\beta },\boldsymbol{F}_{\boldsymbol{\theta }%
    _{0}})\boldsymbol{H}(\boldsymbol{\theta }_{0})^{T}\left[ \boldsymbol{H}(%
    \boldsymbol{\theta }_{0})\boldsymbol{J}_{\beta }\left( \boldsymbol{\theta }%
    _{0}\right) ^{-1}\boldsymbol{K}(\boldsymbol{\theta }_{0})\boldsymbol{J}%
    _{\beta }\left( \boldsymbol{\theta }_{0}\right) ^{-1}\boldsymbol{H}(%
    \boldsymbol{\theta }_{0})^{T}\right] ^{-1} \\
    &&\boldsymbol{H}(\boldsymbol{\theta }_{0}) \times \text{IF}(t_{0}^{l},\boldsymbol{T}_{\beta
    },\boldsymbol{F}_{\boldsymbol{\theta }_{0}}).
    \end{eqnarray*}%
    In a similar way can be obtained $\text{IF}_{2}\left(
    t_{0}^{1},...,t_{0}^{R},R_{\beta },\boldsymbol{F}_{\boldsymbol{\theta }%
    _{0}}\right).$ 
    
    We can observe that all the IF associated to Wald and Rao-type tests are
    quadratic forms and assuming that the matrices of the different quadratic
    forms are bounded the corresponding IF will be bounded if the IF of the
    WMDPDE are bounded.
    
    \section{Simulation study}
    \label{sec:sim_study}
    \subsection{Experimental design}
    \label{subsec:sim_design}
    
    We examine the performance of the Wald-type and Rao-type tests developed in Sections \ref{sec:wald} and \ref{sec:rao} through a Monte Carlo simulation study. A CyALT design similar to that of \cite{jaenada2026robust} is considered. As shown in Figure \ref{fig:sim_design}, the design has two cyclic-stress conditions ($R=2$), with $K_1=120$ and $K_2=80$ units in each group. There are $L=6$ inspection times, $IT=(15,25,35,50,65,80)$, with $\tau=0.40$. The stress levels are $s_{1F}=s_{2F}=0.40$, $s_{1C}=0.65$, and $s_{2C}=1.00$.
    
     \begin{figure}[H]
         \centering
         \resizebox{1\textwidth}{7.5cm}{%
         \begin{tikzpicture}[>=stealth, scale=1.0]

%
%

\fill[gray!13]  (0,0.0) rectangle (12,1.2);   

\foreach \y in {0.0, 1.2, 2.4, 3.9, 6.0}
  \draw[gray!35, thin, dashed] (0,\y) -- (12,\y);

\draw[gray!55, thick]
  (0,0)--(0,1.2)--(0.8,1.2)--(0.8,0)--(2,0)
  --(2,1.2)--(2.8,1.2)--(2.8,0)--(4,0)
  --(4,1.2)--(4.8,1.2)--(4.8,0)--(6,0)
  --(6,1.2)--(6.8,1.2)--(6.8,0)--(8,0)
  --(8,1.2)--(8.8,1.2)--(8.8,0)--(10,0)
  --(10,1.2)--(10.8,1.2)--(10.8,0)--(12,0);

\draw[blue!90!black, very thick]
  (-0.06,2.4)--(-0.06,3.9)--(0.74,3.9)--(0.74,2.4)--(1.94,2.4)
  --(1.94,3.9)--(2.74,3.9)--(2.74,2.4)--(3.94,2.4)
  --(3.94,3.9)--(4.74,3.9)--(4.74,2.4)--(5.94,2.4)
  --(5.94,3.9)--(6.74,3.9)--(6.74,2.4)--(7.94,2.4)
  --(7.94,3.9)--(8.74,3.9)--(8.74,2.4)--(9.94,2.4)
  --(9.94,3.9)--(10.74,3.9)--(10.74,2.4)--(11.94,2.4);

\draw[red!90!black, very thick]
  (0,2.4)--(0,6.0)--(0.8,6.0)--(0.8,2.4)--(2,2.4)
  --(2,6.0)--(2.8,6.0)--(2.8,2.4)--(4,2.4)
  --(4,6.0)--(4.8,6.0)--(4.8,2.4)--(6,2.4)
  --(6,6.0)--(6.8,6.0)--(6.8,2.4)--(8,2.4)
  --(8,6.0)--(8.8,6.0)--(8.8,2.4)--(10,2.4)
  --(10,6.0)--(10.8,6.0)--(10.8,2.4)--(12,2.4);

\node[font=\large, gray!65] at (12.65, 0.6)  {$\cdots$};
\node[font=\large, gray!65] at (12.65, 3.15) {$\cdots$};
\node[font=\large, gray!65] at (12.65, 4.50) {$\cdots$};

\draw[thick, ->] (-0.3,0) -- (13.4,0)
  node[right, font=\small] {Cycle};
\draw[thick, ->] (-0.3,0) -- (-0.3,7.0)
  node[above, font=\small] {Stress $s$};

\foreach \y/\lab in {
  0.0/{$0.00$},
  1.2/{$0.20$},
  2.4/{$0.40$},
  3.9/{$0.65$},
  6.0/{$1.00$}}
{
  \draw (-0.45,\y)--(-0.3,\y);
  \node[left, font=\small] at (-0.48,\y) {\lab};
}

\draw[<->, gray!60, thin] (0,-0.55)--(0.8,-0.55)
  node[midway, below, font=\footnotesize] {$\tau T$};
\draw[<->, gray!60, thin] (0.8,-0.55)--(2,-0.55)
  node[midway, below, font=\footnotesize] {$(1{-}\tau)T$};
\draw[gray!40, thin, dotted] (0.8,-0.80)--(0.8,6.5);  
\draw[decorate,
  decoration={brace, mirror, amplitude=5pt, raise=3pt}]
  (0,-0.85)--(2,-0.85)
  node[midway, below=7pt, font=\footnotesize] {One cycle of $T$ units};

\foreach \xx in {2.25, 3.75, 5.25, 7.50, 9.75, 12.0}
  \draw[gray!45, thin, dashed] (\xx,0)--(\xx,6.5);

\draw[thick, ->] (0,-2.0)--(13.5,-2.0)
  node[right, font=\small] {$t$ (cycles)};
\node[below, font=\footnotesize] at (0,-2.0) {$0$};

\foreach \xx/\lbl/\t in {
  2.25/{$IT_1$}/{15},
  3.75/{$IT_2$}/{25},
  5.25/{$IT_3$}/{35},
  7.50/{$IT_4$}/{50},
  9.75/{$IT_5$}/{65},
  12.0/{$IT_6$}/{$80=t_c$}}
{
  \draw[thick] (\xx,-1.88)--(\xx,-2.12);
  \node[above, font=\footnotesize] at (\xx,-1.88) {\lbl};
  \node[below, font=\footnotesize] at (\xx,-2.12) {\t};
}

\node[right, font=\footnotesize, gray!70]   at (12.0,0.0)  {$s_{0F}$};
\node[right, font=\footnotesize, gray!70]   at (12.0,1.2)  {$s_{0C}$};
\node[right, font=\footnotesize, black!60]  at (12.0,2.4)  {$s_F$};
\node[right, font=\footnotesize, blue!90!black] at (12.0,3.9) {$s_{1C} \rightarrow Cond. 1:\;K_1{=}40,60,\textbf{80},100,120,140,160 $};
\node[right, font=\footnotesize, red!90!black]  at (12.0,6.0) {$s_{2C}\rightarrow Cond. 2:\;K_2{=}60,90,\textbf{120},150,180,210,240$};


\end{tikzpicture}}
         \caption{}
         \label{fig:sim_design}
     \end{figure}
    
    The true parameter vector is $\boldsymbol{\theta}_0=(5.0,-2.0,0.5)^\top$, and the tuning parameter takes values $\beta \in \{0, 0.2, 0.4, 0.6, 0.8, 1.0\}$, where $\beta=0$ yields the classical Wald and Rao tests. All results are based on $B=1000$ Monte Carlo replications at significance level $\alpha=0.05$. We study the four simple null hypotheses given in \eqref{W1}-\eqref{W3}, denoted $H_0^{(1)}$, $H_0^{(2a)}$, $H_0^{(2b)}$, and $H_0^{(3)}$, as defined in Section \ref{sec:robust_tests}. The empirical level and power are computed as
    \begin{equation}
        \widehat{\alpha} = \frac{1}{B}\sum_{b=1}^{B} 
        \mathbf{1}\!\left[T_K^{\beta,(b)} > 
        \chi^2_{r,\,0.05}\right]\bigg|_{\boldsymbol{\theta} 
        = \boldsymbol{\theta}_0},
        \qquad
        \widehat{\pi} = \frac{1}{B}\sum_{b=1}^{B} 
        \mathbf{1}\!\left[T_K^{\beta,(b)} > 
        \chi^2_{r,\,0.05}\right]\bigg|_{\boldsymbol{\theta} 
        = \boldsymbol{\theta}^*},
        \label{eq:emp_level_power}
    \end{equation}
    where $T_K^\beta$ denotes the relevant test statistic (Wald-type or Rao-type) and $r$ is the degrees of freedom for the corresponding hypothesis.
    
    We index $\boldsymbol\theta^*$ by parameter: $\boldsymbol{\theta}^*_1$ for $H_0^{(1)}$, $\boldsymbol{\theta}^*_{2a}$ for the hypotheses on $\alpha_0$, $\boldsymbol{\theta}^*_{2b}$ for the hypotheses on $\alpha_1$, and $\boldsymbol{\theta}^*_3$ for $H_0^{(3)}$. The values used are    
    \begin{align*}
        & \boldsymbol{\theta}^*_1 = (5.2, -2.2, 0.48)^\top, &    \\
        &\boldsymbol{\theta}^*_{2a} = (5.2, -2.0, 0.5)^\top,  \quad
        \boldsymbol{\theta}^*_{2b} = (5.0, -2.2, 0.5)^\top, &    \\
        &\boldsymbol{\theta}^*_3 = (5.2, -2.2, 0.5)^\top. &
    \end{align*}
    Since these alternative values are chosen for power calculation and not on the basis of whether the null hypothesis is simple or composite, the same vectors $\boldsymbol{\theta}^*_{2a}$ and $\boldsymbol{\theta}^*_{2b}$, and the same simulated datasets generated from them, are reused for the composite hypotheses $H_0^{(4a)}$ and $H_0^{(4b)}$ studied in subsubsections (c)--(d) below.
    
    We assess the behavior of the proposed tests as sample size increases and as contamination increases, 
    with the results discussed in detail in Sections \ref{subsec:sim_wald_results} and 
    \ref{subsec:sim_rao_results}. For this purpose we consider sample sizes $K \in \{ 100,150,200,250,300,350,400 \}$ and contamination proportion $\varepsilon$ is varied over the range from $0$ to $0.10$, following the same contamination scheme used by \cite{jaenada2026robust} for parameter estimation in the CyALT model. Contamination is introduced in the first three inspection intervals: for 
    cyclic-stress condition 1, a proportion $(1-\varepsilon)$ of units follows the true multinomial model 
    with probability vector $\mathbf{p}_1(\boldsymbol{\theta})$, while the remaining proportion $\varepsilon$ 
    is drawn uniformly from the first three inspection intervals; failure counts for cyclic-stress condition 
    2 are always generated from the true model. The contamination proportion is capped at $\varepsilon = 
    0.10$, since larger values would change the underlying model itself rather than testing robustness to a 
    small fraction of outlying observations.

    \subsection{Wald-type tests}
    \label{subsec:sim_wald_results}

    For each hypothesis, the Wald-type test statistic is computed as described in Section \ref{sec:wald}. For $H_1$ \eqref{W1}, $\widehat{\boldsymbol{\Sigma}}_\beta$ is evaluated at the null $\boldsymbol{\theta}_0$, and the test statistic is
    \begin{equation}
        W_K^\beta(\boldsymbol{\theta}_0) = 
        (\widehat{\boldsymbol{\theta}}_\beta - \boldsymbol{\theta}_0)^\top 
        \widehat{\boldsymbol{\Sigma}}_\beta(\boldsymbol{\theta}_0)^{-1} 
        (\widehat{\boldsymbol{\theta}}_\beta - \boldsymbol{\theta}_0),
        \label{eq:wald_H1_sim}
    \end{equation}
    compared to $\chi^2_{3,\,0.05}$. 
    If we denote 
        \begin{equation*}
        \boldsymbol{J}_{\beta}(\boldsymbol{\theta}_0) = \left(
        J_{\beta}^{ij}(\boldsymbol{\theta}_0) \right)_{i,j=1,2,3}
        \text{ and }
        \boldsymbol{K}_{\beta}(\boldsymbol{\theta}_0) = \left(
        K_{\beta}^{ij}(\boldsymbol{\theta}_0) \right)_{i,j=1,2,3},
        \end{equation*} for $H_{2a}$ and $H_{2b}$ (\eqref{W2a}-\eqref{W2b}), the Wald test statistics reduce to
    \begin{equation}
        W_K^\beta(\alpha_0) = 
        K(\widehat{\alpha}_{0,\beta}-\alpha_{0,0})^2
        \frac{\big(J_\beta^{11}(\boldsymbol{\theta}_0)\big)^2}{K_\beta^{11}(\boldsymbol{\theta}_0)},
        \qquad
        W_K^\beta(\alpha_1) = 
        K(\widehat{\alpha}_{1,\beta}-\alpha_{1,0})^2
        \frac{\big(J_\beta^{22}(\boldsymbol{\theta}_0)\big)^2}{K_\beta^{22}(\boldsymbol{\theta}_0)},
        \label{eq:wald_H2_sim}
    \end{equation}
    respectively, where $\widehat{\alpha}_{0,\beta}$ and $\widehat{\alpha}_{1,\beta}$ are evaluated by minimizing the DPD loss \eqref{eq:DPD} with respect to $\alpha_0$ and $\alpha_1$, respectively, each obtained with the remaining two parameters fixed at their known values. 
    Each statistic is compared to $\chi^2_{1,\,0.05}$.  Finally, for $H_3$ \eqref{W3}, the test statistic is
    \begin{equation}
        W_K^\beta(\alpha_0, \alpha_1) = 
        K\Big(\widehat{\boldsymbol{\theta}}_{\beta}^{1:2} - \boldsymbol{\theta}_{0}^{1:2}\Big)^\top 
        \left(\Big[\boldsymbol{J}_\beta^{1:2,\,1:2}(\boldsymbol{\theta}_0)\Big]^{-1}
        \boldsymbol{K}_\beta^{1:2,\,1:2}(\boldsymbol{\theta}_0)
        \Big[\boldsymbol{J}_\beta^{1:2,\,1:2}(\boldsymbol{\theta}_0)\Big]^{-1}\right)^{-1}
        \Big(\widehat{\boldsymbol{\theta}}_{\beta}^{1:2} - \boldsymbol{\theta}_{0}^{1:2}\Big),
        \label{eq:wald_H3_sim}
    \end{equation}
    compared to $\chi^2_{2,\,0.05}$. Here $\widehat{\boldsymbol{\theta}}_{\beta}^{1:2} =         (\widehat{\alpha}_{0,\beta}, \widehat{\alpha}_{1,\beta})^\top$        and $\boldsymbol{\theta}_{0}^{1:2} =         (\alpha_{0,0}, \alpha_{1,0})^\top$, where $\widehat{\alpha}_{0,\beta}$ and $\widehat{\alpha}_{1,\beta}$ denote the joint minimizers of the DPD loss \eqref{eq:DPD} with respect to $(\alpha_0,\alpha_1)$, obtained with $\sigma$ plugged in at its known value.

    We also study the composite null hypotheses $H_0^{(4a)}$ and $H_0^{(4b)}$ given in \eqref{W4a}-\eqref{W4b}, in which the parameters not being tested are treated as unknown. The corresponding Wald-type statistics follow from Definition \ref{def:wald_composite} (eq. \eqref{W1.6}), evaluated at the WMDPDE $\widehat{\boldsymbol{\theta}}_\beta$ rather than at the fixed null $\boldsymbol{\theta}_0$. Each of $H_0^{(4a)}$, 
    $H_0^{(4b)}$, and $H_0^{(4c)}$ imposes a single restriction ($r=1$ in the general Definition), so $\boldsymbol h(\boldsymbol\theta)$ reduces to a scalar and $\boldsymbol H(\boldsymbol\theta)$ to a $3\times1$ vector in each case. For $H_0^{(4a)}$, the constraint and its gradient are $\boldsymbol{h}(\boldsymbol{\theta}) =  \alpha_0 - \alpha_{0,0}$ and $\boldsymbol{H}(\boldsymbol{\theta})  = (1,0,0)^\top$, respectively, giving
    \begin{equation}
        W_K^\beta(\alpha_0) = 
        \frac{(\widehat{\alpha}_{0,\beta}-\alpha_{0,0})^2}
        {\widehat{\boldsymbol{\Sigma}}^{11}_\beta
        (\widehat{\boldsymbol{\theta}}_\beta)},
        \label{eq:wald_H4a_sim}
    \end{equation}
    and for $H_0^{(4b)}$, $\boldsymbol{h}(\boldsymbol{\theta}) = \alpha_1 - \alpha_{1,0}$ and $\boldsymbol{H}(\boldsymbol{\theta}) = (0,1,0)^\top$, respectively, giving
    \begin{equation}
        W_K^\beta(\alpha_1) = 
        \frac{(\widehat{\alpha}_{1,\beta}-\alpha_{1,0})^2}
        {\widehat{\boldsymbol{\Sigma}}^{22}_\beta
        (\widehat{\boldsymbol{\theta}}_\beta)},
        \label{eq:wald_H4b_sim}
    \end{equation}
    and, although not included in the simulation study, the analogous constraint and statistic for $H_0^{(4c)}$ \eqref{W4c} on the shape parameter are $\boldsymbol{h}(\boldsymbol{\theta}) = \sigma - \sigma_{0,0}$ and $\boldsymbol{H}(\boldsymbol{\theta}) = (0,0,1)^\top$, respectively, giving
    \begin{equation}
        W_K^\beta(\sigma) = 
        \frac{(\widehat{\sigma}_{\beta}-\sigma_{0})^2}
        {\widehat{\boldsymbol{\Sigma}}^{33}_\beta
        (\widehat{\boldsymbol{\theta}}_\beta)},
        \label{eq:wald_H4c_sim}
    \end{equation}
    each compared to $\chi^2_{1,\,0.05}$, with \eqref{eq:wald_H4c_sim} used later in the real data illustration of Section 
    \ref{sec:real_data}.

    \subsubsection*{(a) Empirical level and power while varying sample size (simple hypotheses)}
    
    Figure \ref{fig:sim_samplesize} shows the empirical level and power against sample size $K$ with no contamination. It can be seen that the empirical level remains close to the nominal level of $0.05$ for all four simple hypotheses and all $\beta$ values. A natural variation due to Monte Carlo decreases as the sample size $K$ increases. The classical Wald test and robust Wald-type tests for any $\beta$ behave similarly in terms of level under clean data.
    \begin{figure}[H]
        \centering
        \includegraphics[width=\textwidth]{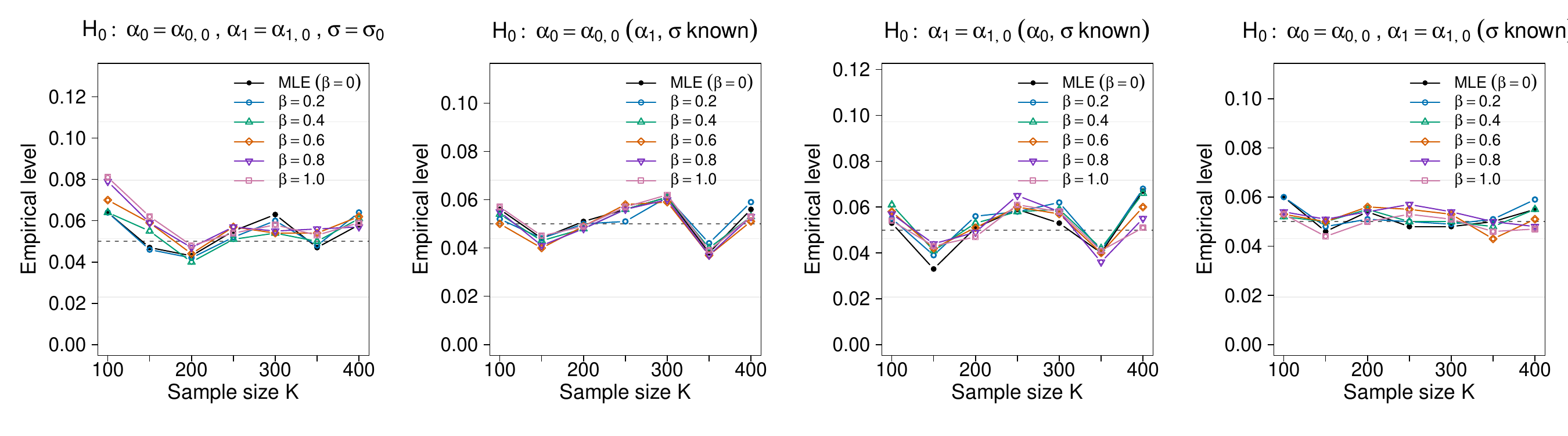}\\[6pt]
        \includegraphics[width=\textwidth]{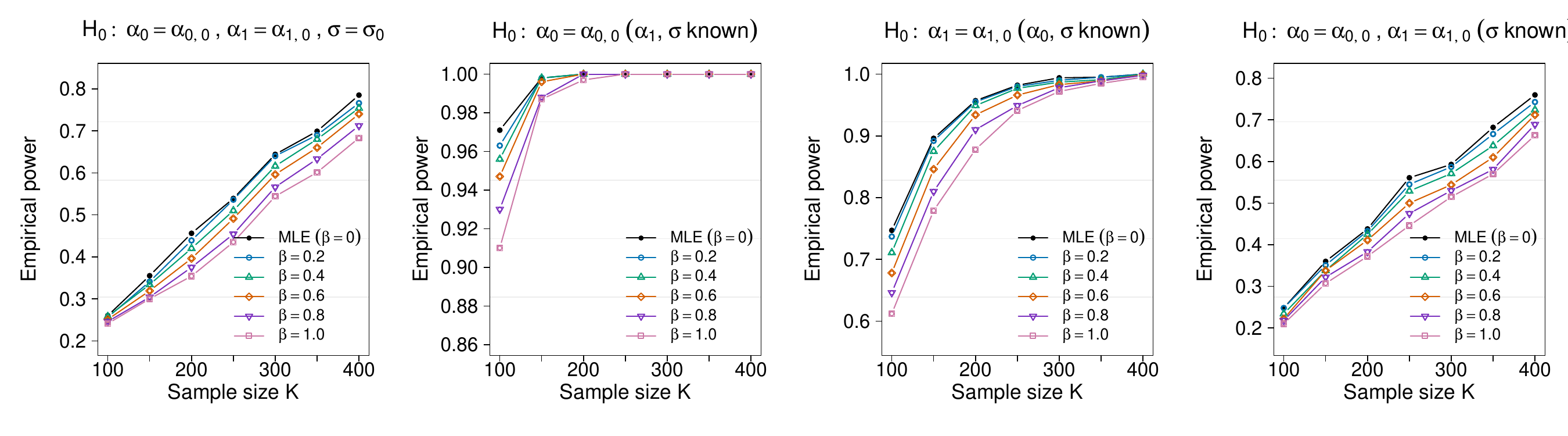}
        \caption{Empirical level (top) and empirical power (bottom) 
        of the Wald-type tests against sample size $K$ 
        ($\varepsilon = 0$). The dashed line represents the level of significance
        $\alpha = 0.05$.}
        \label{fig:sim_samplesize}
    \end{figure}
    
    The empirical power increases with $K$ for all four hypotheses, where the classical Wald test achieves the highest power for any given sample size, consistent with its optimality under uncontaminated observations. Robust Wald-type tests with larger $\beta$ show a modest reduction in power, reflecting the well-known trade-off between efficiency and robustness.

    \subsubsection*{(b) Empirical level and power over increasing contamination (simple hypotheses)}
    Figure \ref{fig:sim_contamination} shows the empirical level and power against contamination proportion $\varepsilon$ at fixed $K=200$. The level results clearly show the robustness advantage of the Wald-type tests. At $\varepsilon = 0$, all tests maintain the nominal level. As $\varepsilon$ increases, the level of the classical Wald test inflates rapidly toward one for all four hypotheses, while robust estimators with larger $\beta$ maintain level control for longer, with $\beta=1.0$ showing the slowest inflation. 
 
	The power results present a more varied picture. For $H_0^{(1)}$, power first decreases. As 
	contamination grows further, Figure \ref{fig:bias_check} shows that the mean estimate under the 
	alternative moves toward the null before crossing it and moving away on the other side; once it crosses, 
	the statistic grows large enough that the test rejects almost regardless of the truth, which makes power 
	look better without the test actually working better. $H_0^{(3)}$ shows the same pattern for the same 
	reason. For $H_0^{(2a)}$ and $H_0^{(2b)}$, only one component drives the statistic, and it matters 
	which side of the null the true alternative sits on. For $H_0^{(2a)}$, the estimate moves toward the 
	null but does not reach it, so power falls throughout. For $H_0^{(2b)}$, the estimate moves away from 
	the null instead, so power rises throughout. If $H_0^{(2b)}$'s true alternative lay on the other side of 
	the null, the power plot would likely show the same robustness pattern as the estimates. Power for the 
	Wald-type tests would be larger than for the classical Wald test. This is not the case here. The estimates in 
	Figure \ref{fig:bias_check} still show the expected robustness. But the power plot does not. This is 
	simply due to which side of the null the true alternative happens to sit on. In every case, the MLE 
	drifts furthest from its starting value, since it lets a few unusual observations dominate the fit more 
	than the DPD-based estimators do.
	\begin{figure}[H]
		\centering
		\includegraphics[width=\textwidth]{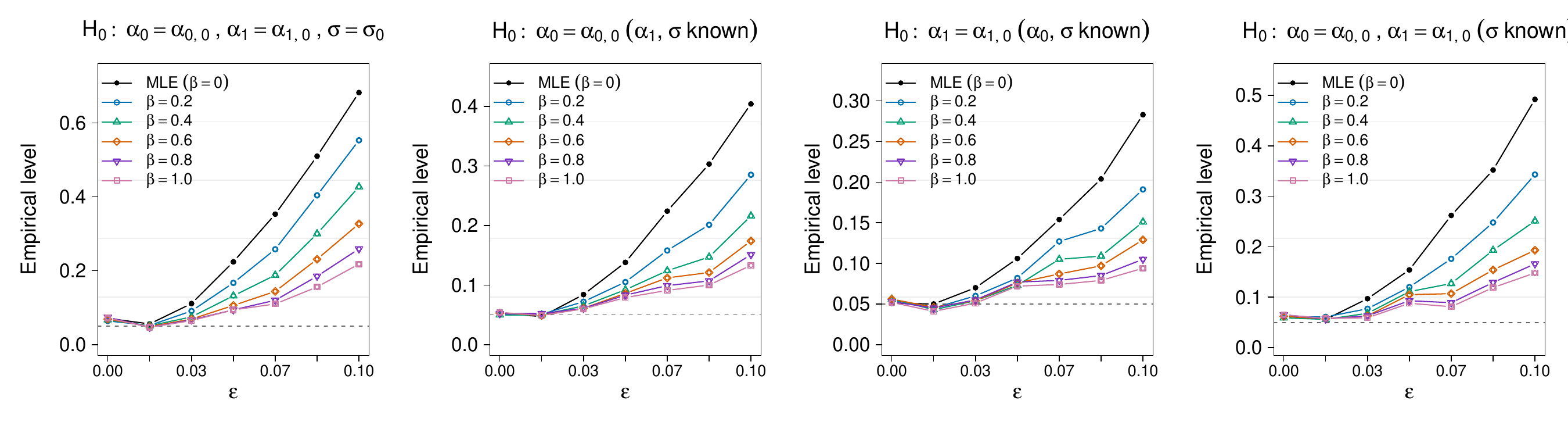}\\[6pt]
		\includegraphics[width=\textwidth]{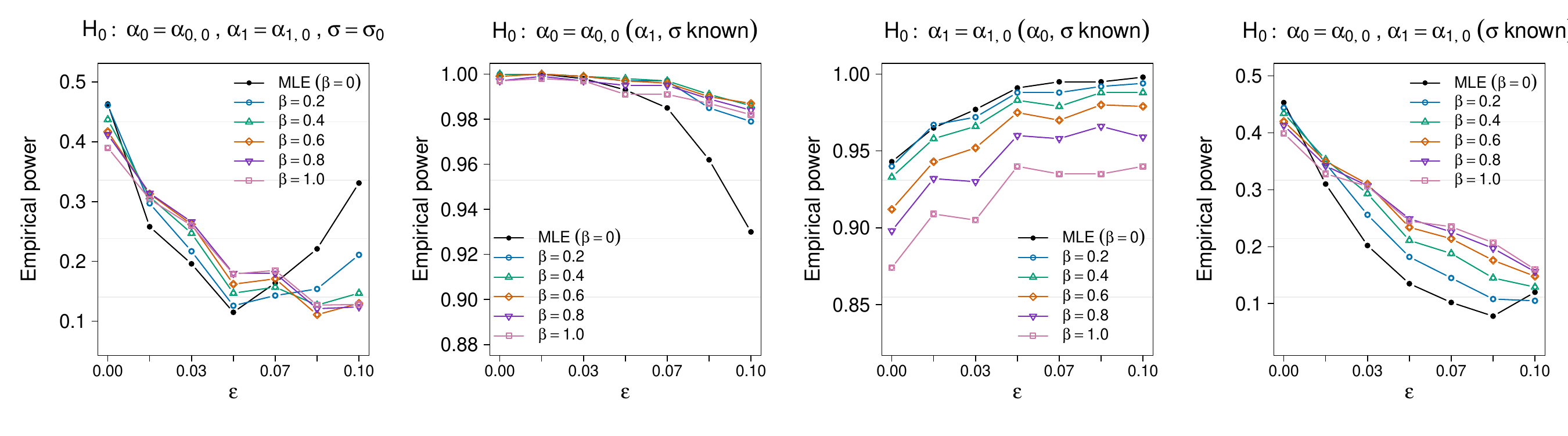}
		\caption{Empirical level (top) and empirical power (bottom) 
			of the Wald-type tests against contamination proportion 
			$\varepsilon$ with $K=200$.}
		\label{fig:sim_contamination}
	\end{figure}
    \begin{figure}[H]
        \centering
        \includegraphics[width=\textwidth]{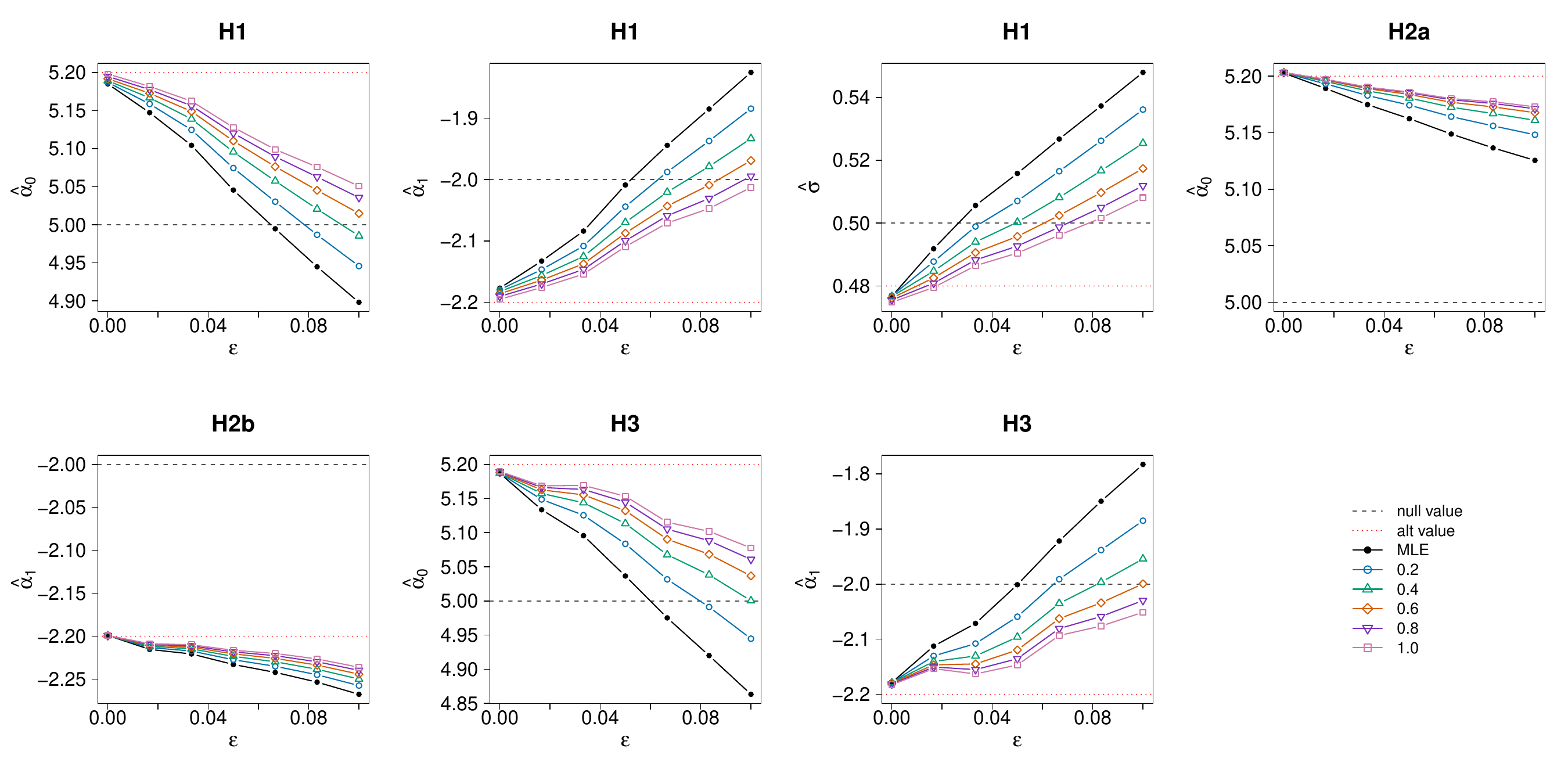}
        \caption{Mean parameter estimates against contamination 
        proportion $\varepsilon$ under the alternative with $K=200$ and $1000$ iterations, for each 
        parameter tested in $H_0^{(1)}$, $H_0^{(2a)}$, $H_0^{(2b)}$, and $H_0^{(3)}$. The dashed line marks the null value; the dotted line marks 
        the true alternative value used to generate the data.}
        \label{fig:bias_check}
    \end{figure}    
    \noindent This pattern shows that the relationship between contamination and power depends on the direction in which contamination biases the estimates relative to the null and the alternative, and it is the level results in Figure \ref{fig:sim_contamination}, unaffected by this issue, that remain the primary evidence of the robustness advantage of the Wald-type tests.

    \subsubsection*{(c) Empirical level and power while varying sample size (composite hypotheses)}
    

    The top panels of Figure \ref{fig:sim_wald_composite} show the empirical level and power for the 
    composite null hypotheses $H_0^{(4a)}$ and $H_0^{(4b)}$, against sample size $K$ with no contamination. 
    As in the simple-hypothesis case, the empirical level stays close to the nominal $0.05$ for all $\beta$, 
    so estimating the extra nuisance parameters does not affect the significance level. Power increases with 
    $K$ for both hypotheses, with the classical Wald test highest and a modest, well-controlled loss as $\beta$ increases, as 
    seen before in Section \ref{subsec:sim_wald_results}(a).

     \begin{figure}[H]
        \centering
        \includegraphics[width=\textwidth]{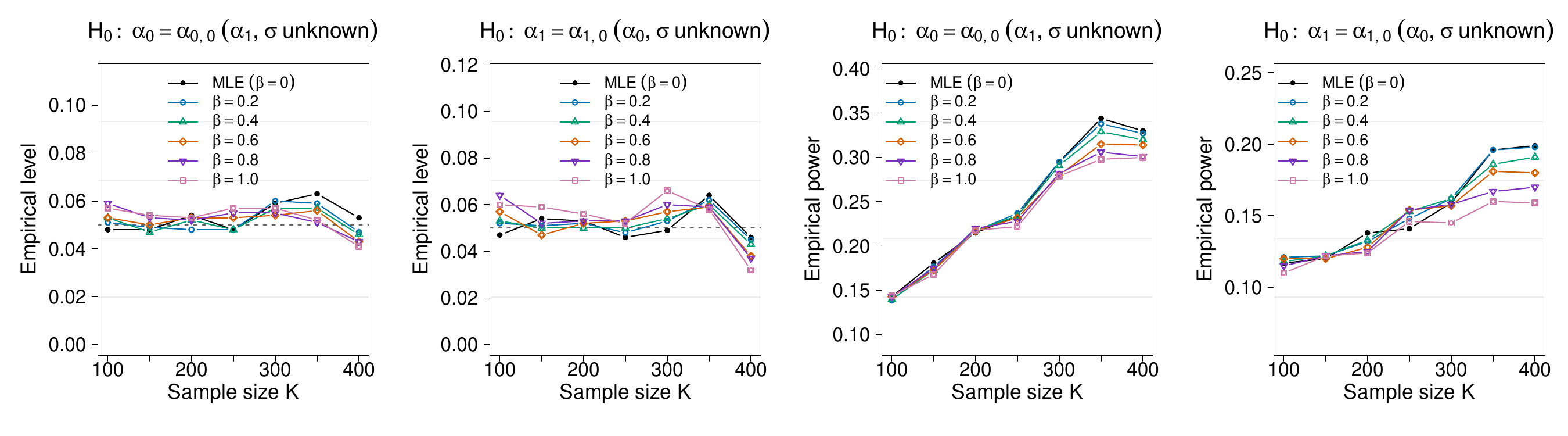}\\[6pt]
        \includegraphics[width=\textwidth]{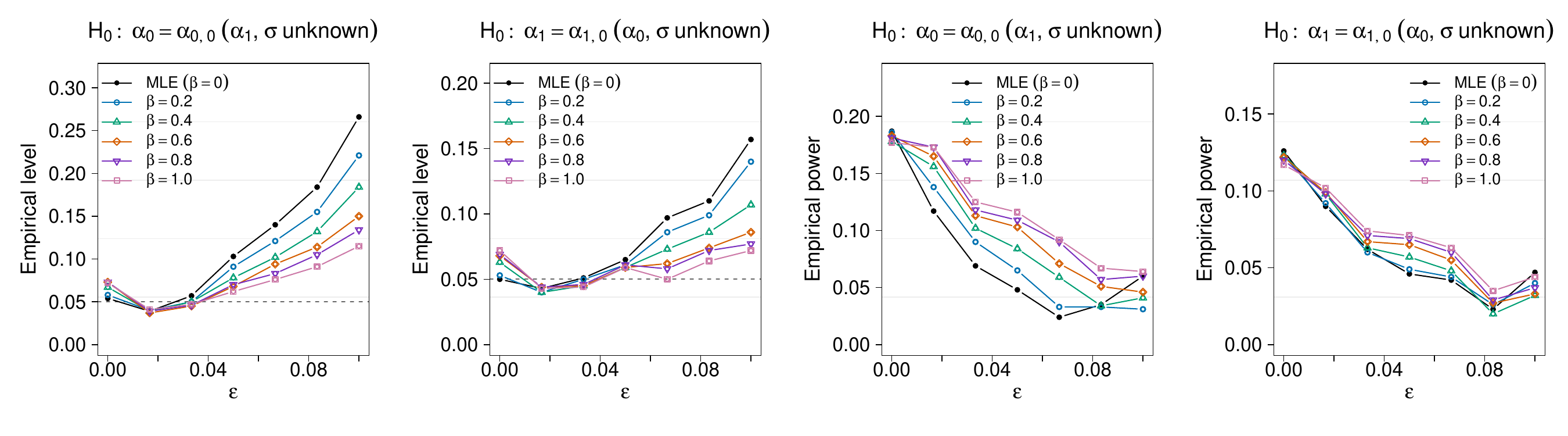}
        \caption{Empirical level and empirical power of the Wald-type tests for the composite null hypotheses $H_0^{(4a)}$ and $H_0^{(4b)}$, 
        against sample size $K$ with no contamination (top) and against contamination proportion $\varepsilon$ when $K=200$ (bottom)}
        \label{fig:sim_wald_composite}
    \end{figure}

`   \subsubsection*{(d) Empirical level and power over increasing contamination (composite hypotheses)}


    The bottom panels of Figure \ref{fig:sim_wald_composite} show the same quantities against $\varepsilon$ 
    at $K=200$. The level results repeat the robustness pattern seen throughout: the classical Wald test inflates fast, while the Wald-type tests with larger $\beta$ stay closer to the nominal level for longer. Power behaves differently for the 
    two hypotheses, for the same reason as in the simple-hypothesis case. For $H_0^{(4a)}$, the estimate 
    crosses the null under contamination, so power first drops; the statistic then grows large enough to 
    reject almost regardless of the truth, so power looks better without the test actually working better. 
    For $H_0^{(4b)}$, the estimate moves away from the null without crossing, so power declines throughout 
    instead, and the robustness of the Wald-type tests is visible directly: power falls more slowly for larger $\beta$ than for the classical Wald test.
    
    It is also worth noting that power for both hypotheses is markedly lower here than for the corresponding 
    simple hypotheses in Section \ref{subsec:sim_wald_results}(a)-(b). This is expected. Unlike $H_0^{(2a)}$ 
    and $H_0^{(2b)}$, where the two nuisance parameters are fixed at their known true values, here they must 
    be estimated jointly with the tested parameter. Estimating extra parameters inflates the variance of the 
    estimator being tested, which reduces power for the same effect size. The level results, unaffected by 
    this loss of power, remain the primary evidence of robustness for the composite hypotheses.

    \subsection{Rao-type tests}
    \label{subsec:sim_rao_results}
    
    In this subsection, we are now going to examine the performance of the Rao-type tests proposed in Section \ref{sec:rao} using the same simulation design, contamination scheme, and alternative parameter vectors described in the Section \ref{subsec:sim_design}. An important computational difference from the Wald-type tests is worth noting here for the simple null hypotheses. The Rao-type statistic does not require computing the WMDPDE from each simulated dataset, since it does not involve an estimate of $\boldsymbol{\theta}$. This makes the Rao-type simulation considerably faster than the Wald-type simulation, where the WMDPDE must be re-estimated by numerical optimisation for every replication.

    For $H_0^{(1)}$ \eqref{W1}, the test statistic is
    \begin{equation}
        R_K^\beta(\boldsymbol{\theta}_0) = K\,
        \boldsymbol{U}_\beta(\boldsymbol{\theta}_0)^\top\,
        \boldsymbol{K}_\beta(\boldsymbol{\theta}_0)^{-1}\,
        \boldsymbol{U}_\beta(\boldsymbol{\theta}_0),
        \label{eq:rao_H1_sim}
    \end{equation}
    compared to $\chi^2_{3,\,0.05}$. For $H_0^{(2a)}$ and $H_0^{(2b)}$ \eqref{W2}, the test statistics reduce to
    \begin{equation}
        R_K^\beta(\alpha_0) = 
        \frac{K\,\left(U_\beta^1(\boldsymbol{\theta}_0)\right)^2}
        {\boldsymbol{K}_\beta^{11}(\boldsymbol{\theta}_0)},
        \qquad
        R_K^\beta(\alpha_1) = 
        \frac{K\,\left(U_\beta^2(\boldsymbol{\theta}_0)\right)^2}
        {\boldsymbol{K}_\beta^{22}(\boldsymbol{\theta}_0)},
        \label{eq:rao_H2_sim}
    \end{equation}
    respectively, each compared to $\chi^2_{1,\,0.05}$. Finally, for $H_0^{(3)}$ \eqref{W3}, the test statistic is
    \begin{equation}
        R_K^\beta(\alpha_0,\alpha_1) = K\,
        \boldsymbol{U}^{1:2}_\beta(\boldsymbol{\theta}_0)^\top\,
        \left(\boldsymbol{K}_\beta^{1:2,1:2}(\boldsymbol{\theta}_0)\right)^{-1}\,
        \boldsymbol{U}^{1:2}_\beta(\boldsymbol{\theta}_0),
        \label{eq:rao_H3_sim}
    \end{equation}
    compared to $\chi^2_{2,\,0.05}$, where $\boldsymbol{U}^{1:2}_\beta(\boldsymbol{\theta}_0) = (U_\beta^1(\boldsymbol{\theta}_0), U_\beta^2(\boldsymbol{\theta}_0))^\top$ denotes the first two components of the estimating function. The empirical level and power are computed exactly as in \eqref{eq:emp_level_power}, replacing $T_K^\beta$ with $R_K^\beta$.
    
    For the composite null hypotheses $H_0^{(4a)}$ and $H_0^{(4b)}$, the Rao-type statistic follows from Definition \ref{def:rao_composite} 
    (eq. \eqref{5}-\eqref{6}), evaluated at the restricted WMDPDE $\widetilde{\boldsymbol{\theta}}_\beta$ obtained under the 
    corresponding null constraint. For $H_0^{(4a)}$, $\boldsymbol{h}(\boldsymbol{\theta}) = \alpha_0 - \alpha_{0,0}$ and $\boldsymbol{H}(\boldsymbol{\theta}) = (1,0,0)^\top$; for $H_0^{(4b)}$, $\boldsymbol{h}(\boldsymbol{\theta}) = \alpha_1 - \alpha_{1,0}$ and $\boldsymbol{H}(\boldsymbol{\theta}) = (0,1,0)^\top$. In this scalar case, $\boldsymbol{Q}^\beta(\widetilde{\boldsymbol{\theta}}_\beta)^\top  \boldsymbol{K}_\beta(\widetilde{\boldsymbol{\theta}}_\beta) \boldsymbol{Q}^\beta(\widetilde{\boldsymbol{\theta}}_\beta)$ reduces to a scalar rather than an $r\times r$ matrix, so no matrix inversion is required at this step; the statistic becomes
    \begin{equation}
        R_K^\beta(\widetilde{\boldsymbol{\theta}}_\beta) = 
        K\,\frac{\left(\boldsymbol{Q}^\beta(\widetilde{\boldsymbol{\theta}}_\beta)^\top
        \boldsymbol{U}_\beta(\widetilde{\boldsymbol{\theta}}_\beta)\right)^2}
        {\boldsymbol{Q}^\beta(\widetilde{\boldsymbol{\theta}}_\beta)^\top
        \boldsymbol{K}_\beta(\widetilde{\boldsymbol{\theta}}_\beta)\,
        \boldsymbol{Q}^\beta(\widetilde{\boldsymbol{\theta}}_\beta)},
        \qquad
        \boldsymbol{Q}^\beta(\widetilde{\boldsymbol{\theta}}_\beta) = 
        \frac{\boldsymbol{J}_\beta(\widetilde{\boldsymbol{\theta}}_\beta)^{-1}
        \boldsymbol{H}}
        {\boldsymbol{H}^\top
        \boldsymbol{J}_\beta(\widetilde{\boldsymbol{\theta}}_\beta)^{-1}
        \boldsymbol{H}},
        \label{eq:rao_H4_sim}
    \end{equation}
    compared to $\chi^2_{1,\,0.05}$. Unlike the simple-hypothesis statistics in \eqref{eq:rao_H2_sim}, this requires computing the restricted WMDPDE $\widetilde{\boldsymbol{\theta}}_\beta$ by numerical optimisation for each replication, so the computational advantage described above, avoiding 
    re-estimation of $\boldsymbol{\theta}$, does not carry over to the composite case. 

    We now present the simulation results for the four simple null hypotheses \eqref{W1}-\eqref{W3} and the composite null hypotheses \eqref{W4a}-\eqref{W4b}, in Subsubsections (a)--(b) and (c)--(d) below, respectively.

    \subsubsection*{(a) Empirical level and power while varying sample size (simple hypotheses)}
    
    Figure \ref{fig:sim_rao_samplesize} shows the empirical level and power against sample size $K$ with no contamination. The empirical 
    level tries to remain close to the nominal level value of $0.05$ for all four hypotheses, with the usual reduction in Monte Carlo noise as 
    $K$ grows. As seen for the Wald-type tests, the classical Rao test also achieves the highest power under clean data, and the Rao-type tests show a small and expected loss of power as $\beta$ increases.
    
    \begin{figure}[H]
        \centering
        \includegraphics[width=\textwidth]{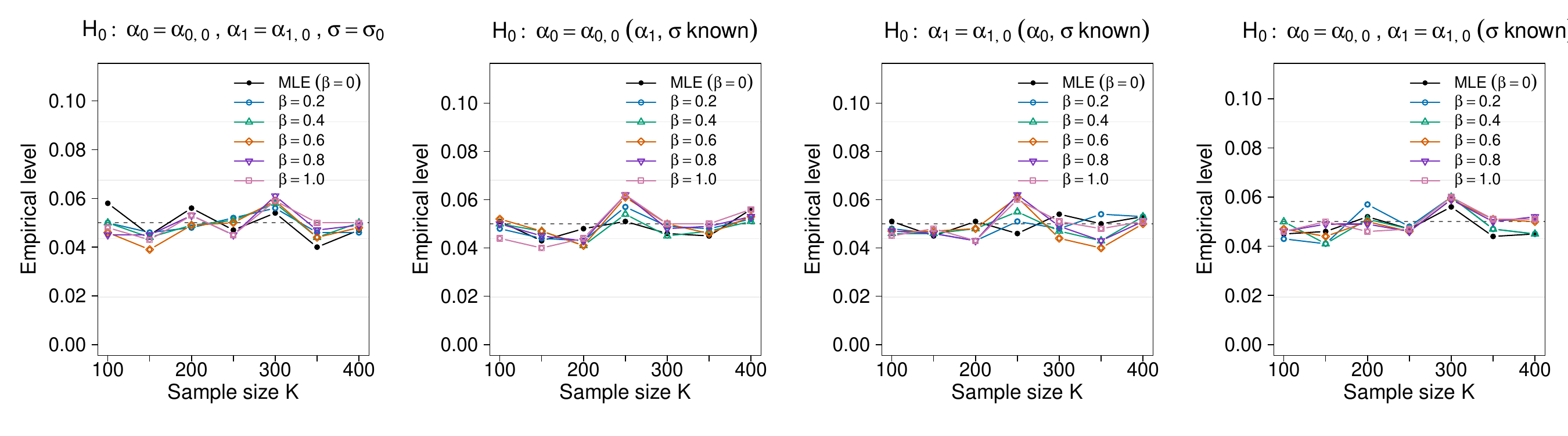}\\[6pt]
        \includegraphics[width=\textwidth]{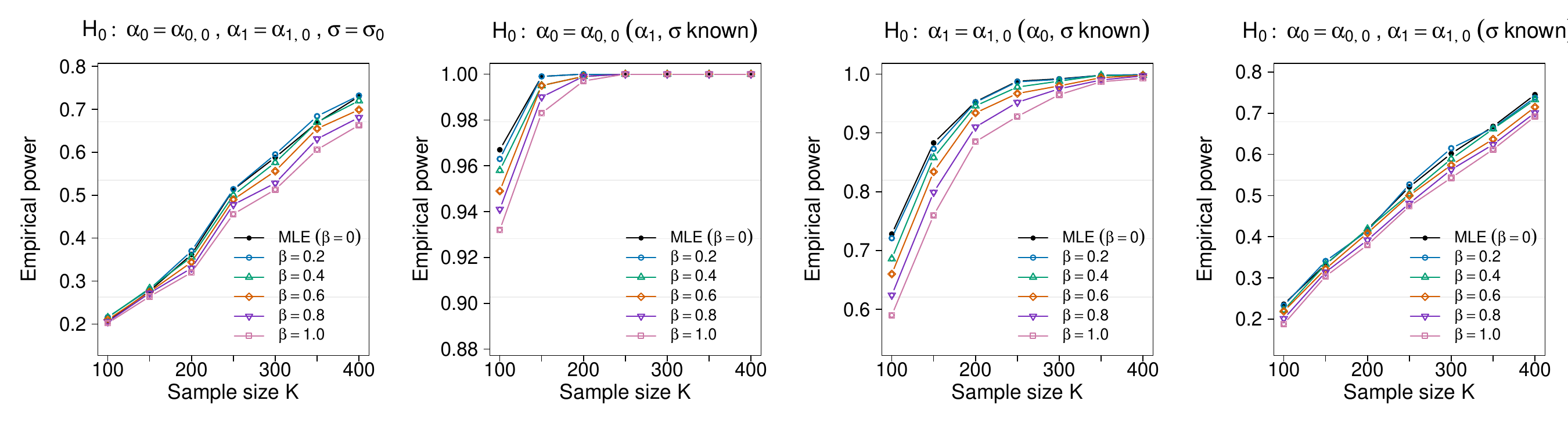}
        \caption{Empirical level (top) and empirical power (bottom) of the Rao-type tests against sample size $K$ under clean data.}
        \label{fig:sim_rao_samplesize}
    \end{figure}
    
    \subsubsection*{(b) Empirical level and power over increasing 
    contamination (simple hypotheses)}
    
    Figure \ref{fig:sim_rao_contamination} shows the empirical 
    level and power against contamination proportion $\varepsilon$ 
    at fixed $K=200$, using the same contamination scheme as for 
    the Wald-type tests.   
    
    The level results confirm the same robustness pattern seen for the Wald-type tests. The classical Rao 
    test inflates quickly as $\varepsilon$ increases, while the robust Rao-type versions of the tests with larger $\beta$ 
    stay closer to the nominal level for longer. 
    
    \begin{figure}[H]
    	\centering
    	\includegraphics[width=\textwidth]{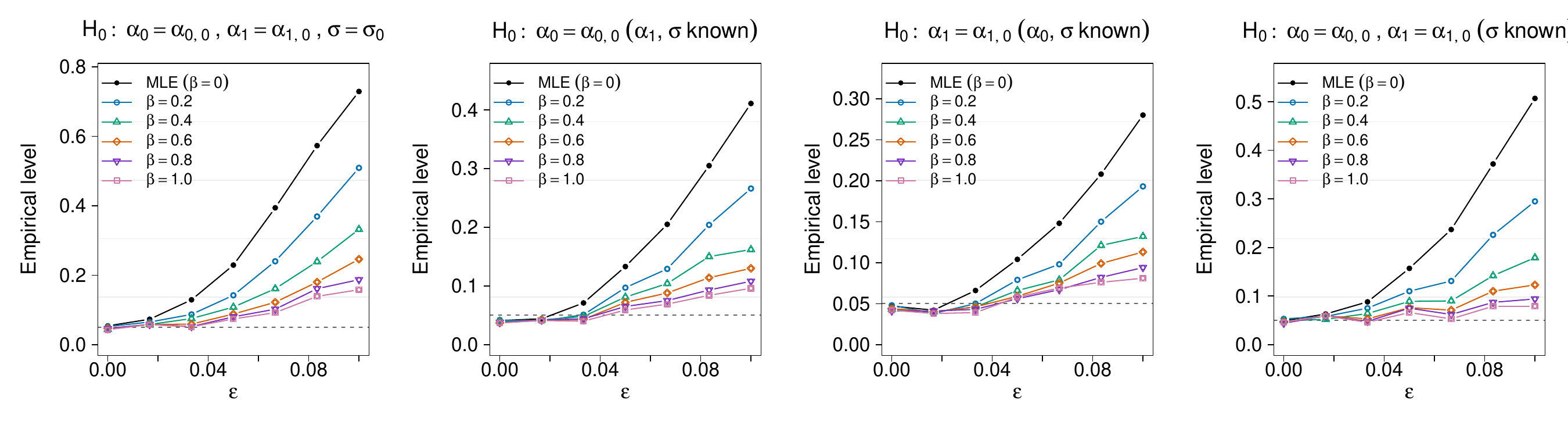}\\[6pt]
    	\includegraphics[width=\textwidth]{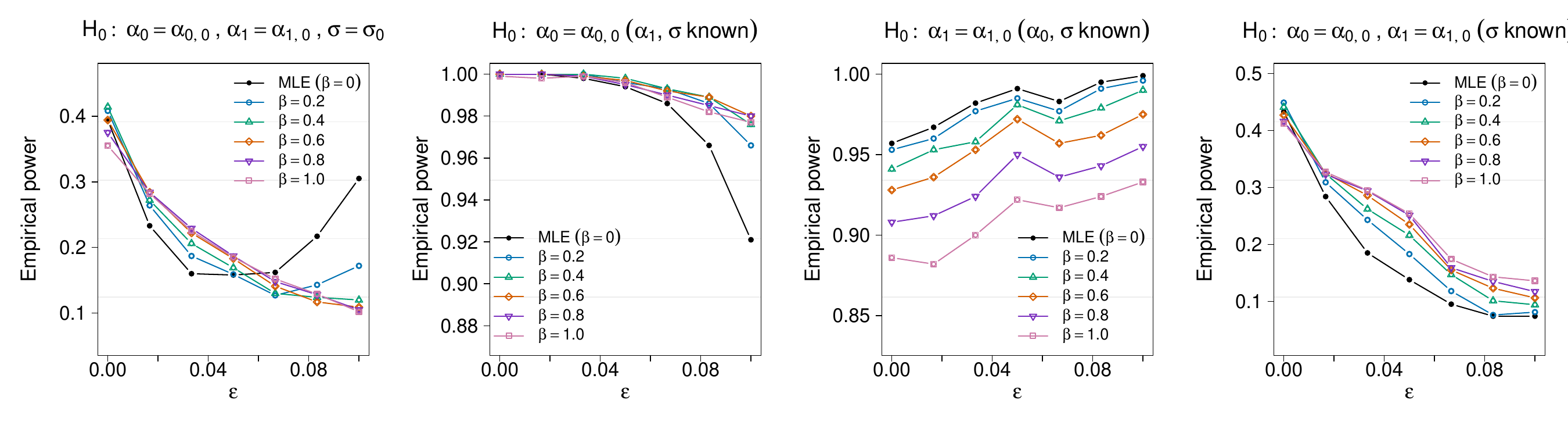}
    	\caption{Empirical level (top) and empirical power (bottom) of the Rao-type tests against contamination proportion $\varepsilon$ with $K=200$ for the four simple null hypotheses.}
    	\label{fig:sim_rao_contamination}
    \end{figure}
    \begin{figure}[H]
    	\centering
    	\includegraphics[width=\textwidth]{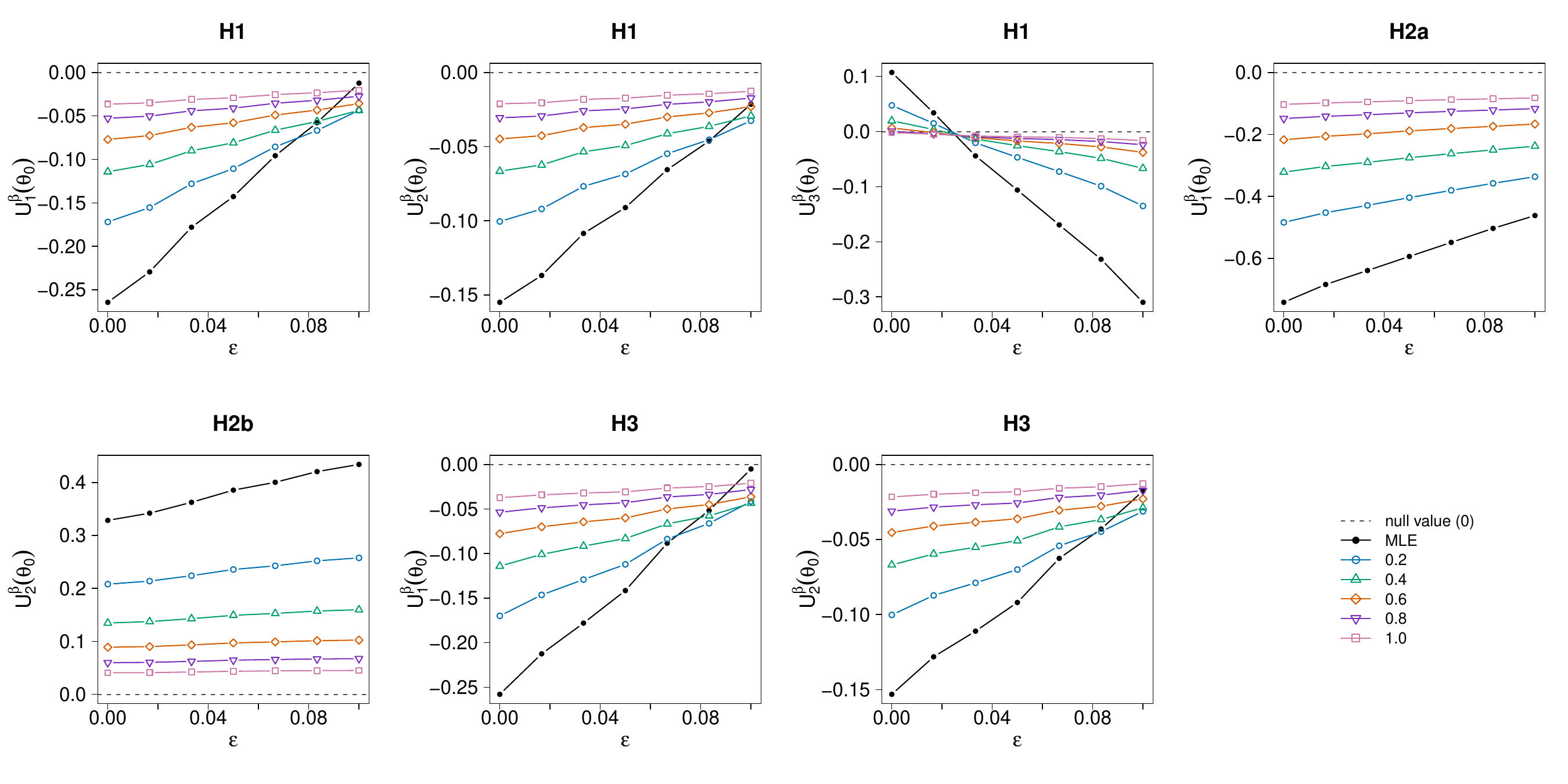}
    	\caption{Mean value of the relevant components of the score function $\boldsymbol{U}_\beta(\boldsymbol{\theta}_0)$ under the alternative-generated data, against contamination proportion $\varepsilon$ with $K=200$, for each hypothesis. The dashed line marks the value $0$, which is the expected value of $\boldsymbol{U}_\beta(\boldsymbol{\theta}_0)$ under $H_0$.}
    	\label{fig:sim_rao_ubias}
    \end{figure}
    
    The Rao-type power results follow a pattern similar to the Wald-type power results seen in Section 
    \ref{subsec:sim_wald_results}(b), here explained through the score function instead. Under $H_0$, 
    $E[\boldsymbol{U}_\beta(\boldsymbol{\theta}_0)] = \mathbf{0}_3$, so zero is the reference value in Figure 
    \ref{fig:sim_rao_ubias}. The figure evaluates $\boldsymbol{U}_\beta(\boldsymbol{\theta}_0)$ on data 
    generated under the alternative used for power, and how far they drift from it is what drives the power 
    of the test. For $H_0^{(1)}$, at least one component crosses zero during the increase in the contamination 
    range, giving the same dip-and-rise seen for the Wald-type test. For $H_0^{(3)}$, the components move 
    toward zero, so power declines for all tests as contamination increases, but it declines more slowly for 
    the DPD-based Rao-type tests than the classical Rao test, showing clear robustness. For $H_0^{(2a)}$ and $H_0^{(2b)}$, only 
    one component enters the statistic, and it behaves as in the Wald case: moving toward zero for 
    $H_0^{(2a)}$, and away from zero throughout for $H_0^{(2b)}$. As before, any late rise in power reflects 
    the statistic growing large enough to reject almost regardless of the truth. The level plots remain the 
    main evidence of robustness.
    \begin{remark}
        Although the Rao-type test statistic in \eqref{eq:rao_H1_sim}-\eqref{eq:rao_H3_sim} 
    does not require parameter estimation of $\boldsymbol{\theta}$, the quantity 
    $\boldsymbol{U}_\beta(\boldsymbol{\theta}_0)$ is not constant across replications. 
    $\boldsymbol{U}_\beta(\boldsymbol{\theta})$ is the DPD estimating equation, whose 
    root defines the WMDPDE $\widehat{\boldsymbol\theta}_\beta$; evaluating it at the 
    fixed $\boldsymbol\theta_0$ therefore indirectly probes the same robustness as DPD 
    estimation, even though $\widehat{\boldsymbol\theta}_\beta$ is never computed here. 
    The score function is
    \begin{equation*}
        \boldsymbol{U}_\beta(\boldsymbol{\theta}_0) = 
        \sum_{i=1}^R \frac{K_i}{K}\,
        \boldsymbol{W}_i(\boldsymbol{\theta}_0)\,
        \boldsymbol{D}_i^{(\beta-1)}(\boldsymbol{\theta}_0)\,
        \big(\mathbf{p}_i(\boldsymbol{\theta}_0) - \widehat{\mathbf{p}}_i\big),
    \end{equation*}
    where $\boldsymbol{W}_i(\boldsymbol{\theta}_0)$, 
    $\boldsymbol{D}_i^{(\beta-1)}(\boldsymbol{\theta}_0)$, and 
    $\mathbf{p}_i(\boldsymbol{\theta}_0)$ depend only on $\boldsymbol{\theta}_0$ 
    and $\beta$; the one data-dependent term is $\widehat{\mathbf{p}}_i = 
    (\widehat p_{i1},\ldots,\widehat p_{iL})^\top$, with $\widehat p_{ij}=n_{ij}/K_i$ 
    the empirical proportion in interval $j$ for group $i$. As $\varepsilon$ increases, 
    $\widehat{\mathbf{p}}_i$ drifts from $\mathbf{p}_i(\boldsymbol{\theta}_0)$ — exactly 
    the pattern in Figure \ref{fig:sim_rao_ubias}, where the MLE's component drifts 
    furthest from zero and robust $\beta$ stays closest, evidencing the robustness 
    of the estimating equation itself.
    \end{remark}
    
    \subsubsection*{(c) Empirical level and power while varying sample size (composite hypotheses)}
    
    The top panels of Figure \ref{fig:sim_rao_composite} show the empirical level and power for $H_0^{(4a)}$ 
    and $H_0^{(4b)}$ against $K$ with no contamination. The level stays close to the nominal level for all 
    $\beta$, and the classical Rao test has the highest power under clean data, with a small, expected loss 
    as $\beta$ increases. These results match the composite Wald-type results in Section 
    \ref{subsec:sim_wald_results}(c) closely, including the lower baseline power compared to the simple 
    hypotheses, for the same reason: the nuisance parameters are estimated jointly here rather than fixed at 
    their known values.

    \subsubsection*{(d) Empirical level and power over increasing contamination (composite hypotheses)}
    
    The bottom panels of Figure \ref{fig:sim_rao_composite} show the same quantities against $\varepsilon$ at 
    $K=200$. The level results confirm the same pattern seen throughout: the classical Rao test inflates 
    quickly, while larger $\beta$ stay closer to the nominal level for longer. Power again splits between the 
    two hypotheses, as in the Wald-type ca
    se in Section \ref{subsec:sim_wald_results}(d). For $H_0^{(4a)}$, 
    power first drops; the statistic then grows large enough to reject almost regardless of the truth. For 
    $H_0^{(4b)}$, power declines throughout instead, with no such turn, and the Rao-type tests decline more slowly than the classical Rao test, showing robustness directly. Unlike the simple-hypothesis case, this cannot be 
    checked directly through $\boldsymbol{U}_\beta(\boldsymbol{\theta}_0)$, since the composite statistic in 
    \eqref{eq:rao_H4_sim} is evaluated at the restricted WMDPDE $\widetilde{\boldsymbol{\theta}}_\beta$ 
    instead. The level results, unaffected by this, remain the primary evidence of robustness.
	\begin{figure}[H]
		\centering
		\includegraphics[width=\textwidth]{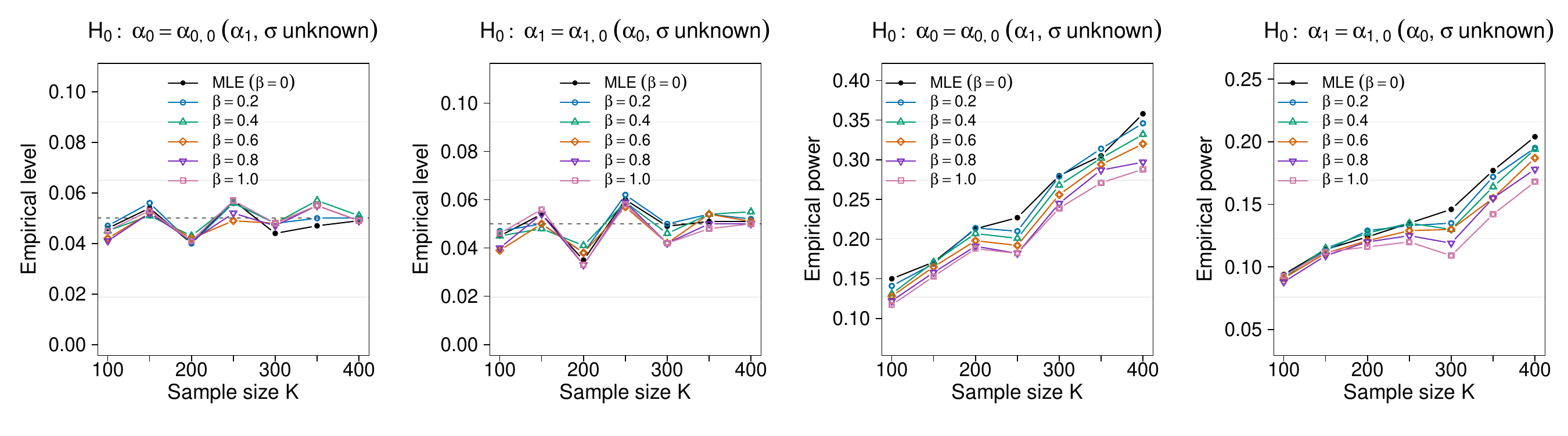}\\[6pt]
		\includegraphics[width=\textwidth]{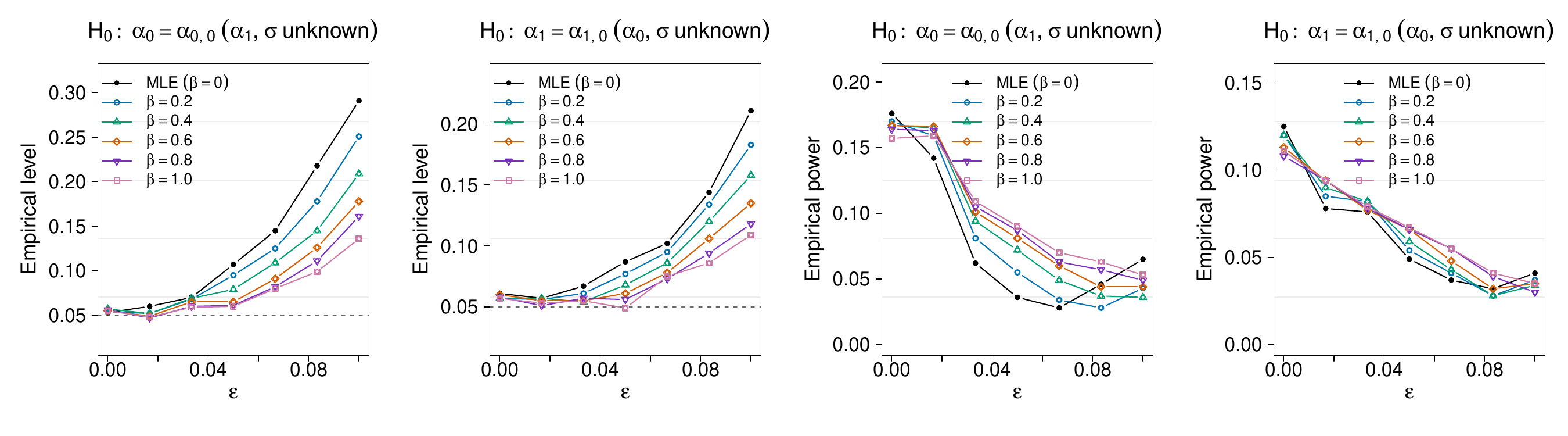}
		\caption{Empirical level and empirical power of the Rao-type tests for the composite null hypotheses $H_0^{(4a)}$ and $H_0^{(4b)}$, 
			against sample size $K$ (top) and against contamination proportion $\varepsilon$ with $K=200$ (bottom).}
		\label{fig:sim_rao_composite}
	\end{figure}
    \section{Application to air-conditioner evaporator data}
    \label{sec:real_data}

    We illustrate the proposed Wald-type and Rao-type tests using 
    the automotive air-conditioner evaporator application described 
    in \cite{kim2021optimal,jaenada2026robust}, where the engineering 
    motivation for the cyclic-stress design and further background on the application are given.
    As in that paper, we use the preliminary experiment on $N=18$ evaporator units to reconstruct 
    a plausible parameter vector before the main CyALT experiment 
    begins. Fitting a log-normal distribution to the observed 
    failure times gives $\sigma = 0.156$. Together with the 
    observed failure proportion $p_h = 0.90$ under the preliminary 
    cyclic-stress condition and the target failure proportion 
    $p_u = 0.001$ at the use condition, this yields the 
    pilot-based parameter vector
    \begin{equation*}
        \boldsymbol{\theta}_0 = 
        (\alpha_{0,0}, \alpha_{1,0}, \sigma_0)^\top = 
        (11.4565,\, -1.0684,\, 0.1560)^\top.
    \end{equation*}
    This vector is treated as one of the null hypotheses in this 
    section. The natural question is whether the main CyALT 
    experiment, once completed, confirms the parameter values on 
    which its own design was based.
    
    The main CyALT experiment, described in detail in 
    \cite{jaenada2026robust}, is designed with $R=2$ cyclic-stress 
    conditions, as shown in Figure \ref{fig:real_design}. Both
     conditions share a common floor level $s_F = 0.30$, above the 
    use-condition ceiling $s_{0C} = 0.27$, ensuring that even the 
    lowest test stress exceeds normal operating conditions. The 
    ceiling levels are set to $s_{1C} = 0.70$ and $s_{2C} = 1.00$, 
    with cyclic fraction $\tau = 0.50$. A total of $K = 200$ units 
    are allocated as $K_1 = 140$ and $K_2 = 60$, following the 
    proportions $\pi_1 = 0.70$ and $\pi_2 = 0.30$. The test runs 
    to a censoring time of $t_c = 65{,}000$ cycles, with $L=6$ 
    pre-fixed inspection times at $25{,}000$, $35{,}000$, 
    $45{,}000$, $50{,}000$, $60{,}000$, and $65{,}000$ cycles.
    
     \begin{figure}[H]
         \centering
         \resizebox{0.9\textwidth}{!}{%

\begin{tikzpicture}[>=stealth, scale=1.0]
	
	%
	
	\fill[gray!13] (0,0.0) rectangle (12,1.62);
	
	\foreach \y in {0.0, 1.62, 1.80, 4.20, 6.0}
	\draw[gray!35, thin, dashed] (0,\y) -- (12,\y);
	
	\draw[gray!55, thick]
	(0,0)--(0,1.62)--(1,1.62)--(1,0)--(2,0)
	--(2,1.62)--(3,1.62)--(3,0)--(4,0)
	--(4,1.62)--(5,1.62)--(5,0)--(6,0)
	--(6,1.62)--(7,1.62)--(7,0)--(8,0)
	--(8,1.62)--(9,1.62)--(9,0)--(10,0)
	--(10,1.62)--(11,1.62)--(11,0)--(12,0);
	
	\draw[blue!90!black, very thick]
	(-0.06,1.80)--(-0.06,4.20)--(0.94,4.20)--(0.94,1.80)--(1.94,1.80)
	--(1.94,4.20)--(2.94,4.20)--(2.94,1.80)--(3.94,1.80)
	--(3.94,4.20)--(4.94,4.20)--(4.94,1.80)--(5.94,1.80)
	--(5.94,4.20)--(6.94,4.20)--(6.94,1.80)--(7.94,1.80)
	--(7.94,4.20)--(8.94,4.20)--(8.94,1.80)--(9.94,1.80)
	--(9.94,4.20)--(10.94,4.20)--(10.94,1.80)--(11.94,1.80);
	
	\draw[red!90!black, very thick]
	(0,1.80)--(0,6.0)--(1,6.0)--(1,1.80)--(2,1.80)
	--(2,6.0)--(3,6.0)--(3,1.80)--(4,1.80)
	--(4,6.0)--(5,6.0)--(5,1.80)--(6,1.80)
	--(6,6.0)--(7,6.0)--(7,1.80)--(8,1.80)
	--(8,6.0)--(9,6.0)--(9,1.80)--(10,1.80)
	--(10,6.0)--(11,6.0)--(11,1.80)--(12,1.80);
	
	\node[font=\large, gray!65] at (12.65, 0.81) {$\cdots$};
	\node[font=\large, gray!65] at (12.65, 3.00) {$\cdots$};
	\node[font=\large, gray!65] at (12.65, 4.50) {$\cdots$};
	
	\draw[thick, ->] (-0.3,0) -- (13.4,0)
	node[right, font=\small] {Cycle};
	\draw[thick, ->] (-0.3,0) -- (-0.3,7.0)
	node[above, font=\small] {Stress $s$};
	
	\draw (-0.45, 0.0)  -- (-0.3, 0.0);
	\draw (-0.45, 1.62) -- (-0.3, 1.62);
	\draw (-0.45, 1.80) -- (-0.3, 1.80);
	\draw (-0.45, 4.20) -- (-0.3, 4.20);
	\draw (-0.45, 6.0)  -- (-0.3, 6.0);
	
	\node[left, font=\small] at (-0.48, 0.0)  {$0.00$};
	
	\draw[gray!50, thin] (-0.45, 1.62) -- (-1.1, 1.30);
	\node[left, font=\scriptsize] at (-1.1, 1.30) {$0.27$};
	
	\draw[gray!50, thin] (-0.45, 1.80) -- (-1.1, 2.15);
	\node[left, font=\scriptsize] at (-1.1, 2.15) {$0.30$};
	
	\node[left, font=\small] at (-0.48, 4.20) {$0.70$};
	\node[left, font=\small] at (-0.48, 6.0)  {$1.00$};
	
	\draw[<->, gray!60, thin] (0,-0.55)--(1,-0.55)
	node[midway, below, font=\footnotesize] {$\tau T$};
	\draw[<->, gray!60, thin] (1,-0.55)--(2,-0.55)
	node[midway, below, font=\footnotesize] {$(1{-}\tau)T$};
	\draw[gray!40, thin, dotted] (1,-0.80)--(1,6.5);
	\draw[decorate,
	decoration={brace, mirror, amplitude=5pt, raise=3pt}]
	(0,-0.85)--(2,-0.85)
	node[midway, below=7pt, font=\footnotesize] {One cycle of $20$ seconds};
	
	\foreach \xx in {4.615, 6.462, 8.308, 9.231, 11.077, 12.0}
	\draw[gray!45, thin, dashed] (\xx,0)--(\xx,6.5);
	
	\draw[thick, ->] (0,-2.0)--(13.5,-2.0)
	node[right, font=\small] {$t$ (${\times}10^3$ cycles)};
	\node[below, font=\footnotesize] at (0,-2.0) {$0$};
	
	\foreach \xx/\lbl/\t in {
		4.615/{$IT_1$}/{25},
		6.462/{$IT_2$}/{35},
		8.308/{$IT_3$}/{45},
		9.231/{$IT_4$}/{50},
		11.077/{$IT_5$}/{60},
		12.0/{$IT_6$}/{65}}
	{
		\draw[thick] (\xx,-1.88)--(\xx,-2.12);
		\node[above, font=\footnotesize] at (\xx,-1.88) {\lbl};
		\node[below, font=\footnotesize] at (\xx,-2.12) {\t};
	}
	
	\node[right, font=\footnotesize, gray!70]       at (12.0, 0.0)
	{$s_{0F}$};
	\node[right, font=\scriptsize,   gray!70, yshift= 4pt] at (12.0, 1.62)
	{$s_{0C}$};
	\node[right, font=\scriptsize,   black!60, yshift=-4pt] at (12.0, 1.80)
	{$s_F$};
	\node[right, font=\footnotesize, blue!90!black] at (12.0, 4.20)
	{$s_{1C} \rightarrow \text{Cond.~1:}\;K_1{=}140$};
	\node[right, font=\footnotesize, red!90!black]  at (12.0, 6.0)
	{$s_{2C} \rightarrow \text{Cond.~2:}\;K_2{=}60$};
	
\end{tikzpicture}}
         \caption{Cyclic-stress ALT design for the real data 
         analysis. The grey band shows the use-condition stress 
         cycling between $s_{0F}=0.00$ and $s_{0C}=0.27$. Vertical 
         dashed lines mark the $L=6$ inspection times.}
         \label{fig:real_design}
     \end{figure}
    
    The interval-censored failure counts obtained under this design 
    are reported in \cite{jaenada2026robust} and reproduced in 
    Table \ref{tab:real_data_counts}. Group~1, under ceiling stress 
    $s_{1C}=0.70$, concentrates most failures in the interval 
    $(50{,}000,\,60{,}000]$ with $18$ survivors beyond $t_c$. 
    Group~2, under the higher ceiling stress $s_{2C}=1.00$, fails 
    almost entirely before $45{,}000$ cycles with no survivors. 
    
    \begin{table}[htbp]
        \centering
        \small
        \caption{Interval-censored failure counts from the CyALT 
        experiment with $K_1=140$, $K_2=60$, and $t_c=65{,}000$ 
        cycles.}
        \label{tab:real_data_counts}
        \begin{tabular}{l|cc}
            \toprule
            Inspection interval & Group 1 & Group 2 \\
            (cycles) & ($s_{1C}=0.70$) & ($s_{2C}=1.00$) \\
            \midrule
            $(0,\,25{,}000]$        & 0  & 0  \\
            $(25{,}000,\,35{,}000]$ & 1  & 3  \\
            $(35{,}000,\,45{,}000]$ & 13 & 30 \\
            $(45{,}000,\,50{,}000]$ & 25 & 12 \\
            $(50{,}000,\,60{,}000]$ & 62 & 14 \\
            $(60{,}000,\,65{,}000]$ & 21 & 1  \\
            Survivors ($>65{,}000$) & 18 & 0  \\
            \midrule
            Total & 140 & 60 \\
            \bottomrule
        \end{tabular}
    \end{table}
    
    We apply the Wald-type and Rao-type tests developed in 
    Sections \ref{sec:wald} and \ref{sec:rao} to this dataset, 
    using the pilot-based vector $\boldsymbol{\theta}_0$ obtained 
    above. Three hypotheses are of particular interest for this 
    dataset, each addressing a different practical question.
    
    The first is the simple null hypothesis
    \begin{equation}
        H_0^{(A)}: (\alpha_0,\alpha_1,\sigma) = 
        (\alpha_{0,0},\alpha_{1,0},\sigma_0)
        \quad \text{vs} \quad
        H_1^{(A)}: (\alpha_0,\alpha_1,\sigma) \neq 
        (\alpha_{0,0},\alpha_{1,0},\sigma_0),
        \label{eq:real_HA}
    \end{equation}
    corresponding to \eqref{W1}, computed as in 
\eqref{eq:wald_H1_sim} and \eqref{eq:rao_H1_sim}, which asks whether the completed 
    CyALT experiment is consistent with the parameter values used 
    to design it. Since $\boldsymbol{\theta}_0$ was itself derived 
    from a preliminary experiment before the main test began, this 
    hypothesis validates the pilot-based design.
    
    The second is the composite null hypothesis
    \begin{equation}
        H_0^{(B)}: \alpha_1 = 0        \quad \text{vs} \quad
        H_1^{(B)}: \alpha_1 \neq 0,
        \qquad \text{with } \alpha_0, \sigma \text{ unknown},
        \label{eq:real_HB}
    \end{equation}
    corresponding to \eqref{W4b}, computed as in 
    \eqref{eq:wald_H4b_sim} and \eqref{eq:rao_H4_sim} (with $\boldsymbol{H}(\boldsymbol{\theta})=(0,1,0)^\top$), which addresses a question of direct engineering interest: does the applied cyclic stress have a meaningful effect on evaporator lifetime? A value of $\alpha_1=0$ corresponds to no stress effect at all, i.e., the log-lifetime is insensitive to the applied stress. Unlike $H_0^{(A)}$, this hypothesis leaves $\alpha_0$ and $\sigma$ unknown and estimated, so it also illustrates the composite testing framework of Sections \ref{sec:wald} and \ref{sec:rao} directly on real data.
    
    The third is the composite null hypothesis on the shape 
    parameter,
    \begin{equation}
        H_0^{(C)}: \sigma = \sigma_0
        \quad \text{vs} \quad
        H_1^{(C)}: \sigma \neq \sigma_0,
        \qquad \text{with } \alpha_0, \alpha_1 \text{ unknown},
        \label{eq:real_HC}
    \end{equation}
    corresponding to \eqref{W4c}, computed as in 
    \eqref{eq:wald_H4c_sim} and \eqref{eq:rao_H4_sim} (with $\boldsymbol{H}(\boldsymbol{\theta})=(0,0,1)^\top$), which examines whether the dispersion of the log-normal lifetime distribution, estimated 
    from only $N=18$ preliminary units, remains consistent with 
    the full $K=200$-unit CyALT dataset.

    \begin{table}[htbp]
    \centering
    \small
    \setlength{\tabcolsep}{4pt}
    \caption{Wald-type and Rao-type test statistics and 
    $p$-values, reported as Wald (Rao), for the CyALT evaporator 
    dataset, testing $H_0^{(A)}$ (simple, pilot design 
    consistency), $H_0^{(B)}$ (composite, stress-effect 
    significance), and $H_0^{(C)}$ (composite, shape-parameter 
    consistency).}
    \label{tab:real_tests}
    \begin{tabular}{c|cc cc cc}
        \toprule
        & \multicolumn{2}{c}{$H_0^{(A)}$} 
        & \multicolumn{2}{c}{$H_0^{(B)}$}
        & \multicolumn{2}{c}{$H_0^{(C)}$} \\
        \cmidrule(lr){2-3}\cmidrule(lr){4-5}\cmidrule(lr){6-7}
        $\beta$ & Statistic & $p$ & Statistic & $p$ 
        & Statistic & $p$ \\
        \midrule
        $0$   & 0.69 (0.68) & 0.88 (0.88) & 95.75 (50.14) & 0.000 (0.000) & 0.02 (0.02) & 0.89 (0.89) \\
        $0.2$ & 0.90 (0.88) & 0.83 (0.83) & 94.58 (51.87) & 0.000 (0.000) & 0.06 (0.06) & 0.81 (0.81) \\
        $0.4$ & 0.93 (0.93) & 0.82 (0.82) & 89.01 (52.07) & 0.000 (0.000) & 0.06 (0.06) & 0.80 (0.80) \\
        $0.6$ & 0.88 (0.89) & 0.83 (0.83) & 82.66 (51.48) & 0.000 (0.000) & 0.05 (0.05) & 0.83 (0.82) \\
        $0.8$ & 0.80 (0.82) & 0.85 (0.85) & 76.86 (50.42) & 0.000 (0.000) & 0.03 (0.04) & 0.85 (0.85) \\
        $1.0$ & 0.73 (0.75) & 0.87 (0.86) & 71.76 (48.98) & 0.000 (0.000) & 0.02 (0.02) & 0.88 (0.88) \\
        \bottomrule
    \end{tabular}
    \end{table}
    
    Table \ref{tab:real_tests} reports the resulting Wald-type and 
    Rao-type test statistics and $p$-values for each hypothesis and 
    each value of the tuning parameter $\beta$. Both $H_0^{(A)}$ 
    and $H_0^{(C)}$ are not rejected at any $\beta$, with 
    $p$-values well above conventional significance levels 
    throughout. This indicates no evidence against the pilot-based 
    parameter values on which the main CyALT experiment was 
    designed, and confirms that the assumed dispersion carries over 
    consistently from the preliminary experiment to the full 
    dataset. In sharp contrast, $H_0^{(B)}$ is rejected 
    overwhelmingly by both test families at every value of $\beta$, 
    confirming, as expected from the accelerated life-testing 
    premise, that the applied cyclic stress has a highly 
    significant effect on evaporator lifetime. Together, these 
    results demonstrate that the proposed tests are not merely 
    conservative: they correctly fail to reject a well-calibrated 
    null while decisively rejecting a null that contradicts the 
    basic physics of the experiment, across the full range of the 
    robustness parameter $\beta$.
    

    \section{Concluding remarks}
    \label{sec:conclusion}
    This paper developed robust Wald-type and Rao-type tests for the CyALT model with log-normal lifetimes under interval monitoring, based on the WMDPDE of \cite{jaenada2026robust}. Both test families were derived for simple and composite null hypotheses on the model parameters, with their asymptotic chi-square distribution and consistency established under the non-identically distributed sampling scheme induced by the cyclic-stress design. This extends the classical Wald and Rao testing framework, previously unavailable for CyALT data under interval censoring, to a robust setting governed by a single tuning parameter $\beta$.
    
    The simulation study confirms the expected efficiency-robustness trade-off: under clean data, the classical tests ($\beta=0$) attain the highest power, with only a modest reduction as $\beta$ increases; under contamination, this ordering reverses for the empirical level, with the classical tests losing control of the significance level rapidly while the robust tests remain close to nominal for a substantially wider range of contamination. A rise in power at high contamination is not, by itself, evidence of good performance, since it can reflect bias-driven rejection rather than genuine sensitivity to the alternative; the empirical level therefore provides the more reliable diagnostic of robustness throughout this study. The air-conditioner evaporator application confirms these findings on real data: the proposed tests correctly fail to reject a well-calibrated null while decisively rejecting a null that contradicts the physical premise of the experiment, with close agreement between the Wald-type and Rao-type statistics near the tested value.
    
    The proposed tests provide practitioners with a robust alternative to classical likelihood-based inference for CyALT experiments, particularly relevant when interval-monitored failure data are prone to recording errors or unit heterogeneity. Future work includes extending the framework to other lifetime distributions and stress-life relationships, and developing a data-driven procedure for selecting $\beta$.
    
    \section*{Code and data availability}
    The R code implementing the WMDPDE, the Wald-type and Rao-type test statistics, and the Monte Carlo 
    simulation study, together with the air-conditioner evaporator dataset used in Section~\ref{sec:real_data}, 
    is available at \url{https://github.com/kiranprajapat92/Robust-Inference-WMDPDE-CyALT}.
    
    \bibliographystyle{apalike}
    \bibliography{references_tests}{}
	
	\appendix
    \section{Proofs}
    \label{app:proofs}

    \subsection*{Proof of Theorem \ref{thm:wald_consistency}}

    We define,%
    \begin{equation*}
    L\left( \boldsymbol{\theta }\right) =\left( \boldsymbol{\theta }-\boldsymbol{%
    \theta }_{0}\right) ^{T}\left( \boldsymbol{J}_{\beta }\left( \boldsymbol{%
    \theta }_{0}\right) ^{-1}\boldsymbol{K}_{\beta }\left( \boldsymbol{\theta }%
    _{0}\right) \boldsymbol{J}_{\beta }\left( \boldsymbol{\theta }_{0}\right)
    ^{-1}\right) ^{-1}\left( \boldsymbol{\theta }-\boldsymbol{\theta }%
    _{0}\right) .
    \end{equation*}%
    Taking derivatives, we have
    \begin{equation*}
    \frac{\partial L\left( \boldsymbol{\theta }\right) }{\partial \boldsymbol{%
    \theta }}=2\left( \boldsymbol{J}_{\beta }\left( \boldsymbol{\theta }%
    _{0}\right) ^{-1}\boldsymbol{K}_{\beta }\left( \boldsymbol{\theta }%
    _{0}\right) \boldsymbol{J}_{\beta }\left( \boldsymbol{\theta }_{0}\right)
    ^{-1}\right) ^{-1}\left( \boldsymbol{\theta }-\boldsymbol{\theta }%
    _{0}\right) .
    \end{equation*}%
    Now, a first order Taylor expansion of $L\left( \boldsymbol{\theta }\right) $ in $%
    \boldsymbol{\theta =\theta }^{\ast }$ and evaluated in $\widehat{\boldsymbol{%
    \theta }}_{\beta }$ is given by%
    \begin{equation*}
    L\left( \widehat{\boldsymbol{\theta }}_{\beta }\right) =L\left( \boldsymbol{%
    \theta }^{\ast }\right) +\left( \frac{\partial L\left( \boldsymbol{\theta }%
    \right) }{\partial \boldsymbol{\theta }}\right) _{\boldsymbol{\theta =\theta 
    }^{\ast }}\left( \widehat{\boldsymbol{\theta }}_{\beta }-\boldsymbol{\theta }%
    ^{\ast }\right) +o_{p}\left( \left\Vert \widehat{\boldsymbol{\theta }}%
    _{\beta }-\boldsymbol{\theta }_{0}\right\Vert \right) .
    \end{equation*}%
    Therefore, 
    \begin{eqnarray*}
    L\left( \widehat{\boldsymbol{\theta }}_{\beta }\right) -L\left( \boldsymbol{%
    \theta }^{\ast }\right) &=&2\left( \boldsymbol{\theta }^{\ast }-\boldsymbol{%
    \theta }_{0}\right) ^{T}\left( \boldsymbol{J}_{\beta }\left( \boldsymbol{%
    \theta }_{0}\right) ^{-1}\boldsymbol{K}_{\beta }\left( \boldsymbol{\theta }%
    _{0}\right) \boldsymbol{J}_{\beta }\left( \boldsymbol{\theta }_{0}\right)
    ^{-1}\right) ^{-1}\left( \widehat{\boldsymbol{\theta }}_{\beta }-\boldsymbol{%
    \theta }^{\ast }\right) +o_{p}\left( \left\Vert \widehat{\boldsymbol{\theta }%
    }_{\beta }-\boldsymbol{\theta }_{0}\right\Vert \right) = \\
    &=&2\left( \boldsymbol{\theta }^{\ast }-\boldsymbol{\theta }_{0}\right) ^{T}%
    \boldsymbol{\Sigma }\left( \boldsymbol{\theta }_{0}\right) ^{-1}\left( 
    \widehat{\boldsymbol{\theta }}_{\beta }-\boldsymbol{\theta }^{\ast }\right)
    +o_{p}\left( \left\Vert \boldsymbol{\theta }^{\ast }-\boldsymbol{\theta }%
    _{0}\right\Vert \right) .
    \end{eqnarray*}%
    with 
    \begin{equation*}
    \boldsymbol{\Sigma }\left( \boldsymbol{\theta }\right) =\boldsymbol{J}%
    _{\beta }\left( \boldsymbol{\theta }\right) ^{-1}\boldsymbol{K}_{\beta
    }\left( \boldsymbol{\theta }\right) \boldsymbol{J}_{\beta }\left( 
    \boldsymbol{\theta }\right) ^{-1}.
    \end{equation*}%
    Now,  by Theorem \ref{thm:asymp}, the WMDPDE satisfies
        \begin{equation}
        \sqrt{K}\,\big(
        \widehat{\boldsymbol{\theta}}_\beta - 
        \boldsymbol{\theta}^\ast
        \big)
        \;\xrightarrow[K \to \infty]{\mathcal{L}}\;
        N_3 \Big(\mathbf{0}_3,\;
        \mathbf{J}_\beta(\boldsymbol{\theta}^\ast)^{-1}
        \mathbf{K}_\beta(\boldsymbol{\theta}^\ast)
        \mathbf{J}_\beta(\boldsymbol{\theta}^\ast)^{-1}
        \Big),
        \label{eq:asymp}
    \end{equation}
    and so, after some algebra, 
    \begin{equation*}
    \sqrt{K}\left( L\left( \widehat{\boldsymbol{\theta }}_{\beta }\right)
    -L\left( \boldsymbol{\theta }^{\ast }\right) \right) \overset{L}{\underset{%
    n\rightarrow \infty }{\rightarrow }}N\left( 0,\sigma ^{2}\left( \boldsymbol{%
    \theta }^{\ast }\right) \right)
    \end{equation*}%
    with
    \begin{equation*}
    \sigma ^{2}\left( \boldsymbol{\theta }^{\ast }\right) =4\left( \boldsymbol{%
    \theta }^{\ast }-\boldsymbol{\theta }_{0}\right) ^{T}\boldsymbol{\Sigma }%
    \left( \boldsymbol{\theta }_{0}\right) ^{-1}\boldsymbol{\Sigma }\left( 
    \boldsymbol{\theta }^{\ast }\right) \boldsymbol{\Sigma }\left( \boldsymbol{%
    \theta }_{0}\right) ^{-1}\left( \boldsymbol{\theta }^{\ast }-\boldsymbol{%
    \theta }_{0}\right).
    \end{equation*}
    Then, the power of the test can be approximate as 
    \begin{eqnarray*}
    P_{W_{K}^{\beta }\left( \boldsymbol{\theta }_{0}\right) }\left( \boldsymbol{%
    \theta }^{\ast }\right) &=&\Pr \left( W_{K}^{\beta }\left( \boldsymbol{%
    \theta }_{0}\right) >\chi _{3,\gamma }^{2}\right) =\Pr \left( KL\left( 
    \widehat{\boldsymbol{\theta }}_{\beta }\right) >\chi _{3,\gamma }^{2}\right)
    \\
    &=&\Pr \left( K\text{ }\left( L\left( \widehat{\boldsymbol{\theta }}_{\beta
    }\right) -L\left( \boldsymbol{\theta }^{\ast }\right) \right) >K\chi
    _{3,\gamma }^{2}-K\text{ }L\left( \boldsymbol{\theta }^{\ast }\right) \right)
    \\
    &=&\Pr \left( \frac{\sqrt{K}\left( L\left( \widehat{\boldsymbol{\theta }}%
    _{\beta }\right) -L\left( \boldsymbol{\theta }^{\ast }\right) \right) }{%
    \sigma \left( \boldsymbol{\theta }^{\ast }\right) }>\frac{\sqrt{K}\left( 
    \frac{\chi _{3,\gamma }^{2}}{K}-L\left( \boldsymbol{\theta }^{\ast }\right)
    \right) }{\sigma \left( \boldsymbol{\theta }^{\ast }\right) }\right) \\
    & = &1-\Phi _{N(0,1)}^{\left( K\right) }\left( \frac{\sqrt{K}\left( \frac{\chi
    _{3,\gamma }^{2}}{K}-L\left( \boldsymbol{\theta }^{\ast }\right) \right) }{%
    \sigma \left( \boldsymbol{\theta }^{\ast }\right) }\right)
    \end{eqnarray*}%
    where $\Phi _{N(0,1)}^{\left( K\right) }\left( .\right) $ represents a
    sequence of distribution functions tending uniformly to the standard normal
    distribution. Then, taking limits in $K$
    \begin{equation*}
    \lim_{K\rightarrow \infty }P_{W_{K}^{\beta }\left( \boldsymbol{\theta }%
    _{0}\right) }\left( \boldsymbol{\theta }^{\ast }\right) =1.
    \end{equation*}

    \subsection*{Proof of Theorem \ref{thm:u_beta_mean_var}}
    The DPD attains its minimum at the true value of the model parameter. Therefore, under the null hypothesis, $E[\mathbf{p}_i(\boldsymbol{\theta}_0)] = E[\widehat{\mathbf{p}}_i]$ and    
    the first derivatives of the DPD loss vanish at $\boldsymbol{\theta}_0.$
    Then, $E\left[ \boldsymbol{U}_{\beta }\left( \boldsymbol{\theta }_{0}\right) \right] =\boldsymbol{0}_{3}.$ 
    Now, we prove that 
    $Cov\left[ \boldsymbol{U}_{\beta }\left( \boldsymbol{\theta }_{0}\right) %
    \right] =\frac{1}{K}\boldsymbol{K}_{\beta }\left( \boldsymbol{\theta }%
    _{0}\right).$ First, we have, 
    \begin{equation*}
    Cov\left[ \mathbf{p}_i\left( \boldsymbol{\theta }_{0}\right) -\widehat{%
    \mathbf{p}}_{i}\right] =\frac{1}{K_{i}}\left( \boldsymbol{D}_{i}^{\left(
    -1\right) }\left( \boldsymbol{\theta }_{0}\right) -\mathbf{p}_i\left( 
    \boldsymbol{\theta }_{0}\right) \mathbf{p}_i\left( \boldsymbol{\theta }%
    _{0}\right) ^{T}\right).
    \end{equation*}%
    Therefore, 
    \begin{eqnarray*}
    Cov\left[ \boldsymbol{U}_{\beta }\left( \boldsymbol{\theta }_{0}\right) %
    \right]  &=&\tsum \limits_{i=1}^{R}\frac{K_{i}}{K}\boldsymbol{W}_{i}\left( 
    \boldsymbol{\theta }_{0}\right) \boldsymbol{D}_{i}^{\left( \beta -1\right)
    }\left( \boldsymbol{\theta }_{0}\right) Cov\left[ \mathbf{p}_i\left( 
    \boldsymbol{\theta }_{0}\right) -\widehat{\mathbf{p}}_{i}\right] \left( 
    \frac{K_{i}}{K}\boldsymbol{W}_{i}\left( \boldsymbol{\theta }_{0}\right) 
    \boldsymbol{D}_{i}^{\left( \beta -1\right) }\left( \boldsymbol{\theta }%
    _{0}\right) \right) ^{T} \\
    &=&\tsum \limits_{i=1}^{R}\left( \frac{K_{i}}{K}\right) ^{2}\boldsymbol{W}%
    _{i}\left( \boldsymbol{\theta }_{0}\right) \boldsymbol{D}_{i}^{\left( \beta
    -1\right) }\left( \boldsymbol{\theta }_{0}\right) \frac{1}{K_{i}}\left( 
    \boldsymbol{D}_{i}^{\left( -1\right) }\left( \boldsymbol{\theta }_{0}\right)
    -\mathbf{p}_i\left( \boldsymbol{\theta }_{0}\right) \mathbf{p}%
    _{i}\left( \boldsymbol{\theta }_{0}\right) ^{T}\right)     
	\end{eqnarray*}
	\begin{eqnarray*}
    &&\times \left( \frac{K_{i}}{K}\boldsymbol{W}_{i}\left( \boldsymbol{\theta }%
    _{0}\right) \boldsymbol{D}_{i}^{\left( \beta -1\right) }\left( \boldsymbol{%
    \theta }_{0}\right) \right) ^{T} \\
    &=&\frac{1}{K}\tsum \limits_{i=1}^{R}\frac{K_{i}}{K}\boldsymbol{W}_{i}\left( 
    \boldsymbol{\theta }_{0}\right) \left \{ \boldsymbol{D}_{i}^{\left( 2\beta
    -1\right) }\left( \boldsymbol{\theta }_{0}\right) -\boldsymbol{D}%
    _{i}^{\left( \beta \right) }\left( \boldsymbol{\theta }_{0}\right) 
    \boldsymbol{1}_{L+1}\boldsymbol{1}_{L+1}^{T}\boldsymbol{D}_{i}^{\left( \beta
    \right) }\left( \boldsymbol{\theta }_{0}\right) \right \} \boldsymbol{W}%
    _{i}\left( \boldsymbol{\theta }_{0}\right) ^{T} \\
    &=&\frac{1}{K}\boldsymbol{K}_{\beta }\left( \boldsymbol{\theta }_{0}\right) .
    \end{eqnarray*}    
    \subsection*{Proof of theorem \ref{thm:rao_simple_chisq}}
    The previous theorem established that
    \begin{equation*}
    \sqrt{K}\boldsymbol{U}_{\beta }\left( \boldsymbol{\theta }_{0}\right) 
    \underset{K\rightarrow \infty }{\overset{L}{\rightarrow }}N(\boldsymbol{0}%
    _{3},\boldsymbol{K}_{\beta }\left( \boldsymbol{\theta }_{0}\right) ).
    \end{equation*}%
    Therefore, 
    \begin{equation*}
    \sqrt{K}\boldsymbol{U}_{\beta }\left( \boldsymbol{\theta }_{0}\right) 
    \boldsymbol{K}_{\beta }\left( \boldsymbol{\theta }_{0}\right) ^{-1/2}%
    \underset{K\rightarrow \infty }{\overset{L}{\rightarrow }}N(\boldsymbol{0}%
    _{3},\boldsymbol{I}).
    \end{equation*}%
    Then,%
    \begin{equation*}
    R_{K}^{\beta }\left( \boldsymbol{\theta }_{0}\right) =\left( \sqrt{K}%
    \boldsymbol{U}_{\beta }\left( \boldsymbol{\theta }_{0}\right) \boldsymbol{K}%
    _{\beta }\left( \boldsymbol{\theta }_{0}\right) ^{-1/2}\right) \left( \sqrt{K%
    }\boldsymbol{U}_{\beta }\left( \boldsymbol{\theta }_{0}\right) \boldsymbol{K}%
    _{\beta }\left( \boldsymbol{\theta }_{0}\right) ^{-1/2}\right) ^{T}\underset{%
    K\rightarrow \infty }{\overset{L}{\rightarrow }}\chi _{3}^{2}.
    \end{equation*}
    
    \subsection*{Proof of Theorem \ref{thm:rmdpde_asymp}}

    The Lagrangian function for our problem is given by,%
    \begin{equation*}
    L_{\beta }(\boldsymbol{\theta ,\lambda })=\boldsymbol{H}_{\beta }(%
    \boldsymbol{\theta })+\boldsymbol{\lambda }^{T}\boldsymbol{h(\theta ),}
    \end{equation*}%
    where $\boldsymbol{\lambda }$ is the Lagrange multiplier. The partial
    derivatives of the Lagrangian function with respect to $\boldsymbol{\lambda }
    $ and $\boldsymbol{\widetilde{\boldsymbol{\theta }}_{\beta }}$ must vanish
    
    \begin{equation*}
    \left \{ 
    \begin{array}{l}
    \left( \frac{\partial L_{\beta }(\boldsymbol{\theta ,\lambda })}{\partial 
    \boldsymbol{\theta }}\right) _{\left( \widetilde{\boldsymbol{\theta }}%
    \boldsymbol{,}\widetilde{\boldsymbol{\lambda }}_{\beta }\right) }=%
    \boldsymbol{U}_{\beta }\left( \widetilde{\boldsymbol{\theta }}_{\beta
    }\right) +H\boldsymbol{(\widetilde{\boldsymbol{\theta }}_{\beta })}%
    \widetilde{\boldsymbol{\lambda }}_{\beta }=\boldsymbol{0}_{3} \\ 
    \left( \frac{\partial L_{\beta }(\boldsymbol{\theta ,\lambda })}{\partial 
    \boldsymbol{\lambda }}\right) _{\left( \widetilde{\boldsymbol{\theta }}%
    \boldsymbol{,}\widetilde{\boldsymbol{\lambda }}_{\beta }\right) }=%
    \boldsymbol{h(\widetilde{\boldsymbol{\theta }}_{\beta })=0}_{3}%
    \end{array}%
    \right. .
    \end{equation*}%
    A first order Taylor expansion of $\boldsymbol{U}_{\beta }\left( \boldsymbol{%
    \theta }\right) $ in $\boldsymbol{\theta }_{0}$ and evaluated at $%
    \boldsymbol{\widetilde{\boldsymbol{\theta }}_{\beta },}$ gives%
    \begin{equation*}
    \boldsymbol{U}_{\beta }\left( \widetilde{\boldsymbol{\theta }}_{\beta
    }\right) =\boldsymbol{U}_{\beta }\left( \boldsymbol{\theta }_{0}\right)
    +\left( \frac{\partial \boldsymbol{U}_{\beta }\left( \boldsymbol{\theta }%
    \right) }{\partial \boldsymbol{\theta }}\right) _{\boldsymbol{\theta =%
    \widetilde{\boldsymbol{\theta }}_{\beta }}}\left( \boldsymbol{\widetilde{%
    \boldsymbol{\theta }}_{\beta }-\theta }_{0}\right) +o_{p}\left( \left \Vert 
    \boldsymbol{\widetilde{\boldsymbol{\theta }}_{\beta }-\theta }%
    _{0}\right \Vert \right)
    \end{equation*}%
    and applying the strong law of large numbers we get 
    \begin{equation*}
    \left( \frac{\partial \boldsymbol{U}_{\beta }\left( \boldsymbol{\theta }%
    \right) }{\partial \boldsymbol{\theta }}\right) _{\boldsymbol{\theta =%
    \widetilde{\boldsymbol{\theta }}_{\beta }}}\underset{K\rightarrow \infty }{%
    \overset{P}{\rightarrow }}-\boldsymbol{J}_{\beta }\left( \boldsymbol{\theta }%
    _{0}\right) .
    \end{equation*}%
    Therefore we can write, 
    \begin{equation}
    \boldsymbol{U}_{\beta }\left( \widetilde{\boldsymbol{\theta }}_{\beta
    }\right) =\boldsymbol{U}_{\beta }\left( \boldsymbol{\theta }_{0}\right) -%
    \boldsymbol{J}_{\beta }\left( \boldsymbol{\theta }_{0}\right) \left( 
    \boldsymbol{\widetilde{\boldsymbol{\theta }}_{\beta }-\theta }_{0}\right)
    +o_{p}\left( \left \Vert \boldsymbol{\widetilde{\boldsymbol{\theta }}_{\beta
    }-\theta }_{0}\right \Vert \right)  \label{0}
    \end{equation}%
    and 
    \begin{equation*}
    \boldsymbol{U}_{\beta }\left( \boldsymbol{\theta }_{0}\right) -\boldsymbol{J}%
    _{\beta }\left( \boldsymbol{\theta }_{0}\right) \left( \boldsymbol{%
    \widetilde{\boldsymbol{\theta }}_{\beta }-\theta }_{0}\right) +H\boldsymbol{(%
    \widetilde{\boldsymbol{\theta }}_{\beta })}\widetilde{\boldsymbol{\lambda }}%
    _{\beta }+o_{p}\left( \left \Vert \boldsymbol{\widetilde{\boldsymbol{\theta }}%
    _{\beta }-\theta }_{0}\right \Vert \right) =\boldsymbol{0}_{3}
    \end{equation*}%
    or%
    \begin{equation}
    \sqrt{K}\boldsymbol{J}_{\beta }\left( \boldsymbol{\theta }_{0}\right) \left( 
    \boldsymbol{\widetilde{\boldsymbol{\theta }}_{\beta }-\theta }_{0}\right) -%
    \sqrt{K}H\boldsymbol{(\widetilde{\boldsymbol{\theta }}_{\beta })}\widetilde{%
    \boldsymbol{\lambda }}_{\beta }+o_{p}(1)=\sqrt{k}\boldsymbol{U}_{\beta
    }\left( \boldsymbol{\theta }_{0}\right)  \label{una}
    \end{equation}%
    On the other hand 
    \begin{equation*}
    H\boldsymbol{(\widetilde{\boldsymbol{\theta }}_{\beta })=H\boldsymbol{(%
    \boldsymbol{\theta }}}_{0}\boldsymbol{\boldsymbol{)}+O}_{p}\left( \left \Vert 
    \boldsymbol{\widetilde{\boldsymbol{\theta }}_{\beta }-\theta }%
    _{0}\right \Vert \right) =\boldsymbol{H\boldsymbol{(\widetilde{\boldsymbol{%
    \theta }}_{\beta })+O}}_{p}(K^{-1/2}).
    \end{equation*}%
    Under $H_{0}$ we have $\widetilde{\boldsymbol{\lambda }}_{\beta }=%
    \boldsymbol{\boldsymbol{O}}_{p}(K^{-1/2}).$ Therefore, 
    \begin{equation*}
    \widetilde{\boldsymbol{\lambda }}_{\beta }H\boldsymbol{(\widetilde{%
    \boldsymbol{\theta }}_{\beta })=}\widetilde{\boldsymbol{\lambda }}_{\beta }%
    \boldsymbol{H\boldsymbol{(\boldsymbol{\theta }}}_{0}\boldsymbol{\boldsymbol{%
    )=O}}_{p}(K^{-1/2})\boldsymbol{\boldsymbol{O}}_{p}(K^{-1/2})=\boldsymbol{%
    \boldsymbol{O}}_{p}(K^{-1})=o_{p}\left( K^{-1/2}\right)
    \end{equation*}%
    and 
    \begin{equation*}
    \sqrt{K}\widetilde{\boldsymbol{\lambda }}_{\beta }H\boldsymbol{(\widetilde{%
    \boldsymbol{\theta }}_{\beta })=}\sqrt{K}\widetilde{\boldsymbol{\lambda }}%
    _{\beta }\boldsymbol{H\boldsymbol{(\boldsymbol{\theta }}}_{0}\boldsymbol{%
    \boldsymbol{)+o}}_{p}(1).
    \end{equation*}%
    Then (\ref{una}) can be written by 
    \begin{equation}
    \sqrt{K}\boldsymbol{J}_{\beta }\left( \boldsymbol{\theta }_{0}\right) \left( 
    \boldsymbol{\widetilde{\boldsymbol{\theta }}_{\beta }-\theta }_{0}\right) -%
    \sqrt{K}\boldsymbol{H\boldsymbol{(\boldsymbol{\theta }}}_{0}\boldsymbol{%
    \boldsymbol{)}}\widetilde{\boldsymbol{\lambda }}_{\beta }+o_{p}(1)=\sqrt{K}%
    \boldsymbol{U}_{\beta }\left( \boldsymbol{\theta }_{0}\right).  \label{2}
    \end{equation}%
    A first order Taylor expansion of $\boldsymbol{h(\boldsymbol{\theta })}$ at $%
    \boldsymbol{\boldsymbol{\boldsymbol{\theta }}}_{0}$ and evaluated in $%
    \boldsymbol{\widetilde{\boldsymbol{\theta }}_{\beta }}$ is given by,%
    \begin{equation*}
    \boldsymbol{h(\widetilde{\boldsymbol{\theta }}_{\beta })=}\boldsymbol{h(%
    \boldsymbol{\boldsymbol{\boldsymbol{\theta }}}_{0})+H\boldsymbol{(%
    \boldsymbol{\theta }}}_{0})^{T}\left( \boldsymbol{\widetilde{\boldsymbol{%
    \theta }}_{\beta }-\theta }_{0}\right) +o_{p}\left( \left \Vert \boldsymbol{%
    \widetilde{\boldsymbol{\theta }}_{\beta }-\theta }_{0}\right \Vert \right)
    \end{equation*}%
    But $\boldsymbol{h(\widetilde{\boldsymbol{\theta }}_{\beta })=}\boldsymbol{h(%
    \boldsymbol{\boldsymbol{\boldsymbol{\theta }}}_{0})=0}_{r},$then 
    \begin{equation}
    \boldsymbol{H\boldsymbol{(\boldsymbol{\theta }}}_{0})^{T}\sqrt{K}\left( 
    \boldsymbol{\widetilde{\boldsymbol{\theta }}_{\beta }-\theta }_{0}\right)
    +o_{p}(1)=0.  \label{3}
    \end{equation}%
    Now we are going to write equations (\ref{2}) and (\ref{3}) \ in a matrix form
    \begin{equation}
    \left( 
    \begin{array}{cc}
    \boldsymbol{J}_{\beta }\left( \boldsymbol{\theta }_{0}\right) & -\boldsymbol{%
    H\boldsymbol{(\boldsymbol{\theta }}}_{0}\boldsymbol{\boldsymbol{)}} \\ 
    \boldsymbol{H\boldsymbol{(\boldsymbol{\theta }}}_{0}\boldsymbol{\boldsymbol{)%
    }}^{T} & \boldsymbol{0}_{r\times r}%
    \end{array}%
    \right) \left( 
    \begin{array}{c}
    \sqrt{K}\left( \boldsymbol{\widetilde{\boldsymbol{\theta }}_{\beta }-\theta }%
    _{0}\right) \\ 
    \sqrt{K}\widetilde{\boldsymbol{\lambda }}_{\beta }%
    \end{array}%
    \right) =\left( 
    \begin{array}{c}
    \sqrt{K}\boldsymbol{U}_{\beta }\left( \boldsymbol{\theta }_{0}\right) \\ 
    \boldsymbol{0}_{r}%
    \end{array}%
    \right) +\left( 
    \begin{array}{c}
    o_{p}(1) \\ 
    o_{p}(1)%
    \end{array}%
    \right)  \label{4}
    \end{equation}
    
    The inverse matrix of , 
    \begin{equation*}
    \boldsymbol{R}\left( \boldsymbol{\theta }_{0}\right) =\left( 
    \begin{array}{cc}
    \boldsymbol{J}_{\beta }\left( \boldsymbol{\theta }_{0}\right) & -\boldsymbol{%
    H\boldsymbol{(\boldsymbol{\theta }}}_{0}\boldsymbol{\boldsymbol{)}} \\ 
    \boldsymbol{H\boldsymbol{(\boldsymbol{\theta }}}_{0}\boldsymbol{\boldsymbol{)%
    }}^{T} & \boldsymbol{0}_{r\times r}%
    \end{array}%
    \right)
    \end{equation*}%
    is given by 
    \begin{equation*}
    \boldsymbol{R}\left( \boldsymbol{\theta }_{0}\right) ^{-1}=\left( 
    \begin{array}{cc}
    \boldsymbol{R}_{11}\left( \boldsymbol{\theta }_{0}\right) & \boldsymbol{R}%
    _{21}\left( \boldsymbol{\theta }_{0}\right) \\ 
    \boldsymbol{R}_{12}\left( \boldsymbol{\theta }_{0}\right) & \boldsymbol{R}%
    _{22}\left( \boldsymbol{\theta }_{0}\right)%
    \end{array}%
    \right)
    \end{equation*}%
    with $\boldsymbol{R}_{11}\left( \boldsymbol{\theta }_{0}\right) =\boldsymbol{%
    J}_{\beta }\left( \boldsymbol{\theta }_{0}\right) ^{-1}-\boldsymbol{J}%
    _{\beta }\left( \boldsymbol{\theta }_{0}\right) ^{-1}\boldsymbol{H%
    \boldsymbol{(\boldsymbol{\theta }}}_{0}\boldsymbol{\boldsymbol{)}}\left( 
    \boldsymbol{H\boldsymbol{(\boldsymbol{\theta }}}_{0}\boldsymbol{\boldsymbol{)%
    }}^{T}\boldsymbol{\boldsymbol{J}_{\beta }\left( \boldsymbol{\theta }%
    _{0}\right) ^{-1}H\boldsymbol{(\boldsymbol{\theta }}}_{0}\boldsymbol{%
    \boldsymbol{)}}\right) ^{-1}\boldsymbol{H\boldsymbol{(\boldsymbol{\theta }}}%
    _{0}\boldsymbol{\boldsymbol{)}}^{T}\boldsymbol{J}_{\beta }\left( \boldsymbol{%
    \theta }_{0}\right) ^{-1}.$ If we consider the projection matrix 
    \begin{equation*}
    \boldsymbol{P\boldsymbol{(\boldsymbol{\theta }}}_{0}\boldsymbol{\boldsymbol{%
    )=I}}_{3\times3}-\boldsymbol{J}_{\beta }\left( \boldsymbol{\theta }_{0}\right)
    ^{-1}\boldsymbol{H\boldsymbol{(\boldsymbol{\theta }}}_{0}\boldsymbol{%
    \boldsymbol{)}}\left( \boldsymbol{H\boldsymbol{(\boldsymbol{\theta }}}_{0}%
    \boldsymbol{\boldsymbol{)}}^{T}\boldsymbol{\boldsymbol{J}_{\beta }\left( 
    \boldsymbol{\theta }_{0}\right) ^{-1}H\boldsymbol{(\boldsymbol{\theta }}}_{0}%
    \boldsymbol{\boldsymbol{)}}\right) ^{-1}\boldsymbol{H\boldsymbol{(%
    \boldsymbol{\theta }}}_{0}\boldsymbol{\boldsymbol{)}}^{T}
    \end{equation*}%
    and 
    \begin{equation*}
    \boldsymbol{R}_{11}\left( \boldsymbol{\theta }_{0}\right) =P\boldsymbol{%
    \boldsymbol{(\boldsymbol{\theta }}}_{0}\boldsymbol{\boldsymbol{)}J}_{\beta
    }\left( \boldsymbol{\theta }_{0}\right) ^{-1}
    \end{equation*}%
    Based on (\ref{4}) we have,$\boldsymbol{0}$ 
    \begin{equation*}
    \left( 
    \begin{array}{c}
    \sqrt{K}\left( \boldsymbol{\widetilde{\boldsymbol{\theta }}_{\beta }-\theta }%
    _{0}\right) \\ 
    \sqrt{K}\widetilde{\boldsymbol{\lambda }}_{\beta }%
    \end{array}%
    \right) =\left( 
    \begin{array}{cc}
    \boldsymbol{R}_{11}\left( \boldsymbol{\theta }_{0}\right) & \boldsymbol{R}%
    _{21}\left( \boldsymbol{\theta }_{0}\right) \\ 
    \boldsymbol{R}_{12}\left( \boldsymbol{\theta }_{0}\right) & \boldsymbol{R}%
    _{22}\left( \boldsymbol{\theta }_{0}\right)%
    \end{array}%
    \right) \left( 
    \begin{array}{c}
    \sqrt{K}\boldsymbol{U}_{\beta }\left( \boldsymbol{\theta }_{0}\right) \\ 
    \boldsymbol{0}_{r}%
    \end{array}%
    \right) +\left( 
    \begin{array}{c}
    o_{p}(1) \\ 
    o_{p}(1)%
    \end{array}%
    \right)
    \end{equation*}%
    Therefore, 
    \begin{eqnarray}
    \sqrt{K}\left( \boldsymbol{\widetilde{\boldsymbol{\theta }}_{\beta }-\theta }%
    _{0}\right) &=&\boldsymbol{R}_{11}\left( \boldsymbol{\theta }_{0}\right) 
    \sqrt{K}\boldsymbol{U}_{\beta }\left( \boldsymbol{\theta }_{0}\right)
    +o_{p}(1)  \label{01} \\
    &=&P\boldsymbol{\boldsymbol{(\boldsymbol{\theta }}}_{0}\boldsymbol{%
    \boldsymbol{)}J}_{\beta }\left( \boldsymbol{\theta }_{0}\right) ^{-1}\sqrt{K}%
    \boldsymbol{U}_{\beta }\left( \boldsymbol{\theta }_{0}\right) +o_{p}(1) 
    \notag
    \end{eqnarray}%
    and we have,%
    \begin{equation*}
    \sqrt{K}\left( \boldsymbol{\widetilde{\boldsymbol{\theta }}_{\beta }-\theta }%
    _{0}\right) \underset{K\rightarrow \infty }{\overset{L}{\rightarrow }}N(%
    \boldsymbol{0,\Sigma }\left( \boldsymbol{\theta }_{0}\right)
    \end{equation*}%
    with 
    \begin{equation*}
    \boldsymbol{\Sigma }\left( \boldsymbol{\theta }_{0}\right) =P\boldsymbol{%
    \boldsymbol{(\boldsymbol{\theta }}}_{0}\boldsymbol{\boldsymbol{)}J}_{\beta
    }\left( \boldsymbol{\theta }_{0}\right) ^{-1}\boldsymbol{K}_{\beta }\left( 
    \boldsymbol{\theta }_{0}\right) P\boldsymbol{\boldsymbol{(\boldsymbol{\theta 
    }}}_{0}\boldsymbol{\boldsymbol{)}J}_{\beta }\left( \boldsymbol{\theta }%
    _{0}\right) ^{-1}.
    \end{equation*}

    \subsection*{Proof of Theorem \ref{thm:rao_composite_chisq}}

    By (\ref{4}) we have 
    \begin{equation}
    \sqrt{K}\boldsymbol{U}_{\beta }\left( \widetilde{\boldsymbol{\theta }}%
    _{\beta }\right) =\sqrt{K}\boldsymbol{U}_{\beta }\left( \boldsymbol{\theta }%
    _{0}\right) -\boldsymbol{J}_{\beta }\left( \boldsymbol{\theta }_{0}\right) 
    \sqrt{K}\left( \boldsymbol{\widetilde{\boldsymbol{\theta }}_{\beta }-\theta }%
    _{0}\right) +o_{p}\left( \boldsymbol{1}\right)  \label{T5}
    \end{equation}%
    and by (\ref{01}), 
    \begin{equation}
    \sqrt{K}\left( \boldsymbol{\widetilde{\boldsymbol{\theta }}_{\beta }-\theta }%
    _{0}\right) =P\boldsymbol{\boldsymbol{(\boldsymbol{\theta }}}_{0}\boldsymbol{%
    \boldsymbol{)}J}_{\beta }\left( \boldsymbol{\theta }_{0}\right) ^{-1}\sqrt{K}%
    \boldsymbol{U}_{\beta }\left( \boldsymbol{\theta }_{0}\right) +o_{p}(1).
    \label{T6}
    \end{equation}%
    Now we replace (\ref{T6}) in (\ref{T5}) and we have, 
    \begin{eqnarray*}
    \sqrt{K}\boldsymbol{U}_{\beta }\left( \widetilde{\boldsymbol{\theta }}%
    _{\beta }\right) &=&\sqrt{K}\boldsymbol{U}_{\beta }\left( \boldsymbol{\theta 
    }_{0}\right) -\boldsymbol{J}_{\beta }\left( \boldsymbol{\theta }_{0}\right) P%
    \boldsymbol{\boldsymbol{(\boldsymbol{\theta }}}_{0}\boldsymbol{\boldsymbol{)}%
    J}_{\beta }\left( \boldsymbol{\theta }_{0}\right) ^{-1}\sqrt{K}\boldsymbol{U}%
    ^{\beta }\left( \boldsymbol{\theta }_{0}\right) +o_{p}(1) \\
    &=&\left[ \boldsymbol{I}_{3\times3}-\boldsymbol{J}_{\beta }\left( \boldsymbol{%
    \theta }_{0}\right) P\boldsymbol{\boldsymbol{(\boldsymbol{\theta }}}_{0}%
    \boldsymbol{\boldsymbol{)}J}_{\beta }\left( \boldsymbol{\theta }_{0}\right)
    ^{-1}\right] \sqrt{K}\boldsymbol{U}_{\beta }\left( \boldsymbol{\theta }%
    _{0}\right) +o_{p}(1).
    \end{eqnarray*}%
    Now we are going to study the term%
    \begin{equation*}
    \boldsymbol{A=}\left[ \boldsymbol{I}_{3\times3}-\boldsymbol{J}_{\beta }\left( 
    \boldsymbol{\theta }_{0}\right) P\boldsymbol{\boldsymbol{(\boldsymbol{\theta 
    }}}_{0}\boldsymbol{\boldsymbol{)}J}_{\beta }\left( \boldsymbol{\theta }%
    _{0}\right) ^{-1}\right] .
    \end{equation*}%
    We had 
    \begin{equation*}
    \boldsymbol{P\boldsymbol{(\boldsymbol{\theta }}}_{0}\boldsymbol{\boldsymbol{%
    )=I}}_{3\times3}-\boldsymbol{J}_{\beta }\left( \boldsymbol{\theta }_{0}\right)
    ^{-1}\boldsymbol{H\boldsymbol{(\boldsymbol{\theta }}}_{0}\boldsymbol{%
    \boldsymbol{)}}\left( \boldsymbol{H\boldsymbol{(\boldsymbol{\theta }}}_{0}%
    \boldsymbol{\boldsymbol{)}}^{T}\boldsymbol{\boldsymbol{J}_{\beta }\left( 
    \boldsymbol{\theta }_{0}\right) ^{-1}H\boldsymbol{(\boldsymbol{\theta }}}_{0}%
    \boldsymbol{\boldsymbol{)}}\right) ^{-1}\boldsymbol{H\boldsymbol{(%
    \boldsymbol{\theta }}}_{0}\boldsymbol{\boldsymbol{)}}^{T}
    \end{equation*}%
    therefore, 
    \begin{eqnarray*}
    \boldsymbol{A} &=&\boldsymbol{I}_{3\times3}-\boldsymbol{J}_{\beta }\left( 
    \boldsymbol{\theta }_{0}\right) \left[ \boldsymbol{\boldsymbol{I}}_{3\times3}-%
    \boldsymbol{J}_{\beta }\left( \boldsymbol{\theta }_{0}\right) ^{-1}%
    \boldsymbol{H\boldsymbol{(\boldsymbol{\theta }}}_{0}\boldsymbol{\boldsymbol{)%
    }}\left( \boldsymbol{H\boldsymbol{(\boldsymbol{\theta }}}_{0}\boldsymbol{%
    \boldsymbol{)}}^{T}\boldsymbol{\boldsymbol{J}_{\beta }\left( \boldsymbol{%
    \theta }_{0}\right) ^{-1}H\boldsymbol{(\boldsymbol{\theta }}}_{0}\boldsymbol{%
    \boldsymbol{)}}\right) ^{-1}\boldsymbol{H\boldsymbol{(\boldsymbol{\theta }}}%
    _{0}\boldsymbol{\boldsymbol{)}}^{T}\right] \boldsymbol{J}_{\beta }\left( 
    \boldsymbol{\theta }_{0}\right) ^{-1} \\
    &=&\boldsymbol{I}_{3\times3}-\boldsymbol{I}_{3\times3}+\boldsymbol{J}_{\beta }\left( 
    \boldsymbol{\theta }_{0}\right) \boldsymbol{J}_{\beta }\left( \boldsymbol{%
    \theta }_{0}\right) ^{-1}\boldsymbol{H\boldsymbol{(\boldsymbol{\theta }}}_{0}%
    \boldsymbol{\boldsymbol{)}}\left( \boldsymbol{H\boldsymbol{(\boldsymbol{%
    \theta }}}_{0}\boldsymbol{\boldsymbol{)}}^{T}\boldsymbol{\boldsymbol{J}%
    _{\beta }\left( \boldsymbol{\theta }_{0}\right) ^{-1}H\boldsymbol{(%
    \boldsymbol{\theta }}}_{0}\boldsymbol{\boldsymbol{)}}\right) ^{-1}%
    \boldsymbol{H\boldsymbol{(\boldsymbol{\theta }}}_{0}\boldsymbol{\boldsymbol{)%
    }}^{T}\boldsymbol{J}_{\beta }\left( \boldsymbol{\theta }_{0}\right) ^{-1}.
    \end{eqnarray*}%
    But $\boldsymbol{Q}^{\beta }\left( \boldsymbol{\widetilde{\boldsymbol{\theta 
    }}_{\beta }}\right) =\boldsymbol{J}_{\beta }\left( \boldsymbol{\theta }%
    _{0}\right) ^{-1}\boldsymbol{H\boldsymbol{(\boldsymbol{\theta }}}_{0}%
    \boldsymbol{\boldsymbol{)}}\left( \boldsymbol{H\boldsymbol{(\boldsymbol{%
    \theta }}}_{0}\boldsymbol{\boldsymbol{)}}^{T}\boldsymbol{\boldsymbol{J}%
    _{\beta }\left( \boldsymbol{\theta }_{0}\right) ^{-1}H\boldsymbol{(%
    \boldsymbol{\theta }}}_{0}\boldsymbol{\boldsymbol{)}}\right) ^{-1}.$%

    \noindent Therefore, 
    \begin{equation*}
    A=\boldsymbol{J}_{\beta }\left( \boldsymbol{\theta }_{0}\right) \boldsymbol{Q%
    }^{\beta }\left( \boldsymbol{\widetilde{\boldsymbol{\theta }}_{\beta }}%
    \right) \boldsymbol{H\boldsymbol{(\boldsymbol{\theta }}}_{0}\boldsymbol{%
    \boldsymbol{)}}^{T}\boldsymbol{J}_{\beta }\left( \boldsymbol{\theta }%
    _{0}\right) ^{-1}
    \end{equation*}%
    and 
    \begin{equation*}
    \sqrt{K}\boldsymbol{U}_{\beta }\left( \widetilde{\boldsymbol{\theta }}%
    _{\beta }\right) =\boldsymbol{J}_{\beta }\left( \boldsymbol{\theta }%
    _{0}\right) \boldsymbol{Q}^{\beta }\left( \boldsymbol{\widetilde{\boldsymbol{%
    \theta }}_{\beta }}\right) \boldsymbol{H\boldsymbol{(\boldsymbol{\theta }}}%
    _{0}\boldsymbol{\boldsymbol{)}}^{T}\boldsymbol{J}_{\beta }\left( \boldsymbol{%
    \theta }_{0}\right) ^{-1}\sqrt{K}\boldsymbol{U}_{\beta }\left( \boldsymbol{%
    \theta }_{0}\right) +o_{p}(1),
    \end{equation*}%
    and multiplying by $\boldsymbol{Q}^{\beta }\left( \boldsymbol{\widetilde{%
    \boldsymbol{\theta }}_{\beta }}\right) ^{T}$, we get 
    \begin{equation*}
    \sqrt{K}\boldsymbol{Q}^{\beta }\left( \boldsymbol{\widetilde{\boldsymbol{%
    \theta }}_{\beta }}\right) ^{T}\boldsymbol{U}_{\beta }\left( \widetilde{%
    \boldsymbol{\theta }}_{\beta }\right) =\boldsymbol{Q}^{\beta }\left( 
    \boldsymbol{\widetilde{\boldsymbol{\theta }}_{\beta }}\right) ^{T}%
    \boldsymbol{J}_{\beta }\left( \boldsymbol{\theta }_{0}\right) \boldsymbol{Q}%
    ^{\beta }\left( \boldsymbol{\widetilde{\boldsymbol{\theta }}_{\beta }}%
    \right) \boldsymbol{H\boldsymbol{(\boldsymbol{\theta }}}_{0}\boldsymbol{%
    \boldsymbol{)}}^{T}\boldsymbol{J}_{\beta }\left( \boldsymbol{\theta }%
    _{0}\right) ^{-1}\sqrt{K}\boldsymbol{U}_{\beta }\left( \boldsymbol{\theta }%
    _{0}\right) +o_{p}(1).
    \end{equation*}%
    Now we are going to study the expression%
    \begin{eqnarray*}
    \boldsymbol{B} &=&\boldsymbol{Q}^{\beta }\left( \boldsymbol{\widetilde{%
    \boldsymbol{\theta }}_{\beta }}\right) ^{T}\boldsymbol{J}_{\beta }\left( 
    \boldsymbol{\theta }_{0}\right) \boldsymbol{Q}^{\beta }\left( \boldsymbol{%
    \widetilde{\boldsymbol{\theta }}_{\beta }}\right) \boldsymbol{H\boldsymbol{(%
    \boldsymbol{\theta }}}_{0}\boldsymbol{\boldsymbol{)}}^{T}\boldsymbol{J}%
    _{\beta }\left( \boldsymbol{\theta }_{0}\right) ^{-1} \\
    &=&\left( \boldsymbol{H\boldsymbol{(\boldsymbol{\theta }}}_{0}\boldsymbol{%
    \boldsymbol{)}}^{T}\boldsymbol{\boldsymbol{J}_{\beta }\left( \boldsymbol{%
    \theta }_{0}\right) ^{-1}H\boldsymbol{(\boldsymbol{\theta }}}_{0}\boldsymbol{%
    \boldsymbol{)}}\right) ^{-1}\boldsymbol{H\boldsymbol{(\boldsymbol{\theta }}}%
    _{0}\boldsymbol{\boldsymbol{)}}^{T}\boldsymbol{J}_{\beta }\left( \boldsymbol{%
    \theta }_{0}\right) ^{-1}\boldsymbol{J}_{\beta }\left( \boldsymbol{\theta }%
    _{0}\right) \\
    &&\times\boldsymbol{J}_{\beta }\left( \boldsymbol{\theta }_{0}\right) ^{-1}%
    \boldsymbol{H\boldsymbol{(\boldsymbol{\theta }}}_{0}\boldsymbol{\boldsymbol{)%
    }}\left( \boldsymbol{H\boldsymbol{(\boldsymbol{\theta }}}_{0}\boldsymbol{%
    \boldsymbol{)}}^{T}\boldsymbol{\boldsymbol{J}_{\beta }\left( \boldsymbol{%
    \theta }_{0}\right) ^{-1}H\boldsymbol{(\boldsymbol{\theta }}}_{0}\boldsymbol{%
    \boldsymbol{)}}\right) ^{-1}\boldsymbol{H\boldsymbol{(\boldsymbol{\theta }}}%
    _{0}\boldsymbol{\boldsymbol{)}}^{T}\boldsymbol{J}_{\beta }\left( \boldsymbol{%
    \theta }_{0}\right) ^{-1} \\
    &=&\left( \boldsymbol{H\boldsymbol{(\boldsymbol{\theta }}}_{0}\boldsymbol{%
    \boldsymbol{)}}^{T}\boldsymbol{\boldsymbol{J}_{\beta }\left( \boldsymbol{%
    \theta }_{0}\right) ^{-1}H\boldsymbol{(\boldsymbol{\theta }}}_{0}\boldsymbol{%
    \boldsymbol{)}}\right) ^{-1}\boldsymbol{H\boldsymbol{(\boldsymbol{\theta }}}%
    _{0}\boldsymbol{\boldsymbol{)}}^{T}\boldsymbol{J}_{\beta }\left( \boldsymbol{%
    \theta }_{0}\right) ^{-1} \\
    &=&\boldsymbol{Q}^{\beta }\left( \boldsymbol{\widetilde{\boldsymbol{\theta }}%
    _{\beta }}\right) ^{T}.
    \end{eqnarray*}%
    Therefore, 
    \begin{equation*}
    \sqrt{K}\boldsymbol{Q}^{\beta }\left( \boldsymbol{\widetilde{\boldsymbol{%
    \theta }}_{\beta }}\right) ^{T}\boldsymbol{U}_{\beta }\left( \widetilde{%
    \boldsymbol{\theta }}_{\beta }\right) =\boldsymbol{Q}^{\beta }\left( 
    \boldsymbol{\widetilde{\boldsymbol{\theta }}_{\beta }}\right) ^{T}\sqrt{K}%
    \boldsymbol{U}_{\beta }\left( \boldsymbol{\theta }_{0}\right) +o_{p}(1).
    \end{equation*}%
    But 
    \begin{equation*}
    \sqrt{K}\boldsymbol{U}_{\beta }\left( \boldsymbol{\theta }_{0}\right) 
    \underset{K\rightarrow \infty }{\overset{L}{\rightarrow }}N(\boldsymbol{0}%
    _{3},\boldsymbol{K}_{\beta }\left( \boldsymbol{\theta }_{0}\right) )
    \end{equation*}%
    and 
    \begin{equation*}
    \sqrt{K}\boldsymbol{Q}^{\beta }\left( \boldsymbol{\widetilde{\boldsymbol{%
    \theta }}_{\beta }}\right) ^{T}\boldsymbol{U}_{\beta }\left( \widetilde{%
    \boldsymbol{\theta }}_{\beta }\right) \underset{K\rightarrow \infty }{%
    \overset{L}{\rightarrow }}N\left( \boldsymbol{0,\boldsymbol{Q}^{\beta
    }\left( \boldsymbol{\widetilde{\boldsymbol{\theta }}_{\beta }}\right) ^{T}%
    \boldsymbol{K}_{\beta }\left( \boldsymbol{\theta }_{0}\right) \boldsymbol{Q}%
    ^{\beta }\left( \boldsymbol{\widetilde{\boldsymbol{\theta }}_{\beta }}%
    \right) }\right) \boldsymbol{.}
    \end{equation*}%
    Now we define the random vector
    \begin{equation*}
    \boldsymbol{Y}_{K}=\left[ \boldsymbol{\boldsymbol{Q}^{\beta }\left( 
    \boldsymbol{\widetilde{\boldsymbol{\theta }}_{\beta }}\right) ^{T}%
    \boldsymbol{K}_{\beta }\left( \boldsymbol{\theta }_{0}\right) \boldsymbol{Q}%
    ^{\beta }\left( \boldsymbol{\widetilde{\boldsymbol{\theta }}_{\beta }}%
    \right) }\right] ^{-1/2}\sqrt{K}\boldsymbol{Q}^{\beta }\left( \boldsymbol{%
    \widetilde{\boldsymbol{\theta }}_{\beta }}\right) ^{T}\boldsymbol{U}_{\beta
    }\left( \widetilde{\boldsymbol{\theta }}_{\beta }\right)
    \end{equation*}%
    and we have 
    \begin{equation*}
    \boldsymbol{Y}_{K}\underset{K\rightarrow \infty }{\overset{L}{\rightarrow }}%
    N(\boldsymbol{0}_{3},\boldsymbol{I}_{r \times r})
    \end{equation*}%
    and on the other hand 
    \begin{equation*}
    R_{K}^{\beta }\left( \boldsymbol{\widetilde{\boldsymbol{\theta }}_{\beta }}%
    \right) =\boldsymbol{Y}_{K}^{T}\boldsymbol{Y}_{K}+o_{p}(1).
    \end{equation*}%
    Therefore, we get 
    \begin{equation*}
    R_{K}^{\beta }\left( \boldsymbol{\widetilde{\boldsymbol{\theta }}_{\beta }}%
    \right) \underset{K\rightarrow \infty }{\overset{L}{\rightarrow }}\chi
    _{r}^{2}.
    \end{equation*}

    
    
    
    
    
    

\end{document}